\documentclass[11pt,reqno]{article}
\usepackage{amsmath,amssymb,amsthm}
\usepackage{color,url}
\usepackage{graphicx}
\newtheorem{theorem}{Theorem}[section]
\newtheorem{proposition}[theorem]{Proposition}

\newtheorem{definition}[theorem]{Definition}
\newtheorem{lemma}[theorem]{Lemma}
\newtheorem{example}[theorem]{Example}
\newtheorem{remark}[theorem]{Remark}

\numberwithin{equation}{section}

\usepackage[dvipdfmx]{hyperref}

\textwidth=15.5cm
\textheight=23.0cm
\parindent=15pt
\parskip=4pt

\hoffset=-13mm
\voffset=-23mm

\def\bR{\mathbb{R}}
\def\bC{\mathbb{C}}
\def\bN{\mathbb{N}}

\def\bM{\mathbb{M}}
\def\bT{\mathbb{T}}
\def\M{\mathcal{M}}
\def\N{\mathcal{N}}
\def\cH{\mathcal{H}}
\def\cP{\mathcal{P}}
\def\cE{\mathcal{E}}
\def\B{\mathcal{B}}
\def\cA{\mathcal{A}}
\def\cC{\mathcal{C}}
\def\cI{\mathcal{I}}
\def\cS{\mathcal{S}}
\def\cD{\mathcal{D}}
\def\cT{\mathcal{T}}
\def\fM{\mathfrak{M}}

\def\<{\langle}
\def\>{\rangle}
\def\ffi{\varphi}
\def\eps{\varepsilon}
\def\ffi{\varphi}
\def\tr{\mathrm{tr}}
\def\Tr{\mathrm{Tr}}
\def\HS{\mathrm{HS}}

\def\meas{\mathrm{meas}}
\def\test{\mathrm{test}}
\def\1{\mathbf{1}}

\def\nw{^{*}}

\def\hil{\mathcal{H}}

\def\scli{\underline{sc}}
\def\scls{\overline{sc}}

\newcommand{\norm}[1]{\left\|#1\right\|}
\newcommand{\ds}{\mbox{ }\mbox{ }}

\newcommand{\vertleq}{\rotatebox{90}{$\,\ge$}}

\newcommand{\derleft}[1]{\partial^{-} #1}
\newcommand{\derright}[1]{\partial^{+} #1}

\begin{document}

\centerline{\LARGE Quantum R\'enyi divergences and}
\medskip
\centerline{\LARGE the strong converse exponent of state discrimination}
\medskip
\centerline{\LARGE in operator algebras}

\bigskip
\bigskip
\centerline{\large
Fumio Hiai\footnote{{\it E-mail:} hiai.fumio@gmail.com}
and Mil\'an Mosonyi\footnote{{\it E-mail:} milan.mosonyi@gmail.com}$^{,3}$}

\medskip
\begin{center}
$^1$\,Graduate School of Information Sciences, Tohoku University, \\
Aoba-ku, Sendai 980-8579, Japan
\end{center}

\begin{center}
$^2$\,MTA-BME Lend\"ulet Quantum Information Theory Research Group \\
$^3$\,
Department of Analysis, Institute of Mathematics, Budapest University of Technology and
Economics, M\H uegyetem rkp. 3., H-1111 Budapest, Hungary
\end{center}

\medskip

\begin{abstract}
The sandwiched R\'enyi $\alpha$-divergences of two finite-dimensional quantum states play a 
distinguished role among the many quantum versions of R\'enyi divergences as 
the tight quantifiers of the trade-off between the two error probabilities in the strong converse
domain of state discrimination. In this paper we show the same for the sandwiched R\'enyi
divergences of two normal states on an injective von Neumann algebra, thereby establishing the
operational significance of these quantities.  Moreover, we show that in this setting, again similarly
to the finite-dimensional case, the sandwiched R\'enyi divergences coincide with the regularized
measured R\'enyi divergences, another distinctive feature of the former quantities. Our main tool
is an approximation theorem (martingale convergence) for the sandwiched R\'enyi divergences,
which may be used for the extension of various further results from the finite-dimensional to the
von Neumann algebra setting.

We also initiate the study of the sandwiched R\'enyi divergences of pairs of states on a
$C^*$-algebra,  and show that the above operational interpretation, as well as the equality to the
regularized measured R\'enyi divergence, holds more generally for pairs of states on a nuclear
$C^*$-algebra.

\bigskip\noindent
{\it 2020 Mathematics Subject Classification:}
81P45, 81P18, 94A17, 46L52, 46L53, 81R15, 62H15

\medskip\noindent
{\it Keywords and phrases:}
Quantum R\'enyi divergence, injective von Neumann algebra, nuclear $C^*$-algebra,
martingale convergence, state discrimination, quantum hypothesis testing,
strong converse exponent.

\end{abstract}

{\baselineskip10pt
\tableofcontents
}

\section{Introduction}\label{Sec-1}

R\'enyi's $\alpha$-divergences \cite{Renyi} give a one-parameter family of pseudo-distances on
probability measures, which  play a central role in information theory as quantifiers of the trade-off 
between the relevant operational quantities in many information theoretic tasks; see, e.g., 
\cite{Csiszar}. In quantum information theory, the non-commutativity of density operators
allows infinitely many different extensions of the classical R\'enyi divergences to pairs of 
finite-dimensional quantum states; among others, the standard (or Petz-type) R\'enyi divergences
\cite{Petz}, the sandwiched R\'enyi divergences \cite{MDSFT,WWY}, their common generalization,
the R\'enyi $(\alpha,z)$-divergences \cite{AD,JOPP}, or the maximal (or geometric) R\'enyi
divergences \cite{Matsumoto,PetzRuskai}. Many of these notions have also been extended to pairs
of density operators on infinite-dimensional Hilbert spaces, or even to pairs of normal states on
a von Neumann algebra \cite{BST,Hi3,Hi4,Hi,Je1,Je2,Mo,Pe1}. The study of these quantities has
been motivating extensive research in the fields of matrix analysis and operator algebras.

On the other hand, in quantum information theory the relevant problem is to identify the
quantum R\'enyi divergences with operational significance, i.e., those which appear as
natural quantifiers of the trade-off relations between the quantities describing a problem, like
error probabilities or coding rates. This has been established for the standard R\'enyi divergences
with parameter values $\alpha\in(0,1)$ in the context of binary state discrimination
(hypothesis testing) of finite-dimensional quantum states in a series of works
\cite{Aud,Hayashi,HT,HMO,MH,Nagaoka,NSz}, and was also extended to the von Neumann algebra
setting in \cite{JOPS}. Complementary to this, the sandwiched R\'enyi divergences were shown to
have operational significance for the parameter values $\alpha>1$ in the strong converse problem
of binary state discrimination of finite-dimensional quantum states \cite{HT,MO1,MO4}, and
in the strong converse problem of classical-quantum channel coding \cite{MO2,MO3}.
Apart from the standard and the sandwiched R\'enyi $\alpha$-divergences mentioned above,
no other quantum R\'enyi divergence has been shown to have a direct operational interpretation
so far.

It is also a problem of central importance how much the distinguishability of two states,
as measured by a quantum divergence, changes under quantum operations, in particular, under
quantum measurements. While there is no known explicit formula for the optimal
post-measurement R\'enyi $\alpha$-divergence (called the measured R\'enyi divergence),
it is known to be strictly smaller than the standard R\'enyi $\alpha$-divergence \cite{BFT,HM2}.
Interestingly, if the measured R\'enyi $\alpha$-divergence is evaluated on several copies of 
the states, and normalized by the number of copies, then the asymptotic limit of these quantities,
called the regularized measured R\'enyi $\alpha$-divergence, turns out to coincide with
the sandwiched R\'enyi $\alpha$-divergence for all $\alpha\in[1/2,+\infty)$. This is another
feature distinguishing the sandwiched R\'enyi divergences among the multitude of different
quantum R\'enyi divergences. 

The proof of the hypothesis testing interpretation of the sandwiched R\'enyi divergences in
\cite{MO1} goes via replacing the quantum i.i.d.~problem with a non-i.i.d.~classical hypothesis
testing problem by block-diagonalizing (pinching) large tensor powers of the first state 
by the spectral projections of the same tensor powers of the second state. It can be shown that
the resulting classical problem has the same optimal error asymptotics as the original quantum
problem, by using the pinching inequality \cite{H:pinching} and the fact that the number of distinct
eigenvalues of the $n$th tensor power of a density operator grows only polynomially in $n$,
even though the dimension of the underlying Hilbert space grows exponentially. The same
technique can be used to show the equality of the regularized measured R\'enyi divergences and
the sandwiched R\'enyi divergences \cite{MO1}. While the pinching technique is very simple, it is
also very powerful (see, e.g., \cite{SBT} for further applications), but its applicability is obviously
limited to the finite-dimensional case. Hence, even though the sandwiched R\'enyi divergences
have been defined for pairs of normal states on a von Neumann algebra already some time ago
\cite{BST,Je1,Je2}, it has been an open problem (as proposed in \cite{BST}) whether they have
an operational significance similar to the finite-dimensional case. This has been confirmed
very recently in \cite{Mo} in the simplest case where the von Neumann algebra is the space of
bounded operators on an infinite-dimensional Hilbert space, using a finite-dimensional
approximation technique, in particular, by showing that the R\'enyi divergences of the restrictions
of the states to finite-dimensional subspaces converge to the R\'enyi divergences of the original
states as the subspaces increase to the whole space. 

In this paper we extend the above results about the sandwiched R\'enyi divergences to
considerably more general settings, including injective, i.e., approximately finite-dimensional (AFD)
von Neumann algebras. Our main tool is again finite-dimensional approximation. More generally,
we show in Theorem \ref{T-3.1} that the sandwiched R\'enyi divergences have the martingale
convergence property for every $\alpha\in[1/2,+\infty)\setminus\{1\}$, i.e., if an increasing net of
von Neumann subalgebras generates the whole algebra then the sandwiched R\'enyi
$\alpha$-divergences of the restrictions of two states converge to the sandwiched R\'enyi
$\alpha$-divergence of the original states. The proof is based on variational representations of
the sandwiched R\'enyi divergences \cite{Hi,Je2}, and the martingale convergence of generalized
conditional expectations from \cite{HT1}. Using this result, we show in Theorem \ref{T-3.7} that
the strong converse exponents of discriminating two normal states of an injective algebra are
equal to their Hoeffding anti-divergences, analogously to the finite-dimensional case \cite{MO1}
and the case where the algebra is $\B(\hil)$ \cite{Mo}. Based on this result, in Theorem
\ref{T-3.11} we give a direct operational representation of the sandwiched R\'enyi
$\alpha$-divergences as generalized cutoff rates, following Csisz\'ar's approach \cite{Csiszar}.
Finally, using again the martingale convergence property, in Proposition \ref{P-3.13} we show
that for any $\alpha\in[1/2,+\infty)\setminus\{1\}$, the sandwiched R\'enyi $\alpha$-divergence
of two states on an injective algebra coincides with their regularized measured R\'enyi
$\alpha$-divergence. 

Moreover, we also initiate the study of the sandwiched R\'enyi divergences for states of a 
$C^*$-algebra. In Theorem \ref{T-4.3} we show that for any two states on a $C^*$-algebra, and
any representation of the algebra that admits normal extensions of the states to the 
generated von Neumann algebra, the sandwiched R\'enyi $\alpha$-divergence of the extensions
is independent of the specific representation for any $\alpha\in[1/2,+\infty)$, and hence it 
gives a well-defined notion of sandwiched R\'enyi $\alpha$-divergence of the original states. 
We also show the same statement for the standard $\alpha$-divergence and every 
$\alpha\in[0,+\infty)\setminus\{1\}$. In Proposition \ref{P-4.5} we establish the basic properties
of these extensions: joint lower semi-continuity, monotonicity under composition with unital
positive maps (Schwarz maps in the case of the standard R\'enyi divergences), the inequality
between the sandwiched and the standard  R\'enyi divergences, and the martingale convergence
property for both. In Theorem \ref{T-4.12} we show that the sandwiched R\'enyi divergences on
nuclear $C^*$-algebras have the same operational interpretation as in the case of injective
von Neumann algebras, i.e., we show the equality of the strong converse exponents and the 
Hoeffding anti-divergences, from which the generalized cutoff rate representation also follows
immediately. Finally, in Proposition \ref{P-4.14} we show that the sandwiched R\'enyi divergences
coincide with the regularized measured R\'enyi divergences on nuclear $C^*$-algebras.

The main text is accompanied by eight appendices. 
In Appendices \ref{sec:relmodop}--\ref{sec:injective}, we give brief overviews
of the notions and concepts in von Neumann algebra theory that we use in the paper.
Our general reference for this part is \cite{Hi6}.
In Appendix \ref{sec:finitedim}, we fill a gap
in the proof of the equality of the strong converse exponents and the Hoeffding anti-divergences
in the finite-dimensional case \cite{MO1}, and show the same equality for a slightly modified
definition of the strong converse exponent. Appendix \ref{sec:boundary} contains a simple
observation about the boundary values of convex functions, needed in the proof of
Theorem \ref{T-3.7}. Finally, Appendix \ref{sec:proof} contains the rather technical proof of
Theorem \ref{T-4.3}.

\section{Sandwiched and standard R\'enyi divergences}\label{Sec-2}

In this section we briefly review the notions of sandwiched and standard R\'enyi divergences in 
von Neumann algebras. For a more detailed exposition, see, e.g., \cite{Hi}.
We refer the reader to \cite{Hi6} for the necessary concepts in von Neumann algebra theory, some
of which we will also briefly explain here and in the Appendices, for the convenience of the reader.

The notion of sandwiched R\'enyi divergences with parameter 
$\alpha\in[1/2,+\infty)\setminus\{1\}$, introduced first in \cite{MDSFT,WWY} for pairs of
finite-dimensional density operators, was generalized by Berta--Scholz--Tomamichel \cite{BST}
and Jen\v cov\'a \cite{Je1,Je2} to the general von Neumann algebra setting. The definitions are
different from each other between the three papers \cite{BST,Je1,Je2} but their equivalence was
proved in \cite{Je1,Je2} (also \cite[Sec.~3.3]{Hi}). Here we work with the definition in \cite{Je1}
based on Kosaki's interpolation $L^p$-spaces. 

For a von Neumann algebra $\M$, let $\M_*^+$ denote the set of positive normal functionals on
$\M$. The identity of $\M$ is denoted by $\1$. If $\M=\B(\hil)$ is the von Neumann algebra of
all bounded operators on a  finite-dimensional Hilbert space, then any $\psi\in\M_*^+$ can be
represented by a positive semi-definite operator $\hat\psi\in\B(\hil)_+$ such that
$\psi(x)=\Tr (x\hat\psi)$ for any  $x\in\M$. The \emph{sandwiched R\'enyi $\alpha$-divergence}
of $\rho,\sigma\in\M_*^+$ can then be defined for any $\alpha\in(0,1)\cup(1,+\infty)$ as
\cite{MDSFT,WWY} 
\begin{align}\label{F-2.1}
D_{\alpha}\nw(\rho\|\sigma):=\frac{1}{\alpha-1}\log Q_{\alpha}\nw(\rho\|\sigma),
\end{align}
where 
\begin{align}\label{F-2.2}
Q_{\alpha}\nw(\rho\|\sigma):=
\begin{cases}
\Tr\Bigl(\hat\sigma^{\frac{1-\alpha}{2\alpha}}
\hat\rho\hat\sigma^{\frac{1-\alpha}{2\alpha}}\Bigr)^{\alpha},&
\text{if}\ s(\rho)\le s(\sigma)\ \text{or}\ \alpha\in(0,1),\\
+\infty,&\text{otherwise}.
\end{cases}
\end{align}
Here, $s(\rho)$ is the smallest projection $p\in\M$ such that $\rho(p)=1$ (the
\emph{support projection} of $\rho$), and $s(\sigma)$ is defined similarly.
Real powers of a positive semi-definite operator $A\in\B(\hil)_+$
are defined as $A^x:=\sum_{\lambda>0}\lambda^{x}P_{\lambda}$, 
$x\in\bR$, where 
$P_{\lambda}$ is the spectral projection of $A$ corresponding to 
$\{\lambda\}\subseteq\bR$. The logarithm can be taken in any base that is larger than $1$, and
it is extended to $[0,+\infty]$ by $\log 0:=-\infty$, $\log(+\infty):=+\infty$.

\begin{remark}\label{R-2.1}\rm
It is customary to define $D_\alpha^*$ with a normalization like
\[
D_\alpha^*(\rho\|\sigma)={1\over\alpha-1}\log{Q_\alpha^*(\rho\|\sigma)\over\rho(\1)}
\]
(or with restricting $\rho$ to states) but we use $D_\alpha^*$ without this normalization because
that is the natural choice for the study of the strong converse exponent in  Section \ref{Sec-3.2};
see, e.g., \eqref{F-3.7}.
\end{remark}

In the case of a general von Neumann algebra $\M$, 
there need not be a trace functional on $\M$, and a useful representation 
of states as operators is not at all straightforward. To this end,
one may use Haagerup's construction (see Appendix \ref{Sec-B} for details) to obtain a larger
von Neumann algebra $\N$ with a faithful normal semifinite trace $\tau$ on it, with the
corresponding *-algebra $\widetilde\N$ of $\tau$-measurable operators affiliated with $\N$,
and Banach spaces $L^p(\M)\subseteq\widetilde \N$, with corresponding norm
$\norm{\cdot}_p$, $p\in[1,+\infty]$, such that 
\begin{itemize}
\item
$L^{\infty}(M)$ is identical to the von Neumann algebra $\M$;
\item \label{ordeiso}
there exists an order isomorphic linear bijection $\psi\mapsto h_{\psi}$
from $\M_*$ onto $L^1(\M)$;
\item
for every $\psi\in\M_*^+$ and every $p\in[1,+\infty)$, $h_\psi^{1/p}\in L^p(\M)_+$
($=L^p(\M)\cap\widetilde\N_+$), where $h_\psi^{1/p}$ is defined via standard functional
calculus;
\item
for every $p,q\in[1,+\infty]$ with $1/p+1/q=1$, and every 
$a\in L^p(\M)$, $b\in L^q(\M)$, $ab\in L^1(\M)$.
\end{itemize}
Moreover, the order isomorphism above defines a functional $\tr$ on $L^1(\M)$ as
$\tr\,h_{\psi}:=\psi(\1)$, and for every $\psi\in\M_*$ and $x\in\M$,
\begin{align*}
\psi(x)=\tr (xh_{\psi}),
\end{align*}
in complete analogy with the finite-dimensional case. For any $\sigma\in\M_*^+$, 
\emph{Kosaki's (symmetric) interpolation $L^p$-spaces} $L^p(\M,\sigma)$ with respect to
$\sigma$ for $p\in[1,+\infty]$ with $1/p+1/q=1$ are defined as
\begin{align*}
&L^p(\M,\sigma):=h_\sigma^{1\over2q}L^p(\M)h_\sigma^{1\over2q}\ (\subseteq L^1(\M)), \\
&\|h_\sigma^{1\over2q}ah_\sigma^{1\over2q}\|_{p,\sigma}:=\|a\|_p,\qquad a\in L^p(\M),
\end{align*}
that is, $L^p(\M)\cong L^p(\M,\sigma)$ by the isometry
$a\mapsto h_\sigma^{1\over2q}ah_\sigma^{1\over2q}$
(see Appendix \ref{Sec-C}).

In this general setting, the sandwiched R\'enyi $\alpha$-divergence
$D_{\alpha}\nw(\rho\|\sigma)$
of $\rho,\sigma\in\M_*^+$ is defined by the same formula as in \eqref{F-2.1},
with $Q_{\alpha}\nw(\rho\|\sigma)$ in \eqref{F-2.2} replaced by
\begin{align}\label{F-2.3}
Q_\alpha^*(\rho\|\sigma):=\begin{cases}
\tr\bigl(h_\sigma^{1-\alpha\over2\alpha}h_\rho
h_\sigma^{1-\alpha\over2\alpha}\bigr)^\alpha
=\|h_\sigma^{1-\alpha\over2\alpha}h_\rho^{1/2}\|_{2\alpha}^{2\alpha},&
\text{if}\ \alpha\in[1/2,1),\\
\|h_\rho\|_{\alpha,\sigma}^\alpha, &
\text{if}\ \alpha>1\ \text{and}\ h_\rho\in L^\alpha(\M,\sigma), \\
+\infty, & \text{otherwise};
\end{cases}
\end{align}
according to \cite{Je1} for $\alpha>1$, and \cite[Theorem 3.1]{Je2} and \cite[Theorem 3.11]{Hi} for
$\alpha\in[1/2,1)$.
Note here that the condition $h_\rho\in L^\alpha(\M,\sigma)$ contains,
in particular, $s(\rho)\le s(\sigma)$.

\begin{example}\label{E-2.2}\rm
Consider the simple case where $\M=B(\cH)$ with a finite-dimensional Hilbert space $\cH$.
Then the larger von Neumann algebra $\N$ mentioned above is given as
$\N=B(\cH)\overline\otimes L^\infty(\bR)$ and Haagerup's $L^p$-spaces are given by
\[
L^p(\M)=\begin{cases}
B(\cH)\otimes e^{-t/p}, & p\in[1,+\infty), \\
B(\cH)\otimes\1=B(\cH), & p=+\infty,
\end{cases}
\]
with norms
\[
\begin{cases}
\|X\otimes e^{-t/p}\|_p=\|X\|_p=(\Tr|X|^p)^{1/p}, & p\in[1,+\infty), \\
\|X\otimes\1\|_\infty=\|X\|_\infty\ \,(\mbox{operator norm}), & p=+\infty,
\end{cases}
\]
where $e^{-t/p}$ is a shorthand notation for the multiplication operator on $L^2(\bR)$ with the
function $t\mapsto e^{-t/p}$ on $\bR$. Then $h_{\psi}=\hat\psi\otimes e^{-t}$ for any
$\psi\in\M_*^+$, and a straightforward computation yields that the definitions in
\eqref{F-2.2} and in \eqref{F-2.3} give the same values when $\alpha\in[1/2,1)$.
On the other hand, for any $\sigma\in\M_*^+$ with $e:=s(\sigma)$, Kosaki's interpolation
$L^p$-spaces are given by
\[
L^p(\M,\sigma)=\hat\sigma^{1\over2q}B(\cH)\hat\sigma^{1\over2q}=eB(\cH)e,
\qquad p\in[1,+\infty],\ 1/p+1/q=1,
\]
with norms
\[
\|X\|_{p,\sigma}=\|\hat\sigma^{-{1\over2q}}X\hat\sigma^{-{1\over2q}}\|_p,
\qquad X\in eB(\cH)e.
\]
Then it immediately follows that the definitions in \eqref{F-2.2} and in \eqref{F-2.3}
give the same values when $\alpha>1$ too. In this way, \eqref{F-2.3} does indeed give an
extension of the definition \eqref{F-2.1} from the finite-dimensional to the most
general case; see Appendices \ref{Sec-B} and \ref{Sec-C} for more about Haagerup's and
Kosaki's $L^p$-spaces in the case $\M=B(\cH)$.
\end{example}

When $\rho\in\M_*^+$ is a state, $\alpha\mapsto D_\alpha^*(\rho\|\sigma)$ is monotone
increasing on $[1/2,1)\cup(1,+\infty)$, and
\begin{align}\label{F-2.4}
\lim_{\alpha\nearrow1}D_\alpha^*(\rho\|\sigma)=D(\rho\|\sigma);
\end{align}
if, in addition, $D_\alpha^*(\rho\|\sigma)<+\infty$ for some
$\alpha>1$, then
\begin{align}\label{F-2.5}
\lim_{\alpha\searrow1}D_\alpha^*(\rho\|\sigma)=D(\rho\|\sigma),
\end{align}
where $D(\rho\|\sigma)$ is the \emph{relative entropy} of $\rho$ and $\sigma$
\cite{Ar1,Ar2,Ko3,Umegaki}. Moreover, for any $\rho,\sigma\in\M_*^+$,
\begin{align}\label{F-2.6}
\lim_{\alpha\to+\infty}D_\alpha^*(\rho\|\sigma)=D_{\max}(\rho\|\sigma),
\end{align}
where
\[
D_{\max}(\rho\|\sigma):=\log\min\{\lambda\ge0:\rho\le\lambda\sigma\}
\]
($=+\infty$ if no such $\lambda$ exists), is the \emph{max-relative entropy}
\cite{Da,RennerPhD}. For these properties of $D_\alpha^*$, see \cite{MDSFT,WWY} for the
finite-dimensional case and \cite{BST,Je1,Je2} (also a concise survey in \cite[Sec.~3.3]{Hi}) for
the von Neumann algebra case.

The next variational formulas shown in \cite[Lemma 3.19]{Hi} and \cite[Proposition 3.4]{Je2}
are the von Neumann algebra versions of \cite[Lemma 4]{FL}, which will play a crucial role in the
next section. Here, $\M_+$ is the set of positive operators in $\M$ and $\M_{++}$ is the set
of invertible $x\in\M_+$.

\begin{proposition}[\cite{Hi,Je2}]\label{P-2.3}
For any $\rho,\sigma\in\M_*^+$ the following hold:
\begin{itemize}
\item[\rm(i)] For every $\alpha\in(1,+\infty)$,
\begin{align}\label{F-2.7}
Q_\alpha^*(\rho\|\sigma)=\sup_{x\in\M_+}\Bigl[\alpha\rho(x)
-(\alpha-1)\tr\bigl(h_\sigma^{\alpha-1\over2\alpha}xh_\sigma^{\alpha-1\over2\alpha}
\bigr)^{\alpha\over\alpha-1}\Bigr].
\end{align}
\item[\rm(ii)] For every $\alpha\in[1/2,1)$,
\begin{align}\label{F-2.8}
Q_\alpha^*(\rho\|\sigma)=\inf_{x\in\M_{++}}\Bigl[\alpha\rho(x)
+(1-\alpha)\tr\bigl(h_\sigma^{1-\alpha\over2\alpha}x^{-1}h_\sigma^{1-\alpha\over2\alpha}
\bigr)^{\alpha\over1-\alpha}\Bigr].
\end{align}
\end{itemize}
\end{proposition}
\medskip

A different quantum extension of the classical R\'enyi divergences is given by the
\emph{standard} (or \emph{Petz type}) \emph{R\'enyi divergences} $D_\alpha(\rho\|\sigma)$
defined for every $\rho,\sigma\in\M_*^+$ and any $\alpha\in[0,+\infty)\setminus\{1\}$ in terms of
the \emph{relative modular operator} $\Delta_{\rho,\sigma}$ (see Appendix \ref{Sec-A}),
which is a particular case of \emph{standard $f$-divergences} developed first in \cite{Ko2,Pe1}.
The following brief overview is based on \cite{Hi}. When $0\le\alpha<1$, note that
$h_\sigma^{1/2}$ is in the domain $\cD(\Delta_{\rho,\sigma}^{\alpha/2})$ of
$\Delta_{\rho,\sigma}^{\alpha/2}$, and define
\begin{align}\label{F-2.9}
Q_\alpha(\rho\|\sigma):=\|\Delta_{\rho,\sigma}^{\alpha/2}h_\sigma^{1/2}\|^2.
\end{align}
When $\alpha>1$,
\begin{align}\label{F-2.10}
Q_\alpha(\rho\|\sigma):=\begin{cases}
\|\Delta_{\rho,\sigma}^{\alpha/2}h_\sigma^{1/2}\|^2 & \text{if $s(\rho)\le s(\sigma)$ and $h_\sigma^{1/2}\in\cD(\Delta_{\rho,\sigma}^{\alpha/2})$}, \\
+\infty & \text{otherwise}.\end{cases}
\end{align}
Then $D_\alpha(\rho\|\sigma)$ is defined as
\[
D_\alpha(\rho\|\sigma):={1\over\alpha-1}\log Q_\alpha(\rho\|\sigma).
\]

\begin{example}\label{E-2.4}\rm
Assume that $\M=B(\cH)$ with $\dim\cH<+\infty$. For any $\rho,\sigma\in B(\cH)^+$,
since $\Delta_{\rho,\sigma}=L_\rho R_{\sigma^{-1}}$ (see Appendix \ref{Sec-A}) and hence
$\Delta_{\rho,\sigma}^{\alpha/2}\sigma^{1/2}=\rho^{\alpha/2}\sigma^{1-\alpha\over2}$, where
$\sigma^{1-\alpha\over2}$ is defined with restriction to the support $s(\sigma)\cH$ when
$\alpha>1$. Thus, the expressions of $Q_\alpha(\rho\|\sigma)$ in \eqref{F-2.9} and
\eqref{F-2.10} give the well-known formulas \cite{Petz}
\[
Q_\alpha(\rho\|\sigma)=\begin{cases}
\Tr(\rho^\alpha\sigma^{1-\alpha}), & \text{if $0\le\alpha<1$ or $s(\rho)\le s(\sigma)$}, \\
+\infty, & \text{if $\alpha>1$ and $s(\rho)\not\le s(\sigma)$}.
\end{cases}
\]
\end{example}

Properties of $D_\alpha(\rho\|\sigma)$ in the von Neumann algebra case were summarized in
\cite[Proposition 5.3]{Hi3}, and a handy description of $D_\alpha(\rho\|\sigma)$ in terms of
$h_\rho,h_\sigma$ was given in \cite[Theorem 3.6]{Hi}; in particular, when $\alpha\in[0,1)$,
\[
D_\alpha(\rho\|\sigma)={1\over\alpha-1}\log\tr(h_\rho^\alpha h_\sigma^{1-\alpha}).
\]
(Compare this with \eqref{F-2.3}.) When $\rho$ is a state, 
$\alpha\mapsto D_\alpha(\rho\|\sigma)$ is monotone increasing on $[0,1)\cup(1,+\infty)$,
$\lim_{\alpha\nearrow1}D_\alpha(\rho\|\sigma)=D(\rho\|\sigma)$, and if
$D_\alpha(\rho\|\sigma)<+\infty$ for some $\alpha>1$, then
$\lim_{\alpha\searrow1}D_\alpha(\rho\|\sigma)=D(\rho\|\sigma)$.
The $\alpha=0$ case is the \emph{min-relative entropy} \cite{Da,RennerPhD}
\[
D_{\min}(\rho\|\sigma):=D_0(\rho\|\sigma)=-\log\tr(s(\rho)h_\sigma).
\]
According to \cite[Theorem 1.2]{BST} and 
\cite[Corollary 3.6]{Je1}, the inequality
\[
D_\alpha^*(\rho\|\sigma)\le D_\alpha(\rho\|\sigma)
\]
holds for every $\rho,\sigma\in\M_*^+$ and any $\alpha\in[1/2,+\infty)\setminus\{1\}$, while
equality holds here when $\rho,\sigma$ ``commute'' (see \cite[Remark 3.18\,(2)]{Hi} for the
precise statement). Apart from $D_\alpha$ and $D_\alpha^*$, the two extreme cases
$D_{\min}$ and $D_{\max}$ are also useful in some quantum information problems such as
resource theory; see, e.g., \cite{BuSuTo,SFKMNB,Wa-Wi}.

\section{The von Neumann algebra case}\label{Sec-3}

\subsection{Martingale convergence for sandwiched R\'enyi divergences}\label{Sec-3.1}

Let $\M$ be a von Neumann algebra and $\{\M_i\}_{i\in\cI}$ be an increasing net (on
a directed set $\cI$) of von Neumann subalgebras of $\M$ containing the unit of $\M$,
such that $\M$ is generated by $\bigcup_{i\in\cI}\M_i$, i.e.,
\[
\M=\biggl(\bigcup_{i\in\cI}\M_i\biggr)''.
\]
Let $\rho,\sigma\in\M_*^+$ and $\rho_i:=\rho|_{\M_i}$,
$\sigma_i:=\sigma|_{\M_i}$ for each $i\in\cI$.

The next theorem provides the \emph{martingale convergence} for the sandwiched R\'enyi
divergence $D_\alpha^*$. It will play an essential role repeatedly in our later discussions.

\begin{theorem}\label{T-3.1}
Let $\{\M_i\}_{i\in\cI}$ be as stated above, and assume that $\M$ is $\sigma$-finite. Then for
every $\rho,\sigma\in\M_*^+$ and for every $\alpha\in[1/2,+\infty)\setminus\{1\}$ we have
\begin{align}\label{F-3.1}
D_\alpha^*(\rho\|\sigma)=\lim_i D_\alpha^*(\rho_i\|\sigma_i)\quad\mbox{increasingly}.
\end{align}
\end{theorem}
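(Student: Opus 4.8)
The plan is to establish the two inequalities $D_\alpha^*(\rho\|\sigma)\ge\lim_i D_\alpha^*(\rho_i\|\sigma_i)$ and $D_\alpha^*(\rho\|\sigma)\le\lim_i D_\alpha^*(\rho_i\|\sigma_i)$ separately, and simultaneously to verify that the net $(D_\alpha^*(\rho_i\|\sigma_i))_i$ is monotone increasing. The monotonicity is the easy structural fact: since $\cM_i\subseteq\cM_j$ for $i\le j$ and the restriction $\rho_j\mapsto\rho_i$, $\sigma_j\mapsto\sigma_i$ is exactly composition with the (completely positive, unital) inclusion $\cM_i\hookrightarrow\cM_j$, the data processing (monotonicity) inequality for $D_\alpha^*$ under unital completely positive maps gives $D_\alpha^*(\rho_i\|\sigma_i)\le D_\alpha^*(\rho_j\|\sigma_j)$. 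The same argument with the inclusion $\cM_i\hookrightarrow\cM$ yields $D_\alpha^*(\rho_i\|\sigma_i)\le D_\alpha^*(\rho\|\sigma)$ for every $i$, which already gives the inequality $\lim_i D_\alpha^*(\rho_i\|\sigma_i)\le D_\alpha^*(\rho\|\sigma)$ (the limit exists as a monotone bounded-above net). Thus the whole content of the theorem is the reverse inequality, the lower semicontinuity-type bound $D_\alpha^*(\rho\|\sigma)\le\lim_i D_\alpha^*(\rho_i\|\sigma_i)$.

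For the reverse inequality I would exploit the variational formulas of Proposition \ref{P-2.1}, treating the two parameter ranges separately. Consider first $\alpha\in(1,+\infty)$ and the sup-formula \eqref{F-2.9}. Fix an arbitrary $x\in\cM_+$ achieving a value close to the supremum $Q_\alpha^*(\rho\|\sigma)$. The obstacle is that this optimizer lives in the big algebra $\cM$, not in any $\cM_i$, so it cannot be fed directly into the variational formula for $(\rho_i,\sigma_i)$. The natural device is to push $x$ down into $\cM_i$ by means of the generalized conditional expectation $\cE_{\cM_i,\sigma}:\cM\to\cM_i$ with respect to $\sigma$ (Section \ref{Sec-2.2}), setting $x_i:=\cE_{\cM_i,\sigma}(x)\in(\cM_i)_+$. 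The key point is that $\cE_{\cM_i,\sigma}$ is the Petz $\sigma$-dual of the inclusion, so the linear term transforms well, $\sigma\circ\cE_{\cM_i,\sigma}=\sigma$ on $\cM_i$, while for the nonlinear trace term one uses \eqref{F-2.4} together with the defining property \eqref{F-2.3} to relate $\tr(h_{\sigma_i}^{(\alpha-1)/2\alpha}x_i h_{\sigma_i}^{(\alpha-1)/2\alpha})^{\alpha/(\alpha-1)}$ evaluated in $\cM_i$ to the corresponding quantity for $x$ in $\cM$, controlling the error via a convexity/Jensen estimate for the operator-convex map $t\mapsto t^{\alpha/(\alpha-1)}$. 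This identifies $x_i$ as a near-optimal test element for the $i$-th problem, giving a lower bound on $Q_\alpha^*(\rho_i\|\sigma_i)$ that converges to $\alpha\rho(x)-(\alpha-1)(\cdots)$, and then letting $x$ approach the optimizer yields the claim. The case $\alpha\in[1/2,1)$ is handled symmetrically with the inf-formula \eqref{F-2.10}, now pushing an optimizer $x\in\cM_{++}$ downward and using operator convexity of $t\mapsto t^{\alpha/(1-\alpha)}$ for the analogous error control.

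The real engine behind making ``$x_i$ approximates $x$ well as $i\to\infty$'' precise is the martingale convergence of the generalized conditional expectations $\cE_{\cM_i,\sigma}$ established in \cite{HT1}: because $\bigcup_i\cM_i$ generates $\cM$, one has $\cE_{\cM_i,\sigma}(x)\to x$ in an appropriate (say strong or $L^2(\cM)$) topology for each fixed $x\in\cM$, and correspondingly $h_{\sigma_i}$ converges to $h_\sigma$ in the relevant $L^p$-sense. I would combine this convergence with continuity of the functionals $x\mapsto\rho(x)$ and of the trace-power terms to pass to the limit in the variational expressions. The step I expect to be the main obstacle is precisely this last passage to the limit in the nonlinear term: one must show that $\tr(h_{\sigma_i}^{(\alpha-1)/2\alpha}x_i h_{\sigma_i}^{(\alpha-1)/2\alpha})^{\alpha/(\alpha-1)}\to\tr(h_\sigma^{(\alpha-1)/2\alpha}x h_\sigma^{(\alpha-1)/2\alpha})^{\alpha/(\alpha-1)}$, which requires joint control of the convergence $h_{\sigma_i}\to h_\sigma$, of $x_i\to x$, and of the noncommutative power functional simultaneously, handling the $\sigma$-finite (possibly non-faithful, non-tracial) setting where the Haagerup and Kosaki $L^p$-machinery from Section \ref{Sec-2.1} must be invoked carefully; the $\sigma$-finiteness hypothesis is exactly what guarantees a faithful normal state and hence a well-behaved standard form in which these convergences and the generalized conditional expectations are defined.
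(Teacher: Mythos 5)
Your overall strategy coincides with the paper's: monotonicity and boundedness from data processing, then the reverse inequality via the variational formulas of Proposition \ref{P-2.1}, pushing a near-optimal $x\in\cM$ down to $\cM_i$ by the generalized conditional expectation $\cE_{\cM_i,\sigma}$ and invoking the martingale convergence of \cite{HT1}. However, the crucial quantitative step is missing, and the device you propose for it would fail. What makes the argument work is the inequality (the paper's Lemma \ref{L-3.3}, due to \cite{Je1})
\[
\tr\bigl(h_{\sigma_i}^{1\over2p}\cE_{\cM_i,\sigma}(x)h_{\sigma_i}^{1\over2p}\bigr)^p
\le\tr\bigl(h_\sigma^{1\over2p}xh_\sigma^{1\over2p}\bigr)^p,
\qquad p\in[1,+\infty),\ x\in\cM_+,
\]
which is proved not by a Jensen estimate but by complex interpolation: by \eqref{F-2.4}, the predual of the inclusion $\cM_i\hookrightarrow\cM$ is simultaneously an $L^1$-contraction and an $L^\infty(\cdot,\sigma)$-contraction, hence a contraction between Kosaki $L^p$-spaces by Riesz--Thorin, and \eqref{F-2.1}--\eqref{F-2.2} convert this into the displayed trace inequality. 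Your proposed ``convexity/Jensen estimate for the operator-convex map $t\mapsto t^{\alpha/(\alpha-1)}$'' cannot work as stated: for $1<\alpha<2$ the exponent $\alpha/(\alpha-1)$ exceeds $2$ (likewise $\alpha/(1-\alpha)>2$ for $2/3<\alpha<1$), and powers $t^p$ with $p>2$ are not operator convex; moreover the two sides involve different densities $h_{\sigma_i}$ and $h_\sigma$ living in $L^1(\cM_i)$ and $L^1(\cM)$, so no direct operator Jensen inequality applies. Once the displayed inequality is in hand, the step you single out as the main obstacle --- two-sided convergence of the nonlinear trace term along the net --- never has to be confronted: the inequality has the favourable sign in both regimes (for $\alpha>1$ it increases the bracket in \eqref{F-2.9}; for $\alpha<1$, combined with Choi's inequality $\cE_{\cM_i,\sigma}(x^{-1})\ge\cE_{\cM_i,\sigma}(x)^{-1}$, it decreases the bracket in \eqref{F-2.10}), so after splitting $\rho(x)=\rho(x-\cE_{\cM_i,\sigma}(x))+\rho_i(\cE_{\cM_i,\sigma}(x))$ only the linear error term $\rho(x-\cE_{\cM_i,\sigma}(x))\to0$ requires Lemma \ref{L-3.2}. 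Your appeal to ``$h_{\sigma_i}\to h_\sigma$ in the relevant $L^p$-sense'' is both unnecessary and not well posed.

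A second, smaller gap concerns faithfulness. The generalized conditional expectations and the martingale theorem of \cite{HT1} require $\sigma$ (and its restrictions $\sigma_i$) to be faithful; $\sigma$-finiteness of $\cM$ does not make the given $\sigma$ faithful --- it only guarantees that a faithful perturbation exists. The paper reduces to the faithful case by choosing $\sigma_0\in\cM_*^+$ with $s(\sigma_0)=1-s(\sigma)$, setting $\sigma^{(n)}:=\sigma+n^{-1}\sigma_0$, and combining the lower semi-continuity of $D_\alpha^*$ in the second argument with its antimonotonicity under $\sigma\le\sigma^{(n)}$ to pass \eqref{F-3.2} from $\sigma^{(n)}$ back to $\sigma$. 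This reduction is absent from your proposal and must be added for the argument to cover general $\sigma\in\cM_*^+$.
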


To prove the theorem, we first give two lemmas. First, we state the martingale convergence for
the generalized conditional expectations in \cite[Theorem 3]{HT1} as a lemma, which was given in
\cite{HT1} in a slightly more general setting.

\begin{lemma}[\cite{HT1}]\label{L-3.2}
In the situation stated above, assume that $\sigma$ is faithful and for each $i\in\cI$ let
$\cE_{\M_i,\sigma}:\M\to\M_i$ be the generalized conditional expectation with respect to
$\sigma$ (see Section \ref{Sec-D}). Then for every $x\in M$ we have $\cE_{\M_i,\sigma}(x)\to x$
strongly.
\end{lemma}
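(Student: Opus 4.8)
The plan is to trade the operator convergence we want for a vector convergence in the standard form, prove the latter, and then upgrade back. Throughout I work in the standard form $(\cM,L^2(\cM),J,\cP)$ with $\Omega:=h_\sigma^{1/2}$ the cyclic and separating vector representing the faithful $\sigma$. For each $i$ put $\cH_i:=\overline{\cM_i\Omega}$ and let $e_i$ be the Jones projection onto $\cH_i$; since $\bigcup_i\cM_i$ is $\sigma$-strongly dense in $\cM$ and $\Omega$ is cyclic, Kaplansky density makes $(\bigcup_i\cM_i)\Omega$ dense in $L^2(\cM)$, so $e_i\nearrow1$ strongly. As $\sigma|_{\cM_i}=\sigma_i$ is faithful, $\Omega$ is cyclic and separating for $\cM_i$ on $\cH_i$, so $(\cM_i,\cH_i,J_i,\cP_i)$ is a standard form of $\cM_i$ representing $\sigma_i$, with its own conjugation $J_i$ and Tomita operator $S_i=J_i\Delta_i^{1/2}$.

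First I would extract the $L^2$-action of the generalized conditional expectation from its defining relation \eqref{F-2.3}. Specializing \eqref{F-2.3} to the inclusion $\cM_i\hookrightarrow\cM$ and identifying $L^2(\cM_i)$ with $\cH_i$ (so that $\Omega_0\leftrightarrow\Omega$ and $J_0\leftrightarrow J_i$) gives $\langle J_i\cE_{\cM_i,\sigma}(x)\Omega,y\Omega\rangle=\langle Jx\Omega,y\Omega\rangle$ for all $y\in\cM_i$. Since $J_i\cE_{\cM_i,\sigma}(x)\Omega\in\cH_i$ and the vectors $y\Omega$ ($y\in\cM_i$) are total in $\cH_i$, this identifies the $\cH_i$-component, whence $J_i\cE_{\cM_i,\sigma}(x)\Omega=e_iJx\Omega$, i.e. the clean formula $\cE_{\cM_i,\sigma}(x)\Omega=J_ie_iJx\Omega$. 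In particular $\|\cE_{\cM_i,\sigma}(x)\Omega\|=\|e_iJx\Omega\|\nearrow\|x\Omega\|$ for free.

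The upgrade from vectors to operators is the easy half. Each $\cE_{\cM_i,\sigma}$ is unital and positive, hence contractive, so $\|\cE_{\cM_i,\sigma}(x)\|\le\|x\|$ uniformly in $i$. As $\cE_{\cM_i,\sigma}(x)\in\cM_i\subseteq\cM$ commutes with $\cM'$, for $a'\in\cM'$ we have $\cE_{\cM_i,\sigma}(x)\,a'\Omega=a'\cE_{\cM_i,\sigma}(x)\Omega\to a'x\Omega=x\,a'\Omega$ as soon as $\cE_{\cM_i,\sigma}(x)\Omega\to x\Omega$. Since $\cM'\Omega$ is dense (because $\Omega$ is separating for $\cM$, hence cyclic for $\cM'$) and the operators are uniformly bounded, this yields strong convergence $\cE_{\cM_i,\sigma}(x)\to x$. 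Everything thus reduces to the vector statement $\cE_{\cM_i,\sigma}(x)\Omega\to x\Omega$, i.e., by the formula above, to strong convergence $J_ie_i\to J$.

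Convergence of the modular conjugations is where the real work lies, and I expect it to be the main obstacle. Here the \emph{generalized} nature of $\cE_{\cM_i,\sigma}$ bites: unless $\cM_i$ is globally modular-invariant one has $J_i\ne J|_{\cH_i}$, so $J_ie_i\to J$ is not a formal consequence of $e_i\nearrow1$, and the bimodule/weak identities that would trivialize it for a genuine conditional expectation fail at each finite stage. The route I would take is to note that $S_i$ is exactly the restriction of $S=J\Delta^{1/2}$ to the increasing cores $\cM_i\Omega$, whose union is a core for $S$: again by Kaplansky density every $x\Omega$ is the graph-limit of $x_k\Omega$ with $x_k\in\bigcup_i\cM_i$ bounded and $x_k\to x$ strong-$*$, so that simultaneously $x_k\Omega\to x\Omega$ and $x_k^*\Omega=S(x_k\Omega)\to x^*\Omega=S(x\Omega)$. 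Monotone convergence of the associated closed operators should then force strong-resolvent convergence $\Delta_i\to\Delta$ together with strong convergence of the antiunitary parts $J_ie_i\to J$. The delicate point—the part that genuinely needs the increasing structure rather than bare graph convergence, since polar decompositions are not continuous in general—is precisely the conjugation convergence, and it is the technical heart of the cited martingale theorem \cite{HT1}; I would either invoke it directly or reprove the convergence of $J_ie_i$ by exploiting the monotonicity of the closed quadratic forms attached to the $S_i$.
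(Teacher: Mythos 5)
Your proposal is correct in every step it actually carries out, and it is best understood as a reconstruction of the proof in the cited source: the paper itself gives \emph{no} argument for this lemma, stating it verbatim as \cite[Theorem 3]{HT1}. Your identity $\cE_{\cM_i,\sigma}(x)\Omega=J_ie_iJx\Omega$ is exactly the Accardi--Cecchini description of the $\sigma$-dual of an inclusion, and your derivation of it from \eqref{F-2.3} is sound: $\sigma|_{\cM_i}$ inherits faithfulness, $\Omega$ lies in the natural cone of $(\cM_i,\cH_i)$ and hence is the vector representative of $\sigma_i$, and totality of $\cM_i\Omega$ in $\cH_i$ pins down $J_i\cE_{\cM_i,\sigma}(x)\Omega=e_iJx\Omega$. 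The upgrade from vector to strong operator convergence via the uniform bound $\|\cE_{\cM_i,\sigma}(x)\|\le\|x\|$ (Russo--Dye for unital positive maps) and commutation with the dense set $\cM'\Omega$ is also correct; and since $J\cM\Omega=\cM'\Omega$ is dense and $\|J_ie_i\|\le1$, you rightly observe that the lemma is equivalent to $J_ie_i\to J$ strongly. Your core claim is likewise right: Kaplansky density gives bounded strong-$*$ approximation of any $x\in\cM$ from $\bigcup_i\cM_i$, so $\bigcup_i\cM_i\Omega$ is a core for $S$ and the Tomita operators $S_i$ increase to $S$ in the graph sense (note that for $e_i\nearrow1$ alone, the bicommutant theorem already suffices; Kaplansky is what the core argument genuinely needs). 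The one step you do not supply --- strong convergence of the conjugation parts $J_ie_i\to J$ --- is indeed the technical heart, and you diagnose correctly why it cannot be waved through: polar-type decompositions are not continuous under graph convergence, and the monotone structure must be exploited, e.g.\ via Kato-type monotone convergence of the closed forms $\xi\mapsto\|S_i\xi\|^2$. That step is precisely the content of the auxiliary convergence theorems of \cite{HT1}, and since the lemma is itself attributed to \cite{HT1}, deferring to it is legitimate and matches the paper's own treatment exactly.

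Two minor points if you want the sketch airtight: the operators $\Delta_i$ live on $\cH_i$, so to formulate strong resolvent convergence on all of $L^2(\cM)$ you must carry the projections $e_i$ along (or extend $\Delta_i$ suitably to $\cH_i^\perp$), which is why the correct limit statement involves $J_ie_i$ rather than $J_i$ alone; and the setting here is a net indexed by a directed set rather than a sequence, which is covered by the ``slightly more general setting'' of \cite{HT1} that the paper alludes to, the monotone form-convergence machinery being insensitive to this change.
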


The next lemma is indeed a special case of \cite[Proposition 3.12]{Je1}. The argument is also
found in \cite[Sec.~2.7]{JL}. We supply a sketchy proof for the convenience of the reader.

\begin{lemma}[\cite{Je1}]\label{L-3.3}
Assume that $\sigma\in\M_*^+$ is faithful. Let $\N$ be a von Neumann subalgebra of $\M$
containing the unit of $\M$, and $\sigma_0:=\sigma|_{\N}$. Let $\cE_{\N,\sigma}:\M\to\N$
be the generalized conditional expectation with respect to $\sigma$. Then for every $p\in[1,+\infty)$
and $x\in\M_+$ we have
\[
\tr\bigl(h_{\sigma_0}^{1\over2p}\cE_{\N,\sigma}(x)h_{\sigma_0}^{1\over2p}\bigr)^p
\le\tr\bigl(h_\sigma^{1\over2p}xh_\sigma^{1\over2p}\bigr)^p,
\]
where $h_{\sigma_0}$ is the element of $L^1(\N)_+$ corresponding to $\sigma_0$, and
$h_\sigma\in L^1(\M)_+$ corresponds to $\sigma$.
\end{lemma}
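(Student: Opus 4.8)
The plan is to exhibit both sides of the inequality as $p$-th powers of Kosaki $L^p$-norms, and then to derive the inequality from the contractivity, on the whole interpolation scale, of the predual of the inclusion $\iota:\cN\hookrightarrow\cM$. Write $T:=\iota_*:L^1(\cM)\to L^1(\cN)$ for this predual map, so that $T(h_\psi)=h_{\psi|_\cN}$ for $\psi\in\cM_*$. Since $1\in\cN$ and $\sigma$ is faithful, $\sigma_0=\sigma|_\cN$ is again faithful, so Kosaki's symmetric spaces $L^p(\cM,\sigma)$ and $L^p(\cN,\sigma_0)$ are available; and by \eqref{F-2.4} the generalized conditional expectation $\cE:=\cE_{\cN,\sigma}$ is characterized by $T\bigl(h_\sigma^{1/2}xh_\sigma^{1/2}\bigr)=h_{\sigma_0}^{1/2}\cE(x)h_{\sigma_0}^{1/2}$ for $x\in\cM$.

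Next I would establish the two endpoint contractions for $T$. At $p=1$, the map $T$ is the predual of the unital positive map $\iota$, hence it carries $L^1(\cM,\sigma)=L^1(\cM)$ contractively into $L^1(\cN,\sigma_0)=L^1(\cN)$ (indeed it is trace-preserving on positive elements). At $p=\infty$, under the embeddings $L^\infty(\cM,\sigma)=h_\sigma^{1/2}\cM h_\sigma^{1/2}$ and $L^\infty(\cN,\sigma_0)=h_{\sigma_0}^{1/2}\cN h_{\sigma_0}^{1/2}$ (with $\|h_\sigma^{1/2}xh_\sigma^{1/2}\|_{\infty,\sigma}=\|x\|$), the identity \eqref{F-2.4} shows that $T$ sends $h_\sigma^{1/2}xh_\sigma^{1/2}$ to $h_{\sigma_0}^{1/2}\cE(x)h_{\sigma_0}^{1/2}$, whose $L^\infty(\cN,\sigma_0)$-norm equals $\|\cE(x)\|\le\|x\|$ because $\cE$ is unital and positive. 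Thus $T$ is a contraction at both endpoints. As $h_\sigma^{1/2}\cM h_\sigma^{1/2}\subseteq L^1(\cM)$ and $T$ is globally defined on $L^1(\cM)$, the couples are compatible and $T$ is a morphism of interpolation couples, so the complex interpolation theorem \cite{BL} applied to $L^p(\cM,\sigma)=C_{1/p}(h_\sigma^{1/2}\cM h_\sigma^{1/2},L^1(\cM))$ yields that $T:L^p(\cM,\sigma)\to L^p(\cN,\sigma_0)$ is a contraction for every $p\in[1,+\infty)$.

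Finally I would apply this to the single element $\xi:=h_\sigma^{1/2}xh_\sigma^{1/2}$, $x\in\cM_+$. With $1/p+1/q=1$, writing $\xi=h_\sigma^{1/(2q)}\bigl(h_\sigma^{1/(2p)}xh_\sigma^{1/(2p)}\bigr)h_\sigma^{1/(2q)}$ and invoking \eqref{F-2.1}--\eqref{F-2.2} gives $\|\xi\|_{p,\sigma}^p=\|h_\sigma^{1/(2p)}xh_\sigma^{1/(2p)}\|_p^p=\tr\bigl(h_\sigma^{1/(2p)}xh_\sigma^{1/(2p)}\bigr)^p$, the inner factor being a positive element of $L^p(\cM)$, so that its $L^p$-norm to the $p$ equals the trace of its $p$-th power. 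The same computation over $\cN$, applied to $T\xi=h_{\sigma_0}^{1/2}\cE(x)h_{\sigma_0}^{1/2}$, gives $\|T\xi\|_{p,\sigma_0}^p=\tr\bigl(h_{\sigma_0}^{1/(2p)}\cE(x)h_{\sigma_0}^{1/(2p)}\bigr)^p$. The contraction estimate $\|T\xi\|_{p,\sigma_0}\le\|\xi\|_{p,\sigma}$ is then exactly the claimed inequality.

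The only genuinely delicate step is the $p=\infty$ endpoint: it hinges on the precise identification \eqref{F-2.4} of $T$ on $h_\sigma^{1/2}\cM h_\sigma^{1/2}$ with the generalized conditional expectation, together with the elementary contraction $\|\cE(x)\|\le\|x\|$. Once these are in place, the interpolation step and the final bookkeeping of norms are routine.
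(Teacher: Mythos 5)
Your proposal is correct and follows essentially the same route as the paper: both identify the predual $T$ of the inclusion on $L^\infty(\cM,\sigma)=h_\sigma^{1/2}\cM h_\sigma^{1/2}$ with $x\mapsto h_{\sigma_0}^{1/2}\cE_{\cN,\sigma}(x)h_{\sigma_0}^{1/2}$ via \eqref{F-2.4}, obtain endpoint contractivity (the $L^\infty$ endpoint from unitality and positivity of $\cE_{\cN,\sigma}$, the $L^1$ endpoint being automatic), interpolate by the Riesz--Thorin/complex method to get $\|T\xi\|_{p,\sigma_0}\le\|\xi\|_{p,\sigma}$, and convert the Kosaki norms into Haagerup trace expressions through \eqref{F-2.1}--\eqref{F-2.2}. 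The only cosmetic difference is that the paper phrases the $L^\infty$ endpoint as the observation that $\cE_{\cN,\sigma}$ coincides with the map $\Phi_\sigma$ of Jen\v cov\'a, while you spell out the elementary bound $\|\cE_{\cN,\sigma}(x)\|\le\|x\|$ explicitly.
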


\begin{proof}[Proof (sketch)]
We utilize Kosaki's (symmetric) interpolation $L^p$-space $L^p(\M,\sigma)$ with the norm
$\norm{\cdot}_{p,\sigma}$ for $p\in[1,+\infty]$; see Appendix \ref{Sec-C}. Let
$\Psi=\Phi_*:L^1(\M)\to L^1(\N)$ be the predual map of the injection $\N\hookrightarrow\M$,
i.e., $\Psi(h_\omega)=h_{\omega|_\N}$ for $\omega\in\M_*$, so that $\Psi$ is contractive
with respect to $\norm{\cdot}_1$. In the present situation, the description of $\Phi_\omega^*$ in
\eqref{F-D.3} of Appendix \ref{Sec-D} shows that $\Psi$ restricted to $L^\infty(\M,\sigma)$
is given as
\[
\Psi(h_\sigma^{1/2}xh_\sigma^{1/2})
=h_{\sigma_0}^{1/2}\cE_{\N,\sigma}(x)h_{\sigma_0}^{1/2},\qquad x\in\M,
\]
(so that the map $\cE_{\N,\sigma}$ coincides with $\Phi_\sigma:\M\to\N$ given in \cite{Je1}).
Hence $\Psi$ is contractive from $L^\infty(\M,\sigma)$ to $L^\infty(\N,\sigma_0)$ with respect to
$\norm{\cdot}_{\infty,\sigma}$ (this can be seen more directly by \eqref{F-D.4}). It follows from
the complex interpolation method (the Riesz--Thorin theorem) that $\Psi$ is a contraction from
$L^p(\M,\sigma)$ to $L^p(\N,\sigma_0)$ with respect to $\norm{\cdot}_{p,\sigma}$ for any
$p\in(1,+\infty)$. For
$h_\sigma^{1/2}xh_\sigma^{1/2}\in L^\infty(\M,\sigma)\subseteq L^p(\M,\sigma)$ we have
\[
\|h_{\sigma_0}^{1/2}\cE_{\N,\sigma}(x)h_{\sigma_0}^{1/2}\|_{p,\sigma}
\le\|h_\sigma^{1/2}xh_\sigma^{1/2}\|_{p,\sigma}.
\]
Noting by \eqref{F-C.1} and \eqref{F-C.2} that
\[
\|h_\sigma^{1/2}xh_\sigma^{1/2}\|_{p,\sigma}
=\|h_\sigma^{1\over2p}xh_\sigma^{1\over2p}\|_p
=\Bigl[\tr\bigl(h_\sigma^{1\over 2p}xh_\sigma^{1\over2p}\bigr)^p\Bigr]^{1/p}
\]
and similarly for $\|h_{\sigma_0}^{1/2}\cE_{\N,\sigma}(x)h_{\sigma_0}^{1/2}\|_{p,\sigma}$,
we have
\[
\tr\bigl(h_{\sigma_0}^{1\over2p}\cE_{\N,\sigma}(x)h_{\sigma_0}^{1\over2p}\bigr)^p
\le\tr\bigl(h_\sigma^{1\over2p}xh_\sigma^{1\over2p}\bigr)^p,
\]
as desired.
\end{proof}

\begin{proof}[Proof of Theorem \ref{T-3.1}]
From the monotonicity property of $D_\alpha^*$ proved in \cite{BST,Je1} it follows that
$D_\alpha^*(\rho_i\|\sigma_i)\le D_\alpha^*(\rho\|\sigma)$ and
$i\in\cI\mapsto D_\alpha^*(\rho_i\|\sigma_i)$ is increasing. Hence, to show \eqref{F-3.1}, it
suffices to prove that
\begin{align}\label{F-3.2}
D_\alpha^*(\rho\|\sigma)\le\sup_{i\in\cI}D_\alpha^*(\rho_i\|\sigma_i).
\end{align}
To do this, we may assume that $\sigma$ is faithful. Indeed, assume that \eqref{F-3.2} has
been shown when $\sigma$ is faithful. For general $\sigma\in\M_*^+$, since $\M$ is
$\sigma$-finite, there exists a $\sigma_0\in\M_*^+$ with $s(\sigma_0)=1-s(\sigma)$ and let
$\sigma^{(n)}:=\sigma+n^{-1}\sigma_0$, $\sigma_i^{(n)}:=\sigma^{(n)}|_{\M_i}$. From the lower
semi-continuity and the order relation of $D_\alpha^*$ (see \cite{Je1}, \cite[Theorem 3.16]{Hi})
it follows that
\[
D_\alpha^*(\rho\|\sigma)\le\liminf_{n\to\infty}D_\alpha^*(\rho\|\sigma^{(n)})
\le\liminf_{n\to\infty}\sup_iD_\alpha^*(\rho_i\|\sigma_i^{(n)})
\le\sup_iD_\alpha^*(\rho_i\|\sigma_i),
\]
proving \eqref{F-3.2} for general $\sigma$. Below we assume the faithfulness of $\sigma$ and
divide the proof into two cases $1<\alpha<+\infty$ and $1/2\le\alpha<1$.

{\it Case $1<\alpha<+\infty$}.\enspace
We need to prove that
\begin{align}\label{F-3.3}
Q_\alpha^*(\rho\|\sigma)\le\sup_{i\in\cI}Q_\alpha^*(\rho_i\|\sigma_i).
\end{align}
For every $x\in\M_+$ and $i\in\cI$,  by Lemma \ref{L-3.3} we have
\begin{align*}
&\alpha\rho(x)-(\alpha-1)\tr\bigl(h_\sigma^{\alpha-1\over2\alpha}
xh_\sigma^{\alpha-1\over2\alpha}\bigr)^{\alpha\over\alpha-1} \\
&\quad\le\alpha\rho(x)-(\alpha-1)\tr\bigl(h_{\sigma_i}^{\alpha-1\over2\alpha}
\cE_{\M_i,\sigma}(x)h_{\sigma_i}^{\alpha-1\over2\alpha}\bigr)^{\alpha\over\alpha-1} \\
&\quad=\alpha\rho(x-\cE_{\M_i,\sigma}(x))
+\alpha\rho_i(\cE_{\M_i,\sigma}(x))-(\alpha-1)\tr\bigl(h_{\sigma_i}^{\alpha-1\over2\alpha}
\cE_{\M_i,\sigma}(x)h_{\sigma_i}^{\alpha-1\over2\alpha}\bigr)^{\alpha\over\alpha-1} \\
&\quad\le\alpha\rho(x-\cE_{\M_i,\sigma}(x))+\sup_{j\in\cI}Q_\alpha^*(\rho_j\|\sigma_j),
\end{align*}
where the last inequality is due to the variational formula \eqref{F-2.7}. Lemma \ref{L-3.2} gives
\[
\alpha\rho(x)-(\alpha-1)\tr\bigl(h_\sigma^{\alpha-1\over2\alpha}
xh_\sigma^{\alpha-1\over2\alpha}\bigr)^{\alpha\over\alpha-1}
\le\sup_{j\in\cI}Q_\alpha^*(\rho_j\|\sigma_j),\qquad x\in M_+.
\]
By \eqref{F-2.7} again we have \eqref{F-3.3}.

{\it Case $1/2\le\alpha<1$}.\enspace
We need to prove that
\begin{align}\label{F-3.4}
Q_\alpha^*(\rho\|\sigma)\ge\inf_{i\in\cI}Q_\alpha^*(\rho_i\|\sigma_i).
\end{align}
For every $x\in\M_{++}$ and $i\in\cI$, by Lemma \ref{L-3.3} we have
\begin{align*}
&\alpha\rho(x)+(1-\alpha)\tr\bigl(h_\sigma^{1-\alpha\over2\alpha}
x^{-1}h_\sigma^{1-\alpha\over2\alpha}\bigr)^{\alpha\over1-\alpha} \\
&\quad\ge\alpha\rho(x)+(1-\alpha)\tr\bigl(h_{\sigma_i}^{1-\alpha\over2\alpha}
\cE_{\M_i,\sigma}(x^{-1})h_{\sigma_i}^{1-\alpha\over2\alpha}\bigr)^{\alpha\over1-\alpha} \\
&\quad\ge\alpha\rho(x)+(1-\alpha)\tr\bigl(h_{\sigma_i}^{1-\alpha\over2\alpha}
\cE_{\M_i,\sigma}(x)^{-1}h_{\sigma_i}^{1-\alpha\over2\alpha}\bigr)^{\alpha\over1-\alpha} \\
&\quad=\alpha\rho(x-\cE_{\M_i,\sigma}(x))
+\alpha\rho_i(\cE_{\M_i,\sigma}(x))+(1-\alpha)\tr\bigl(h_{\sigma_i}^{1-\alpha\over2\alpha}
\cE_{\M_i,\sigma}(x)^{-1}h_{\sigma_i}^{1-\alpha\over2\alpha}\bigr)^{\alpha\over1-\alpha} \\
&\quad\ge\alpha\rho(x-\cE_{\M_i,\sigma}(x))+\inf_{j\in\cI}Q_\alpha^*(\rho_j\|\sigma_j),
\end{align*}
where the second inequality above follows from the Jensen inequality
$\cE_{\M_i,\sigma}(x^{-1})\ge\cE_{\M_i,\sigma}(x)^{-1}$ (see \cite[Corollary 2.3]{Ch}), and
the last inequality is due to \eqref{F-2.8}. By Lemma \ref{L-3.2} and \eqref{F-2.8} we have
\eqref{F-3.4}.
\end{proof}

From Theorem \ref{T-3.1} we can easily obtain the following martingale type convergence for
$D_\alpha^*$ under the restriction to reduced subalgebras $e_i\M e_i$ with $e_i\nearrow1$. 
See \cite{Mo} for related results in the case $\M=\B(\hil)$.

\begin{proposition}\label{P-3.4}
Assume that $\M$ is $\sigma$-finite. Let $\{e_i\}_{i\in\cI}$ be an increasing net of projections
in $\M$ such that $e_i\nearrow 1$. Then for every $\rho,\sigma\in\M_*^+$ and every
$\alpha\in[1/2,+\infty)\setminus\{1\}$ we have
\[
D_\alpha^*(\rho\|\sigma)=\lim_iD_\alpha^*(e_i\rho e_i\|e_i\sigma e_i)\quad
\mbox{increasingly},
\]
where $e_i\rho e_i$ is the restriction of $\rho$ to the reduced von Neumann algebra $e_i\M e_i$
and similarly for $e_i\sigma e_i$.
\end{proposition}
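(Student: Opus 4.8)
The plan is to exhibit each reduced algebra $e_i\cM e_i$ as a direct summand of a genuine \emph{unital} von Neumann subalgebra of $\cM$, so that Theorem \ref{T-3.1} applies. Concretely, I would put $\cN_i:=e_i\cM e_i\oplus\bC(1-e_i)$, the algebra of elements $y+\lambda(1-e_i)$ with $y\in e_i\cM e_i$ and $\lambda\in\bC$; since $e_i$ is central in $\cN_i$ this is an honest direct-sum decomposition and $\cN_i$ is a unital von Neumann subalgebra of $\cM$. First I would verify the hypotheses of Theorem \ref{T-3.1} for $\{\cN_i\}$: it is increasing (for $e_i\le e_j$ one has $e_i\cM e_i\subseteq e_j\cM e_j$ and $1-e_i=(e_j-e_i)+(1-e_j)$, whence $\cN_i\subseteq\cN_j$), and it generates $\cM$ because $e_ixe_i\to x$ strongly for every $x$ as $e_i\nearrow1$. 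Under the identification $\cN_i\cong e_i\cM e_i\oplus\bC$ the restrictions decompose as $\rho|_{\cN_i}\cong(e_i\rho e_i)\oplus\rho(1-e_i)$ and $\sigma|_{\cN_i}\cong(e_i\sigma e_i)\oplus\sigma(1-e_i)$.

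The algebraic input is additivity of $Q_\alpha^*$ over direct sums, which is immediate from $L^p(\cN_i)=L^p(e_i\cM e_i)\oplus\bC$ together with $\|a_1\oplus a_2\|_p^p=\|a_1\|_p^p+\|a_2\|_p^p$, and gives
\[
Q_\alpha^*(\rho|_{\cN_i}\|\sigma|_{\cN_i})=Q_\alpha^*(e_i\rho e_i\|e_i\sigma e_i)+\rho(1-e_i)^\alpha\sigma(1-e_i)^{1-\alpha},
\]
the last term being the elementary scalar value of $Q_\alpha^*$ on $\bC$. Theorem \ref{T-3.1} then yields $D_\alpha^*(\rho|_{\cN_i}\|\sigma|_{\cN_i})\to D_\alpha^*(\rho\|\sigma)$, equivalently $Q_\alpha^*(\rho|_{\cN_i}\|\sigma|_{\cN_i})\to Q_\alpha^*(\rho\|\sigma)$ monotonically. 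For $\alpha\in[1/2,1)$ this already finishes the argument: both exponents $\alpha$ and $1-\alpha$ are positive, so the scalar term tends to $0$ and hence $Q_\alpha^*(e_i\rho e_i\|e_i\sigma e_i)\to Q_\alpha^*(\rho\|\sigma)$, giving the claimed limit for $D_\alpha^*$.

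The main obstacle will be the range $\alpha>1$, where $1-\alpha<0$ and $\sigma(1-e_i)^{1-\alpha}$ diverges, so vanishing of the scalar term is no longer automatic (indeed it fails exactly when $Q_\alpha^*(\rho\|\sigma)=+\infty$). Here Theorem \ref{T-3.1} and the additivity identity supply only the easy bound $\limsup_iD_\alpha^*(e_i\rho e_i\|e_i\sigma e_i)\le D_\alpha^*(\rho\|\sigma)$, since $Q_\alpha^*(e_i\rho e_i\|e_i\sigma e_i)\le Q_\alpha^*(\rho|_{\cN_i}\|\sigma|_{\cN_i})$. For the reverse inequality I would avoid estimating the scalar term directly and instead transport the problem to the fixed algebra $\cM$: the functionals $\rho^{(i)}:=\rho(e_i\,\cdot\,e_i)$ and $\sigma^{(i)}:=\sigma(e_i\,\cdot\,e_i)$ are normal, positive and supported in $e_i\cM e_i$, so the reduction of $D_\alpha^*$ to the support corner gives $D_\alpha^*(e_i\rho e_i\|e_i\sigma e_i)=D_\alpha^*(\rho^{(i)}\|\sigma^{(i)})$. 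A Cauchy--Schwarz estimate shows $\|\rho-\rho^{(i)}\|\to0$ and $\|\sigma-\sigma^{(i)}\|\to0$ (using $\rho(1-e_i),\sigma(1-e_i)\to0$), whence the joint lower semicontinuity of $D_\alpha^*$ (\cite{Je1}, \cite[Theorem 3.16]{Hi}) yields $\liminf_iD_\alpha^*(\rho^{(i)}\|\sigma^{(i)})\ge D_\alpha^*(\rho\|\sigma)$. Combining this with the $\limsup$ bound gives the equality.

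Finally, the monotonicity of the net (the word ``increasingly'') is a formal consequence of the data-processing inequality applied to the trace-preserving completion of the compression $e_j\cM e_j\to e_i\cM e_i$, obtained by adjoining the rank-one summand $\bC(e_j-e_i)$: together with the additivity identity this makes $i\mapsto Q_\alpha^*(e_i\rho e_i\|e_i\sigma e_i)$ increase to $Q_\alpha^*(\rho\|\sigma)$, so that $D_\alpha^*(e_i\rho e_i\|e_i\sigma e_i)$ converges monotonically, increasing to its limit for $\alpha>1$. The only genuinely delicate point is the reverse inequality for $\alpha>1$ described above; everything else is bookkeeping once the subalgebras $\cN_i$ and the additivity of $Q_\alpha^*$ are in place.
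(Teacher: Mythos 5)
Your skeleton coincides with the paper's: the unitizations $\cN_i:=e_i\cM e_i\oplus\bC(1-e_i)$, the additivity identity $Q_\alpha^*(\rho|_{\cN_i}\|\sigma|_{\cN_i})=Q_\alpha^*(e_i\rho e_i\|e_i\sigma e_i)+\rho(1-e_i)^\alpha\sigma(1-e_i)^{1-\alpha}$, and the appeal to Theorem \ref{T-3.1} are exactly the paper's setup, and your treatment of $\alpha\in[1/2,1)$ (both exponents positive, so the scalar term vanishes) is what the paper's reduction specializes to on that range. Where you genuinely diverge is at the crux $\alpha>1$: the paper settles its two remaining claims (divergence of $Q_\alpha^*(e_i\rho e_i\|e_i\sigma e_i)$ when $Q_\alpha^*(\rho\|\sigma)=+\infty$, vanishing of the scalar term when it is finite) by adapting the proof of \cite[Theorem 4.5]{Hi3} and omits the details, whereas you pass to the compressed functionals $\rho^{(i)}:=\rho(e_i\,\cdot\,e_i)$, $\sigma^{(i)}:=\sigma(e_i\,\cdot\,e_i)$ on the fixed algebra $\cM$, prove $\|\rho-\rho^{(i)}\|\to0$, $\|\sigma-\sigma^{(i)}\|\to0$ by Cauchy--Schwarz, and invoke joint lower semi-continuity of $D_\alpha^*$ in norm (\cite[Proposition 3.10]{Je1}, \cite[Theorem 3.16]{Hi}) against the easy $\limsup$ bound from monotonicity. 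This is a complete and self-contained alternative that covers the finite and infinite cases simultaneously; the one ingredient you should state explicitly is the corner-reduction identity $D_\alpha^*(e_i\rho e_i\|e_i\sigma e_i)=D_\alpha^*(\rho^{(i)}\|\sigma^{(i)})$, which holds because $s(\rho^{(i)})\vee s(\sigma^{(i)})\le e_i$ and $D_\alpha^*$ is unchanged under restriction to a corner dominating both supports (the same reduction the paper uses at the start of the proof of Theorem \ref{T-3.7}). Your route buys an actual proof where the paper gives a pointer; the paper's route buys explicit control of the error term $\rho(1-e_i)^\alpha\sigma(1-e_i)^{1-\alpha}$.

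The one genuine gap is your closing monotonicity paragraph. Your data-processing mechanism establishes that $i\mapsto Q_\alpha^*(e_i\rho e_i\|e_i\sigma e_i)$ increases only for $\alpha>1$ (this is literally the paper's displayed computation). For $\alpha\in[1/2,1)$ the inequality obtained from the inclusion $e_i\cM e_i\oplus\bC(e_j-e_i)\subseteq e_j\cM e_j$ reverses: it reads $Q_\alpha^*(e_j\rho e_j\|e_j\sigma e_j)\le Q_\alpha^*(e_i\rho e_i\|e_i\sigma e_i)+\rho(e_j-e_i)^\alpha\sigma(e_j-e_i)^{1-\alpha}$, which yields no monotonicity at all, so your blanket assertion that data-processing makes $Q_\alpha^*(e_i\rho e_i\|e_i\sigma e_i)$ increase to $Q_\alpha^*(\rho\|\sigma)$ is unsupported on that range. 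Moreover, even where $Q_\alpha^*$ does increase, for $\alpha<1$ this means $D_\alpha^*$ \emph{decreases}, since $1/(\alpha-1)<0$; and this is what actually happens: take $\cM=\bC^3$, $\rho=\sigma$ the uniform state, and $e_i$ the projection onto the first $i$ coordinates, so that $Q_{1/2}^*(e_i\rho e_i\|e_i\sigma e_i)=i/3$ and $D_{1/2}^*(e_i\rho e_i\|e_i\sigma e_i)=2\log(3/i)$ decreases strictly to $0=D_{1/2}^*(\rho\|\sigma)$. So on $[1/2,1)$ the monotone quantity is $Q_\alpha^*$, and the convergence of $D_\alpha^*$ is downward, not upward; note that the paper's proof, correspondingly, only asserts monotonicity of $Q_\alpha^*$ (with a parenthetical remark that $\alpha<1$ uses its reverse monotonicity). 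You should either restrict the word ``increasingly'' at the level of $D_\alpha^*$ to $\alpha>1$ and formulate the $\alpha<1$ case as monotone convergence of $Q_\alpha^*$, or supply a separate argument for the monotonicity of $Q_\alpha^*(e_i\rho e_i\|e_i\sigma e_i)$ when $\alpha<1$, e.g.\ via the variational formula \eqref{F-2.10}; the data-processing step you invoke does not provide it.
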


\begin{proof}
Let $\M_i:=e_i\M e_i\oplus\bC(1-e_i)$; then $\{\M_i\}_{i\in\cI}$ is an increasing net of  von
Neumann subalgebras of $\M$ containing the unit of $\M$ with $\M=\bigl(\bigcup_i\M_i\bigr)''$.
Note that $\rho_i:=\rho|_{\M_i}=e_i\rho e_i\oplus\rho(1-e_i)$ and similarly for
$\sigma_i:=\sigma|_{\M_i}$. Hence by the definition of $Q_\alpha^*$ or by using the variational
formulas in Proposition \ref{P-2.3}, for any $\alpha\in[1/2,+\infty)\setminus\{1\}$ we have
\[
Q_\alpha^*(\rho_i\|\sigma_i)
=Q_\alpha^*(e_i\rho e_i\|e_i\sigma e_i)+\rho(1-e_i)^\alpha\sigma(1-e_i)^{1-\alpha}.
\]
Below we will give the proof for the case $\alpha>1$ (that for $1/2\le\alpha<1$ is similar by using
the reverse monotonicity of $Q_\alpha^*$). Let $i,j\in\cI$ be such that $i\le j$. Noting that
$e_i\M e_i\oplus\bC(e_j-e_i)\subseteq e_j\M e_j$, by the monotonicity of $Q_\alpha^*$ we have
\begin{align*}
Q_\alpha^*(e_j\rho e_j\|e_j\sigma e_j)
&\ge Q_\alpha^*(e_i\rho e_i\|e_i\sigma e_i)+\rho(e_j-e_i)^\alpha\sigma(e_j-e_i)^{1-\alpha} \\
&\ge Q_\alpha^*(e_i\rho e_i\|e_i\sigma e_i).
\end{align*}
Hence, $i\in\cI\mapsto Q_\alpha^*(e_i\rho e_i\|e_i\sigma e_i)$ is increasing. Since Theorem
\ref{T-3.1} implies that
\[
Q_\alpha^*(\rho\|\sigma)=\lim_i\bigl[Q_\alpha^*(e_i\rho e_i\|e_i\sigma e_i)
+\rho(1-e_i)^\alpha\sigma(1-e_i)^{1-\alpha}\bigr],
\]
the assertion follows if we show the following:
\begin{itemize}
\item when $Q_\alpha^*(\rho\|\sigma)=+\infty$, $\lim_iQ_\alpha^*(e_i\rho e_i\|e_i\sigma e_i)=+\infty$,
\item when $Q_\alpha^*(\rho\|\sigma)<+\infty$, $\lim_i\rho(1-e_i)^\alpha\sigma(1-e_i)^{1-\alpha}=0$.
\end{itemize}
These can indeed be proved in a similar way to the proof of \cite[Theorem 4.5]{Hi3} by taking
$\rho(1-e_i)^\alpha\sigma(1-e_i)^{1-\alpha}$ and $\rho(e_j-e_i)^\alpha\sigma(e_j-e_i)^{1-\alpha}$ in
place of $\rho(1-e_i)f(\rho(1-e_i)/\sigma(1-e_i))$ and $\rho(e_j-e_i)f(\rho(e_j-e_i)/\sigma(e_j-e_i))$,
respectively, there. The details are omitted here.
\end{proof}

\subsection{The strong converse exponent in injective von Neumann algebras}\label{Sec-3.2}

Let $\M$ be a von Neumann algebra and $\rho,\sigma\in\M_*^+$ be non-zero. For each
$n\in\bN$ let $\M^{\overline\otimes n}$ be the $n$-fold von Neumann algebra tensor product of
$\M$, and $\rho_n:=\rho^{\otimes n}$ (resp., $\sigma_n:=\sigma^{\otimes n}$) be the $n$-fold
tensor product of $\rho$ (resp., $\sigma$), which are elements of
$\bigl(\M^{\overline\otimes n}\bigr)_*^+$; see \cite[Chap.~IV]{Ta1}.

In this section we consider the simple hypothesis testing problem for the \emph{null hypothesis}
$H_0:\rho$ versus the \emph{alternative hypothesis} $H_1:\sigma$, and extend the result
\cite{MO1} on the strong converse exponent to the injective von Neumann algebra case.
For a test $T_n\in\M^{\overline\otimes n}$ with $0\le T_n\le1$, $\rho_n(1-T_n)$ and
$\sigma_n(T_n)$ represent the \emph{type I} and the \emph{type II error probabilities},
respectively (hence $\rho_n(T_n)$ is the \emph{type I success probability}) when $\rho,\sigma$
are states. We begin by defining several forms of the strong converse exponents in the following:

\begin{definition}\label{D-3.5}\rm
For each type II error exponent $r\in\bR$ we define the \emph{strong converse exponents} of
simple hypothesis testing for $H_0:\rho$ vs.\ $H_1:\sigma$ as follows:
\begin{align*}
\underline{sc}_r(\rho\|\sigma)&:=\inf_{\{T_n\}}\Bigl\{\liminf_{n\to\infty}-{1\over n}\log\rho_n(T_n):
\liminf_{n\to\infty}-{1\over n}\log\sigma_n(T_n)\ge r\Bigr\}, \\
\overline{sc}_r(\rho\|\sigma)&:=\inf_{\{T_n\}}\Bigl\{\limsup_{n\to\infty}-{1\over n}\log\rho_n(T_n):
\liminf_{n\to\infty}-{1\over n}\log\sigma_n(T_n)\ge r\Bigr\}, \\
sc_r(\rho\|\sigma)&:=\inf_{\{T_n\}}\Bigl\{\lim_{n\to\infty}-{1\over n}\log\rho_n(T_n):
\liminf_{n\to\infty}-{1\over n}\log\sigma_n(T_n)\ge r\Bigr\},
\end{align*}
where the infima  are taken over all test sequences $\{T_n\}$ with $T_n\in\M^{\overline\otimes n}$,
$0\le T_n\le1$ ($n\in\bN$) for which the indicated condition holds (and furthermore the limit exists
for $sc_r(\rho\|\sigma)$). Also, we write $\underline{sc}_r^0(\rho\|\sigma)$,
$\overline{sc}_r^0(\rho\|\sigma)$ and $sc_r^0(\rho\|\sigma)$ for the above infima when the
condition $\liminf_{n\to\infty}-{1\over n}\log\sigma_n(T_n)\ge r$ is replaced with
$\liminf_{n\to\infty}-{1\over n}\log\sigma_n(T_n)>r$.
\end{definition}

For any $r\in\bR$ it is obvious that
\begin{align}\label{F-3.5}
\begin{array}{lllll}
\underline{sc}_r(\rho\|\sigma) &\le & \overline{sc}_r(\rho\|\sigma) &\le & sc_r(\rho\|\sigma), \\
\ds\ds\vertleq & & \ds\ds\vertleq & & \ds\ds\vertleq \\
\underline{sc}_r^0(\rho\|\sigma) & \le & \overline{sc}_r^0(\rho\|\sigma) & \le& sc_r^0(\rho\|\sigma).
\end{array}
\end{align}

The following is the definition of the Hoeffding anti-divergence \cite{MO1,Mo} in the von Neumann
algebra setting.

\begin{definition}\label{D-3.6}\rm
For any $\rho,\sigma\in\M_*^+$ define
\begin{align*}
\psi^*(\rho\|\sigma|\alpha)
&:=\log Q_\alpha^*(\rho\|\sigma)=(\alpha-1)D_\alpha^*(\rho\|\sigma),
\qquad\alpha\in(1,+\infty), \\
\tilde\psi^*(\rho\|\sigma|u)
&:=(1-u)\psi^*(\rho\|\sigma|(1-u)^{-1})=uD_{1\over1-u}^*(\rho\|\sigma),
\qquad u\in(0,1).
\end{align*}
The \emph{Hoeffding anti-divergence} of $\rho$ and $\sigma$ is then defined for each $r\in\bR$ by
\begin{align}\label{F-3.6}
H_r^*(\rho\|\sigma)&:=\sup_{\alpha>1}{\alpha-1\over\alpha}\bigl\{r-D_\alpha^*(\rho\|\sigma)\bigr\}
=\sup_{u\in(0,1)}\bigl\{ur-\tilde\psi^*(\rho\|\sigma|u)\bigr\}.
\end{align}
\end{definition}

The aim of this section is to find whether the strong converse exponents in \eqref{F-3.5} are all equal
to $H_r^*(\rho\|\sigma)$ for given $r$, as in the finite-dimensional case. To do so, we may assume without loss of generality that
$\rho,\sigma\in\M_*^+$ are states. In fact, for any $\lambda,\mu>0$ it is immediate to check that
if $\eps_r(\rho\|\sigma)$ is any strong converse exponent in \eqref{F-3.5}, then
\[
\eps_r(\lambda\rho\|\mu\sigma)=\eps_{r+\log\mu}(\rho\|\sigma)-\log\lambda.
\]
On the other hand, since
$Q_\alpha^*(\lambda\rho\|\mu\sigma)=\lambda^\alpha\mu^{1-\alpha}Q_\alpha^*(\rho\|\sigma)$,
one easily sees that
\begin{align}\label{F-3.7}
H_r^*(\lambda\rho\|\mu\sigma)=H_{r+\log\mu}^*(\rho\|\sigma)-\log\lambda.
\end{align}
Thus, in the rest of the section, we will always assume that $\rho,\sigma\in\M_*^+$ are states.
Under this assumption, when $r<0$, any exponent in \eqref{F-3.5} is clearly equal to $0$ by
taking $T_n=1$ for all $n$, while $H_r^*(\rho\|\sigma)$ is also $0$ whenever
$D_\alpha^*(\rho\|\sigma)<+\infty$ for some $\alpha>1$, otherwise $H_r^*(\rho\|\sigma)=-\infty$
for all $r\in\bR$. Therefore, when $r<0$, the conjecture holds in a trivial way, or otherwise it is
not true. Furthermore, when $s(\rho)\not\le s(\sigma)$, any exponent in \eqref{F-3.5} is $0$ for
all $r\in\bR$ by taking $T_n=1-s(\sigma_n)=1-s(\sigma)^{\otimes n}$, while
$H_r^*(\rho\|\sigma)=-\infty$ for all $r$ by definition. So the conjecture is not true for any $r$ in
this case. Therefore, we may restrict our consideration to the case where $r\ge0$ and
$s(\rho)\le s(\sigma)$.

When $\M$ is finite-dimensional and $s(\rho)\le s(\sigma)$, it was proved in
\cite[Theorem 4.10]{MO1} that $\underline{sc}_r(\rho\|\sigma)=H_r^*(\rho\|\sigma)$ for any
$r\ge0$. (Note that $\underline{sc}_r(\rho\|\sigma)$ here was denoted by $B_e^*(r)$ in \cite{MO1}.)
The result has recently been extended in \cite[Theorem IV.5]{Mo} to the infinite-dimensional
$\B(\cH)$ setting in such a way that
$\underline{sc}_r(\rho\|\sigma)=\overline{sc}_r(\rho\|\sigma)=H_r^*(\rho\|\sigma)$ for any
$r\in\bR$ under the assumption that $D_\alpha^*(\rho\|\sigma)<+\infty$ for some $\alpha>1$.
The main aim of this section is to further extend the result to the injective von Neumann algebra
setting. For the  convenience of the reader we recall the fundamental notions of injectivity
and AFD for von Neumann algebras in Appendix \ref{Sec-E}.

\begin{theorem}\label{T-3.7}
Assume that $\M$ is injective. Let $\rho,\sigma\in\M_*^+$ be states such that
$D_\alpha^*(\rho\|\sigma)<+\infty$ for some $\alpha>1$ (in particular, $s(\rho)\le s(\sigma)$). Then
for every $r\in\bR$ we have
\begin{align}\label{F-3.8}
\underline{sc}_r(\rho\|\sigma)=\overline{sc}_r(\rho\|\sigma)=sc_r(\rho\|\sigma)
=\underline{sc}_r^0(\rho\|\sigma)=\overline{sc}_r^0(\rho\|\sigma)=sc_r^0(\rho\|\sigma)
=H_r^*(\rho\|\sigma).
\end{align}
\end{theorem}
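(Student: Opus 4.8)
The plan is to reduce the injective case to the finite-dimensional case via the martingale convergence established in Theorem~\ref{T-3.1}, combined with a one-sided achievability/converse sandwich. Since $\cM$ is injective, by Connes's theorem (and its extension in \cite{Ell}) there is an increasing net $\{\cM_i\}_{i\in\cI}$ of finite-dimensional $*$-subalgebras with $\cM=(\bigcup_i\cM_i)''$, and each restricted pair $(\rho_i,\sigma_i)$ lives in a finite-dimensional algebra where the full finite-dimensional equality $\underline{sc}_r(\rho_i\|\sigma_i)=\cdots=H_r^*(\rho_i\|\sigma_i)$ holds by \cite[Theorem~4.10]{MO1} (as corrected in Appendix~\ref{sec:finitedim}). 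The chain of inequalities \eqref{F-3.5} already organizes the six strong converse exponents into a grid with $\underline{sc}_r^0$ smallest and $sc_r$ largest, so it suffices to prove the two extreme bounds
\begin{align*}
\underline{sc}_r^0(\rho\|\sigma)\ \ge\ H_r^*(\rho\|\sigma)
\qquad\text{and}\qquad
sc_r(\rho\|\sigma)\ \le\ H_r^*(\rho\|\sigma),
\end{align*}
and everything in between is forced to equal $H_r^*$.

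\textbf{Lower bound (converse direction).} For $\underline{sc}_r^0(\rho\|\sigma)\ge H_r^*(\rho\|\sigma)$ the plan is to run the standard change-of-measure argument directly on $\cM$. Given any test sequence $\{T_n\}$ with $\liminf_n -\tfrac1n\log\sigma_n(T_n)>r$, one writes, for each $\alpha>1$, the basic inequality
\[
\rho_n(T_n)^{1/\alpha}\,\sigma_n(T_n)^{1-1/\alpha}\ \le\ Q_\alpha^*(\rho_n\|\sigma_n)^{1/\alpha}=Q_\alpha^*(\rho\|\sigma)^{n/\alpha},
\]
which is the operator-algebraic form of the data-processing/variational bound and follows from the variational formula \eqref{F-2.9}; the additivity $Q_\alpha^*(\rho^{\otimes n}\|\sigma^{\otimes n})=Q_\alpha^*(\rho\|\sigma)^n$ is standard. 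Taking $-\tfrac1n\log$, passing to $\liminf$, and optimizing over $\alpha>1$ yields exactly $H_r^*(\rho\|\sigma)$ after the Legendre-transform manipulation encoded in \eqref{F-3.6}. This half needs no approximation and is essentially the von~Neumann algebra analogue of the argument in \cite{MO1}.

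\textbf{Upper bound (achievability) via martingale convergence.} The harder direction is $sc_r(\rho\|\sigma)\le H_r^*(\rho\|\sigma)$, and here the finite-dimensional approximation is essential. The idea is: for each $i$, the finite-dimensional result gives tests $\{T_n^{(i)}\}$ in $\cM_i^{\overline\otimes n}\subseteq\cM^{\overline\otimes n}$ achieving type-II exponent $\ge r$ and type-I exponent converging to $H_r^*(\rho_i\|\sigma_i)$. By Theorem~\ref{T-3.1}, $D_\alpha^*(\rho_i\|\sigma_i)\nearrow D_\alpha^*(\rho\|\sigma)$ for every $\alpha$, and one checks from the definition \eqref{F-3.6} that this forces $H_r^*(\rho_i\|\sigma_i)\to H_r^*(\rho\|\sigma)$ (the sup over $\alpha$ of increasing limits, handled with the convexity/boundary lemma of Appendix~\ref{sec:boundary}). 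The subtlety is that tests good for $(\rho_i,\sigma_i)$ must be shown good for $(\rho,\sigma)$: since $\rho_i,\sigma_i$ are the \emph{restrictions}, and the tests lie in the subalgebra, one has $\rho_n(T_n^{(i)})=(\rho_i)_n(T_n^{(i)})$ and likewise for $\sigma$, so the error probabilities literally coincide. Thus for each fixed $i$ one obtains $sc_r(\rho\|\sigma)\le H_r^*(\rho_i\|\sigma_i)$, and letting $i$ increase gives $sc_r(\rho\|\sigma)\le H_r^*(\rho\|\sigma)$.

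\textbf{Main obstacle.} The delicate point I expect is the interchange of limits and the handling of the finiteness hypothesis $D_\alpha^*(\rho\|\sigma)<+\infty$ for some $\alpha>1$. One must ensure $H_r^*(\rho_i\|\sigma_i)\to H_r^*(\rho\|\sigma)$ rather than merely $\liminf$, which requires knowing that the supremum in \eqref{F-3.6} is attained (or nearly so) uniformly; this is precisely where the convexity of $\alpha\mapsto\tilde\psi^*$, the finiteness assumption, and the boundary-value analysis of Appendix~\ref{sec:boundary} enter, guaranteeing the Legendre transform behaves continuously along the net. A secondary technical care is needed when $\sigma$ (or $\rho$) is not faithful: one passes to $s(\sigma)$-reduced data or invokes Proposition~\ref{P-3.4} to reduce to the faithful case, exactly as in the proof of Theorem~\ref{T-3.1}. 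Once these two endpoints and the convergence $H_r^*(\rho_i\|\sigma_i)\to H_r^*(\rho\|\sigma)$ are in place, the six-fold equality \eqref{F-3.8} collapses out of the grid \eqref{F-3.5}.
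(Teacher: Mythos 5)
Your overall architecture (finite-dimensional approximation via injectivity, martingale convergence from Theorem \ref{T-3.1}, the boundary analysis of Appendix \ref{sec:boundary}, Nagaoka's method for the converse) is the paper's, but there is a genuine logical gap at the very first step: you have the grid \eqref{F-3.5} oriented backwards. The superscript-$0$ exponents impose the \emph{stronger} constraint $\liminf_n-\frac1n\log\sigma_n(T_n)>r$ on test sequences, so their infima run over a smaller set and are therefore \emph{larger}: $\underline{sc}_r\le\underline{sc}_r^0$, $\overline{sc}_r\le\overline{sc}_r^0$, $sc_r\le sc_r^0$. Hence the smallest of the six quantities is $\underline{sc}_r(\rho\|\sigma)$ and the largest is $sc_r^0(\rho\|\sigma)$ --- the exact opposite of your claim that $\underline{sc}_r^0$ is smallest and $sc_r$ largest. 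Consequently your two target inequalities, $\underline{sc}_r^0\ge H_r^*$ and $sc_r\le H_r^*$, sit at the wrong corners: they only show that the three non-$0$ exponents are $\le H_r^*(\rho\|\sigma)$ and the three $0$-exponents are $\ge H_r^*(\rho\|\sigma)$, which is perfectly compatible with, say, $\underline{sc}_r<H_r^*<sc_r^0$, so the six-fold equality does \emph{not} collapse out of the grid. The paper instead proves $\underline{sc}_r(\rho\|\sigma)\ge H_r^*(\rho\|\sigma)$ and $sc_r^0(\rho\|\sigma)\le H_r^*(\rho\|\sigma)$; the latter is exactly why Appendix \ref{sec:finitedim} is needed, since $sc_r^0$ was not treated in \cite{MO1} and Proposition \ref{P-A.2} supplies the finite-dimensional input $sc_r^0(\rho_i\|\sigma_i)=H_r^*(\rho_i\|\sigma_i)$ (also filling the $D=D_{\max}$ gap in \cite{MO1}). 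Your argument is repairable along these lines: the change-of-measure bound needs only $\liminf_n-\frac1n\log\sigma_n(T_n)\ge r$, so it actually proves the stronger corner $\underline{sc}_r\ge H_r^*$, and since you already invoke the corrected finite-dimensional equality including $sc_r^0$, the restriction argument gives $sc_r^0(\rho\|\sigma)\le sc_r^0(\rho_i\|\sigma_i)=H_r^*(\rho_i\|\sigma_i)$ --- but as written the reduction is invalid.

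Two further points. First, your displayed converse inequality $\rho_n(T_n)^{1/\alpha}\sigma_n(T_n)^{1-1/\alpha}\le Q_\alpha^*(\rho_n\|\sigma_n)^{1/\alpha}$ is not the data-processing bound: the two-outcome measurement gives $\rho_n(T_n)^\alpha\,\sigma_n(T_n)^{1-\alpha}\le Q_\alpha^*(\rho_n\|\sigma_n)$, i.e., $\rho_n(T_n)\le Q_\alpha^*(\rho\|\sigma)^{n/\alpha}\,\sigma_n(T_n)^{(\alpha-1)/\alpha}$. As you wrote it, both exponents on the left are positive and sum to one, so for states the left side is $\le1\le Q_\alpha^*(\rho_n\|\sigma_n)^{1/\alpha}$ and the inequality is vacuous; running your Legendre manipulation on it produces a bound with the wrong sign on $r$, not $\frac{\alpha-1}{\alpha}\{r-D_\alpha^*(\rho\|\sigma)\}$. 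With the corrected exponents this step is fine and, as you say, needs no injectivity. Second, your assertion that Theorem \ref{T-3.1} ``forces $H_r^*(\rho_i\|\sigma_i)\to H_r^*(\rho\|\sigma)$'' is an $\inf$--$\sup$ interchange that does not follow from pointwise convergence of $\tilde\psi^*(\rho_i\|\sigma_i|\cdot)$ alone; the paper's mechanism is not uniform near-attainment of the supremum but a minimax theorem (\cite[Lemma II.3]{Mo}) applied on the compact interval $[0,1]$ after extending each $\tilde\psi^*(\rho_i\|\sigma_i|\cdot)$ continuously to the endpoints (value $0$ at $u=0$ and $D_{\max}(\rho_i\|\sigma_i)$ at $u=1$), with Lemma \ref{L-B.1} and the finiteness hypothesis identifying the boundary values of the limit. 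You correctly flag the delicacy and the relevant appendix, but the proposal should name this interchange explicitly, as it is the analytic heart of the achievability direction.
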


\begin{proof}
To prove this, we may and do assume that $\M$ is $\sigma$-finite. Indeed, let
$e:=s(\rho)\vee s(\sigma)$. Then $e\M e$ is injective and all the quantities in \eqref{F-3.8} for
any $r\in\bR$ as well as $D_\alpha^*(\rho\|\sigma)$ for any $\alpha\in[1/2,+\infty)\setminus\{1\}$
are left unchanged when $\rho,\sigma$ are replaced with $\rho|_{e\M e},\sigma|_{e\M e}$.

When $r<0$, the quantities in \eqref{F-3.8} are all equal to $0$ as explained in the paragraph
after Definition \ref{D-3.6}. Hence we may assume that $r\ge0$. By \eqref{F-3.5} it suffices to
show the following two inequalities
\begin{align}
\underline{sc}_r(\rho\|\sigma)&\ge H_r^*(\rho\|\sigma), \label{F-3.9}\\
sc_r^0(\rho\|\sigma)&\le H_r^*(\rho\|\sigma). \label{F-3.10}
\end{align}
The inequality \eqref{F-3.9} follows in the same way (using Nagaoka's method \cite{Na_sc}) as
the proof of \cite[Lemma 4.7]{MO1} in the finite-dimensional case, as mentioned in
\cite[Sec.~5]{BST}. The injectivity assumption is unnecessary for this part. As for the inequality
\eqref{F-3.10}, we need to first prove it in the finite-dimensional case, because the exponent
$sc_r^0(\rho\|\sigma)$ was not treated in \cite{MO1}. To make the proof of this part more
understandable, we present it in Appendix \ref{Sec-F} separately.

Now, by the injectivity assumption of $\M$ (see Appendix \ref{Sec-E}), we can choose an
increasing net $\{\M_i\}_{i\in\cI}$ of finite-dimensional *-subalgebras containing the unit of $\M$
such that $\M=\bigl(\bigcup_{i\in\cI}\M_i\bigr)''$. For each $i\in\cI$ let $\rho_i:=\rho|_{\M_i}$
and $\sigma_i:=\sigma|_{\M_i}$. According to \cite[Corollary 3.11]{MO1},
$\alpha\mapsto\psi^*(\rho_i\|\sigma_i|\alpha)$ is  a finite-valued convex function on $(1,+\infty)$,
whence it is also continuous there. Thus, the function
\[
u\,\longmapsto\,\tilde\psi^*(\rho_i\|\sigma_i|u)=(1-u)\psi^*(\rho_i\|\sigma_i|(1-u)^{-1})
\]
is also finite-valued, convex and continuous on $(0,1)$. In particular, it can be extended to a
convex and continuous function  on $[0,1]$ by
\[
\tilde\psi^*(\rho_i\|\sigma_i|0):=\lim_{u\searrow0}\tilde\psi^*(\rho_i\|\sigma_i|u)
=0,\qquad
\tilde\psi^*(\rho_i\|\sigma_i|1):=\lim_{u\nearrow1}\tilde\psi^*(\rho_i\|\sigma_i|u)
=D_{\max}(\rho_i\|\sigma_i),
\]
where the first equality is simple, and see \eqref{F-2.6} for the second one. We hence have
\[
H_r^*(\rho_i\|\sigma_i)=\sup_{u\in(0,1)}\bigl\{ur-\tilde\psi^*(\rho_i\|\sigma_i|u)\bigr\}
=\max_{u\in[0,1]}\bigl\{ur-\tilde\psi^*(\rho_i\|\sigma_i|u)\bigr\}.
\]
Also, from the monotonicity of $D_\alpha^*$, note that
$i\in\cI\mapsto\psi^*(\rho_i\|\sigma_i|\alpha)$ is increasing for any $\alpha>1$, so that
$i\in\cI\mapsto ur-\tilde\psi^*(\rho_i\|\sigma_i|u)$ is decreasing for any $u\in[0,1]$. For every
$r\ge0$, since $sc_r^0(\rho\|\sigma)\le sc_r^0(\rho_i\|\sigma_i)$ for all $i\in\cI$ as immediately
verified, we have
\begin{align}
sc_r^0(\rho\|\sigma)&\le\inf_{i\in\cI}sc_r^0(\rho_i\|\sigma_i)
=\inf_{i\in\cI}H_r^*(\rho_i\|\sigma_i) \nonumber\\
&=\inf_{i\in\cI}\max_{u\in[0,1]}\bigl\{ur-\tilde\psi^*(\rho_i\|\sigma_i|u)\bigr\} \nonumber\\
&=\max_{u\in[0,1]}\inf_{i\in\cI}\bigl\{ur-\tilde\psi^*(\rho_i\|\sigma_i|u)\bigr\} \nonumber\\
&=\max_{u\in[0,1]}\Bigl\{ur-\sup_i\tilde\psi^*(\rho_i\|\sigma_i|u)\Bigr\}, \label{F-3.11}
\end{align}
where the above first equality is due to Proposition \ref{P-F.2} in Appendix \ref{Sec-F} and the
third equality is due to a minimax theorem in \cite[Lemma II.3]{Mo} (also see \cite[Corollary A.2]{MH}).

Furthermore, Theorem \ref{T-3.1} implies that
\begin{align}\label{F-3.12}
\sup_{i\in\cI}\tilde\psi^*(\rho_i\|\sigma_i|u)
=\sup_{i\in\cI}{\alpha-1\over\alpha}\,D_\alpha^*(\rho_i\|\sigma_i)
={\alpha-1\over\alpha}\,D_\alpha^*(\rho\|\sigma)
=\tilde\psi^*(\rho\|\sigma|u)
\end{align}
for every $u\in(0,1)$ and $\alpha=(1-u)^{-1}\in(1,+\infty)$. By assumption,
$\tilde\psi^*(\rho\|\sigma|u)<+\infty$ for some $u\in(0,1)$. Hence by Lemma \ref{L-G.1} in
Appendix \ref{Sec-G} we confirm that
\begin{align}\label{F-3.13}
\sup_{i\in\cI}\tilde\psi^*(\rho_i\|\sigma_i|0)=\lim_{u\searrow0}\tilde\psi^*(\rho\|\sigma|u),\qquad
\sup_{i\in\cI}\tilde\psi^*(\rho_i\|\sigma_i|1)=\lim_{u\nearrow1}\tilde\psi^*(\rho\|\sigma|1).
\end{align}
From \eqref{F-3.11}--\eqref{F-3.13} we conclude that
\begin{align*}
sc_r^0(\rho\|\sigma)
&\le\sup_{u\in(0,1)}\Bigl\{ur-\tilde\psi^*(\rho\|\sigma|u)\Bigr\} \\
&=\sup_{\alpha>1}{\alpha-1\over\alpha}\bigl\{r-D_\alpha^*(\rho\|\sigma)\bigr\}=H_r^*(\rho\|\sigma),
\end{align*}
proving \eqref{F-3.10}.
\end{proof}

\begin{remark}\label{R-3.8}\rm
According to \cite[Example III.39]{Mo} (also \cite[Remark 5.4\,(1)]{Hi3}), there exist commuting
density operators $\rho,\sigma$ on $\cH$ with $\dim\cH=+\infty$ such that
\[
D(\rho\|\sigma)<+\infty,\qquad D_\alpha^*(\rho\|\sigma)=+\infty\quad
\mbox{for all $\alpha\in(1,+\infty)$}.
\]
In this case, $H_r^*(\rho\|\sigma)=-\infty$ for all $r\in\bR$, whereas any exponent in \eqref{F-3.5} is
non-negative. See \cite[Theorem IV.5]{Mo} for a more precise result in this case. Therefore, the
assumption of $D_\alpha^*(\rho\|\sigma)<+\infty$ for some $\alpha>1$ is essential in Theorem
\ref{T-3.7}, as well as in the convergence \eqref{F-2.5}.
\end{remark}

For given states $\rho,\sigma\in\M_*^+$ and any $r\ge0$, the $n$th minimal type I error probability
of Hoeffding type is defined as
\begin{align}\label{F-3.14}
\alpha_{e^{-nr}}^*(\rho_n\|\sigma_n)
:=\min_{0\le T_n\le1}\bigl\{\rho_n(1-T_n):\sigma_n(T_n)\le e^{-nr}\bigr\},
\end{align}
where the minimum is taken over all tests in $\M^{\overline\otimes n}$ with $\sigma_n(T_n)\le e^{-nr}$.
Note that the minimum exists since the set of such tests is weakly compact. The $n$th maximal type
I success probability is then given as
\[
1-\alpha_{e^{-nr}}^*(\rho_n\|\sigma_n)
=\max_{0\le T_n\le1}\bigl\{\rho_n(T_n):\sigma_n(T_n)\le e^{-nr}\bigr\}.
\]
In terms of this we can reformulate the assertion of Theorem \ref{T-3.7} on the strong converse
exponents as follows:

\begin{theorem}\label{T-3.9}
Under the same assumption as in Theorem \ref{T-3.7}, for every $r\ge0$ we have
\[
\lim_{n\to\infty}-{1\over n}\log\bigl\{1-\alpha_{e^{-nr}}^*(\rho_n\|\sigma_n)\bigr\}
=H_r^*(\rho\|\sigma).
\]
\end{theorem}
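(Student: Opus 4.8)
The plan is to show that Theorem \ref{T-3.9} is merely a reformulation of Theorem \ref{T-3.7}, by unwinding the definitions and matching the single-sequence asymptotic quantity $-\frac1n\log\{1-\alpha_{e^{-nr}}^*(\rho_n\|\sigma_n)\}$ to the infimum-over-sequences quantities appearing in \eqref{F-3.5}. The key observation is that for each $n$ the optimal test $T_n$ realizing the minimum in \eqref{F-3.14} satisfies the hard type II constraint $\sigma_n(T_n)\le e^{-nr}$, so that $-\frac1n\log\sigma_n(T_n)\ge r$ for every $n$, and among all such tests it maximizes the type I success probability $\rho_n(T_n)$, equivalently it minimizes the type I error exponent $-\frac1n\log\rho_n(T_n)$ \emph{for that single value of $n$}. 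Thus the sequence $\{T_n\}$ of these pointwise-optimal tests is an admissible competitor in the definition of each strong converse exponent.

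First I would prove the inequality ``$\ge$'', i.e.\ that the limit on the left, provided it exists, is at least $H_r^*(\rho\|\sigma)$. For any admissible sequence $\{T_n\}$ with $\liminf_n -\frac1n\log\sigma_n(T_n)\ge r$, one cannot directly compare with the pointwise optima because the constraint is only asymptotic. The clean route is to pass through the $0$-versions: for any $r'>r$, a sequence satisfying $\liminf_n -\frac1n\log\sigma_n(T_n)> r$ eventually satisfies $\sigma_n(T_n)\le e^{-nr}$, so $\rho_n(T_n)\le 1-\alpha_{e^{-nr}}^*(\rho_n\|\sigma_n)$ for large $n$, whence $-\frac1n\log\rho_n(T_n)\ge -\frac1n\log\{1-\alpha_{e^{-nr}}^*(\rho_n\|\sigma_n)\}$. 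Taking $\liminf_n$ and then the infimum over such sequences gives that the left-hand limit (as a $\liminf$) is a lower bound for $\underline{sc}_r^0(\rho\|\sigma)$; since by Theorem \ref{T-3.7} all six exponents equal $H_r^*(\rho\|\sigma)$, this yields one direction.

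Next I would prove ``$\le$'' by exhibiting the pointwise-optimal tests as a single admissible sequence. Define $T_n$ to attain the minimum in \eqref{F-3.14}; then $\sigma_n(T_n)\le e^{-nr}$ for every $n$, so $\liminf_n -\frac1n\log\sigma_n(T_n)\ge r$, making $\{T_n\}$ admissible in the definition of $sc_r(\rho\|\sigma)$ (and of $\underline{sc}_r$, etc.). Consequently, whenever the limit $\lim_n -\frac1n\log\{1-\alpha_{e^{-nr}}^*(\rho_n\|\sigma_n)\}$ exists, this very sequence witnesses that $sc_r(\rho\|\sigma)\le \lim_n -\frac1n\log\rho_n(T_n)$, which equals the left-hand side. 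Combined with Theorem \ref{T-3.7} this forces the left-hand quantity to lie between $H_r^*(\rho\|\sigma)$ and $sc_r(\rho\|\sigma)=H_r^*(\rho\|\sigma)$, hence to equal it.

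The one genuine subtlety — and the step I expect to require the most care — is the \emph{existence} of the limit on the left, since Theorem \ref{T-3.7} only asserts equality of $\liminf$- and $\limsup$-type exponents over families of sequences, not convergence of the particular sequence $-\frac1n\log\{1-\alpha_{e^{-nr}}^*(\rho_n\|\sigma_n)\}$. This is resolved by sandwiching: the $\liminf$ of this sequence is $\ge H_r^*(\rho\|\sigma)$ by the first paragraph's argument (it dominates $\underline{sc}_r^0$), while its $\limsup$ is $\le sc_r(\rho\|\sigma)=H_r^*(\rho\|\sigma)$ because the pointwise-optimal sequence is admissible and its $\limsup$ of type I error exponents is an upper bound for $\overline{sc}_r(\rho\|\sigma)$. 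Since $\liminf\ge H_r^*\ge\limsup$, the limit exists and equals $H_r^*(\rho\|\sigma)$, completing the proof.
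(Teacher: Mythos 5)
Your raw ingredients are exactly the two used in the paper's proof (the pointwise-optimal tests from \eqref{F-3.14} form an admissible sequence with the hard constraint; any sequence with $\liminf_n-\frac1n\log\sigma_n(T_n)>r$ is eventually feasible in \eqref{F-3.14} and hence dominated by the optimum), but the final assembly wires them together backwards, and as written the sandwich argument fails. Concretely: your first paragraph does not prove what it announces. From $\rho_n(T_n)\le 1-\alpha_{e^{-nr}}^*(\rho_n\|\sigma_n)$ for all large $n$ you get $-\frac1n\log\bigl\{1-\alpha_{e^{-nr}}^*(\rho_n\|\sigma_n)\bigr\}\le-\frac1n\log\rho_n(T_n)$, and taking $\liminf$ and then the infimum over such sequences yields the \emph{upper} bound $\liminf_n-\frac1n\log\bigl\{1-\alpha_{e^{-nr}}^*(\rho_n\|\sigma_n)\bigr\}\le\underline{sc}_r^0(\rho\|\sigma)$ --- not the lower bound ``$\ge H_r^*$'' you claim; so the assertion in your last paragraph that the $\liminf$ ``is $\ge H_r^*$ by the first paragraph's argument (it dominates $\underline{sc}_r^0$)'' is the reverse of what that argument gives. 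Symmetrically, your justification of $\limsup\le sc_r(\rho\|\sigma)$ --- that the optimal sequence's $\limsup$ ``is an upper bound for $\overline{sc}_r$'' --- proves $\overline{sc}_r(\rho\|\sigma)\le\limsup_n(\cdots)$, again the reverse inequality: admissibility of a single competitor can only bound the infimum-type exponents from above by that competitor's exponents, never bound the competitor's exponents from above. (The ``$r'>r$'' in your first paragraph is also vestigial; it is never used.)

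The fix is mechanical and turns your proposal into precisely the paper's proof. The lower bound comes from your second paragraph, not your first: since the optimal sequence is admissible for $\underline{sc}_r$, the infimum defining $\underline{sc}_r$ is at most its $\liminf$, i.e., $\liminf_n-\frac1n\log\bigl\{1-\alpha_{e^{-nr}}^*(\rho_n\|\sigma_n)\bigr\}\ge\underline{sc}_r(\rho\|\sigma)$ (this also removes the circular conditioning on the limit existing in your second paragraph). The upper bound comes from your first paragraph's domination argument applied to $\limsup$ instead of $\liminf$: for every sequence with the strict constraint, $\limsup_n-\frac1n\log\bigl\{1-\alpha_{e^{-nr}}^*(\rho_n\|\sigma_n)\bigr\}\le\limsup_n-\frac1n\log\rho_n(T_n)$, and taking the infimum gives $\limsup_n(\cdots)\le\overline{sc}_r^0(\rho\|\sigma)$. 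Theorem \ref{T-3.7} gives $\underline{sc}_r(\rho\|\sigma)=\overline{sc}_r^0(\rho\|\sigma)=H_r^*(\rho\|\sigma)$, so $\liminf$ and $\limsup$ are sandwiched, the limit exists, and it equals $H_r^*(\rho\|\sigma)$; this rewired version coincides with the paper's argument.
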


\begin{proof}
For each $n$ one can choose a test $T_n$ in $\M^{\overline\otimes n}$ such that
$\sigma_n(T_n)\le e^{-nr}$ and $\rho_n(T_n)=1-\alpha_{e^{-nr}}^*(\rho_n\|\sigma_n)$. Then it follows
that
\[
\liminf_{n\to\infty}-{1\over n}\log\bigl\{1-\alpha_{e^{-nr}}^*(\rho_n\|\sigma_n)\bigr\}
=\liminf_{n\to\infty}-{1\over n}\log\rho_n(T_n)\ge\underline{sc}_r(\rho\|\sigma).
\]
On the other hand, if a sequence of tests $\{T_n\}$ satisfies $\liminf_n-{1\over n}\log\sigma_n(T_n)>r$,
then we have $1-\alpha_{e^{-nr}}^*(\rho_n\|\sigma_n)\ge\rho_n(T_n)$ for all sufficiently large $n$ and
hence
\[
\limsup_{n\to\infty}-{1\over n}\log\bigl\{1-\alpha_{e^{-nr}}^*(\rho_n\|\sigma_n)\bigr\}
\le\limsup_{n\to\infty}-{1\over n}\log\rho_n(T_n).
\]
Therefore,
\[
\limsup_{n\to\infty}-{1\over n}\log\bigl\{1-\alpha_{e^{-nr}}^*(\rho_n\|\sigma_n)\bigr\}
\le\overline{sc}_r^0(\rho\|\sigma).
\]
Since $\underline{sc}_r(\rho\|\sigma)=\overline{sc}_r^0(\rho\|\sigma)=H_r^*(\rho\|\sigma)$ by Theorem
\ref{T-3.7}, we find that
\[
\lim_{n\to\infty}-{1\over n}\log\bigl\{1-\alpha_{e^{-nr}}^*(\rho_n\|\sigma_n)\bigr\}
=H_r^*(\rho\|\sigma),
\]
as asserted.
\end{proof}

Using Theorem \ref{T-3.7} we can give a direct operational interpretation of the sandwiched 
R\'enyi divergences as generalized cutoff rates, following the idea of Csisz\'ar \cite{Csiszar}.

\begin{definition}\label{D-3.10}\rm
Let $\rho,\sigma\in\M_*^+$ be states, and $\kappa\in(0,1)$. The
\emph{generalized $\kappa$-cutoff rate} $C_{\kappa}(\rho\|\sigma)$ is defined to be the infimum
of all $r_0\in\bR$ such that $\scli_r(\rho\|\sigma)\ge \kappa(r-r_0)$ holds for every $r\in\bR$. 
\end{definition}

\begin{theorem}\label{T-3.11}
Assume that $\M$ is injective. Let $\rho,\sigma\in\M_*^+$ be states such that
$D_{\alpha_0}^*(\rho\|\sigma)<+\infty$ for some $\alpha_0>1$. Then 
\begin{align*}
C_{\kappa}(\rho\|\sigma)= D_{\frac{1}{1-\kappa}}\nw(\rho\|\sigma),
\quad\text{or equivalently,}\quad
D_{\alpha}\nw(\rho\|\sigma)=C_{\frac{\alpha-1}{\alpha}}(\rho\|\sigma)
\end{align*}
for every $\alpha\in(1,\alpha_0)$ and corresponding 
$\kappa=(\alpha-1)/\alpha\in(0,\kappa_0)$, where $\kappa_0:=(\alpha_0-1)/\alpha_0$.
\end{theorem}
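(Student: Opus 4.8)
The plan is to combine the identification of the strong converse exponent with the Hoeffding anti-divergence from Theorem \ref{T-3.7} with a Legendre–Fenchel duality argument. By Theorem \ref{T-3.7} we have $\scli_r(\rho\|\sigma)=H_r^*(\rho\|\sigma)$ for every $r\in\bR$ (the finiteness hypothesis $D_{\alpha_0}^*(\rho\|\sigma)<+\infty$ is exactly what Theorem \ref{T-3.7} requires). Substituting this into Definition \ref{D-3.10}, the defining condition $\scli_r(\rho\|\sigma)\ge\kappa(r-r_0)$ for all $r$ becomes $\kappa r_0\ge\kappa r-H_r^*(\rho\|\sigma)$ for all $r$, i.e.\ $\kappa r_0\ge\sup_{r\in\bR}\bigl\{\kappa r-H_r^*(\rho\|\sigma)\bigr\}$. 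Since $\kappa>0$, the set of admissible $r_0$ is a closed half-line whose infimum is attained, giving the clean formula
\[
C_\kappa(\rho\|\sigma)=\frac1\kappa\,\sup_{r\in\bR}\bigl\{\kappa r-H_r^*(\rho\|\sigma)\bigr\}.
\]

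Next I would recognize the right-hand side as a biconjugation. Writing $g(u):=\tilde\psi^*(\rho\|\sigma|u)$ and extending $g$ to $\bR$ by $+\infty$ off $[0,1]$ (with the boundary values recorded in the proof of Theorem \ref{T-3.7}), the second expression for $H_r^*$ in \eqref{F-3.6} says precisely that $H_r^*(\rho\|\sigma)=g^*(r)$, the Legendre conjugate of $g$. Consequently $\sup_r\{\kappa r-H_r^*(\rho\|\sigma)\}=\sup_r\{\kappa r-g^*(r)\}=g^{**}(\kappa)$ is the value of the biconjugate of $g$ at the point $\kappa$.

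The key structural input is that $g$ is convex and lower semi-continuous, so that $g^{**}=g$ on the interior of its effective domain. This is supplied by the martingale convergence identity \eqref{F-3.12}, which expresses $g(u)=\sup_{i\in\cI}\tilde\psi^*(\rho_i\|\sigma_i|u)$ as a supremum of the finite-dimensional Hoeffding functions; each of these is convex on $(0,1)$ by \cite[Corollary 3.11]{MO1}, and a supremum of convex lower semi-continuous functions is again convex and lower semi-continuous. By the monotonicity of $D_\alpha^*$ together with the hypothesis $D_{\alpha_0}^*(\rho\|\sigma)<+\infty$, the function $g$ is finite (hence continuous) on $(0,\kappa_0)$, so every $\kappa\in(0,\kappa_0)$ lies in the interior of $\{g<+\infty\}$ and the Fenchel–Moreau theorem yields $g^{**}(\kappa)=g(\kappa)=\tilde\psi^*(\rho\|\sigma|\kappa)$.

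It then remains to unwind the definitions. For $\kappa\in(0,\kappa_0)$ and $\alpha=(1-\kappa)^{-1}\in(1,\alpha_0)$ one has $(1-\kappa)(\alpha-1)=\kappa$, whence
\[
C_\kappa(\rho\|\sigma)=\frac1\kappa\,\tilde\psi^*(\rho\|\sigma|\kappa)
=\frac{(1-\kappa)(\alpha-1)}{\kappa}\,D_\alpha^*(\rho\|\sigma)
=D_\alpha^*(\rho\|\sigma)=D_{\frac1{1-\kappa}}^*(\rho\|\sigma),
\]
which is the claimed equality. The main obstacle I anticipate is verifying that the biconjugation is exact at the specific point $\kappa$, i.e.\ controlling the boundary behaviour of $g$ at $u=0$ and $u=1$ and ensuring $\kappa$ stays in the open interval $(0,\kappa_0)$ where $g$ is finite and continuous; this is precisely where the finiteness hypothesis and the restriction $\alpha\in(1,\alpha_0)$ enter, and where one must invoke \eqref{F-3.12} (rather than any direct convexity argument in the von Neumann algebra setting) to obtain convexity and lower semi-continuity of $g$.
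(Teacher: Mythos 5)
Your proposal is correct and is essentially the paper's own proof in abstract packaging: both arguments reduce, via Theorem \ref{T-3.7} together with the convexity and lower semi-continuity of $\tilde\psi\nw(\rho\|\sigma|\cdot)$ supplied by the martingale convergence \eqref{F-3.12} and its finiteness on $(0,\kappa_0)$, to the exactness of Legendre--Fenchel duality for $\tilde\psi\nw$ at the interior point $\kappa$. The only difference is presentational: where you invoke Fenchel--Moreau to get $g^{**}(\kappa)=g(\kappa)$, the paper carries out the biconjugation by hand --- the Fenchel--Young inequality $H_r\nw(\rho\|\sigma)\ge \kappa r-\tilde\psi\nw(\rho\|\sigma|\kappa)$ gives one bound on $C_\kappa(\rho\|\sigma)$, and choosing $r\in[\partial^{-}\tilde\psi\nw(\rho\|\sigma|\kappa),\partial^{+}\tilde\psi\nw(\rho\|\sigma|\kappa)]$ (nonempty by finiteness on $(0,\kappa_0)$), where the supremum defining $H_r\nw$ is attained at $u=\kappa$, gives the other --- which is precisely the standard subgradient proof of the biconjugation identity you cite.
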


\begin{proof}
By Theorem \ref{T-3.7} and \eqref{F-3.6},
\begin{align*}
\scli_r(\rho\|\sigma)
&=
H_r\nw(\rho\|\sigma)
=
\sup_{u\in(0,1)}\{u r-\tilde\psi\nw(\rho\|\sigma|u)\} \\
&\ge
\kappa r-\tilde\psi\nw(\rho\|\sigma|\kappa)
=\kappa\Bigl(r-D_{\frac{1}{1-\kappa}}\nw(\rho\|\sigma)\Bigr),
\end{align*}
showing $C_{\kappa}(\rho\|\sigma)\ge D_{\frac{1}{1-\kappa}}\nw(\rho\|\sigma)$.
As we have seen in the proof of Theorem \ref{T-3.7}, 
$\tilde\psi\nw(\rho_i\|\sigma_i|\cdot)$ is convex and continuous on $(0,1)$ for every $i$,
and hence $\tilde\psi\nw(\rho\|\sigma|\cdot)=\sup_i\tilde\psi\nw(\rho_i\|\sigma_i|\cdot)$
is convex and lower semi-continuous on $(0,1)$, where the equality is by Theorem \ref{T-3.1}.
By assumption, and the monotonicity of $D_{\alpha}\nw$ in $\alpha$ stated in Section \ref{Sec-2}
(see \cite[Proposition 3.7]{Je1}), $\tilde\psi\nw(\rho\|\sigma|\cdot)$ is finite-valued on
$(0,\kappa_0)$, and hence it has finite left- and right-derivatives at every
$\kappa\in(0,\kappa_0)$.  Thus, for any such $\kappa$ and
$r\in [\derleft\tilde\psi\nw(\rho\|\sigma|\kappa),\derright\tilde\psi\nw(\rho\|\sigma|\kappa)]$, 
\begin{align*}
\scls_r(\rho\|\sigma)&=
H_r\nw(\rho\|\sigma)=\sup_{u\in(0,1)}\{u r-\tilde\psi\nw(\rho\|\sigma|u)\} \\
&=
\kappa r- \tilde\psi\nw(\rho\|\sigma|\kappa)
=\kappa\Bigl(r-D_{\frac{1}{1-\kappa}}\nw(\rho\|\sigma)\Bigr),
\end{align*}
where the first equality is again due to Theorem \ref{T-3.7}. This shows
$C_{\kappa}(\rho\|\sigma)\le D_{\frac{1}{1-\kappa}}\nw(\rho\|\sigma)$, completing the proof.
\end{proof}

In the rest of this section we consider the (regularized) measured R\'enyi divergences.
Let us first recall these notions in the present setting. A measurement (POVM) in $\M$ is given by
a finite family $\fM=(M_j)_{1\le j\le k}$ of $M_j\in\M_+$ ($j=1,\dots,k$) such that
$\sum_{j=1}^kM_j=1$. For each $\rho\in\M_*^+$ let $\fM(\rho):=(\rho(M_j))_{1\le j\le k}$
denote the post-measurement probability distribution on $\{1,\dots,k\}$.

\begin{definition}\label{D-3.12}\rm
The \emph{measured R\'enyi divergence}
$D_\alpha^\meas(\rho\|\sigma)$ and its \emph{regularized version}
$\overline{D}_\alpha^\meas(\rho\|\sigma)$ for $\alpha\in[1/2,+\infty)\setminus\{1\}$ are defined as
\begin{align*}
D_\alpha^\meas(\rho\|\sigma)&:=\sup\{D_\alpha(\fM(\rho)\|\fM(\sigma)):
\mbox{$\fM$ a measurement in $M$\}}, \\
\overline{D}_\alpha^\meas(\rho\|\sigma)
&:=\sup_{n\in\bN}{1\over n}D_\alpha^\meas(\rho^{\otimes n}\|\sigma^{\otimes n})
=\lim_{n\to\infty}{1\over n}D_\alpha^\meas(\rho^{\otimes n}\|\sigma^{\otimes n}).
\end{align*}
The last equality above holds since
$n\mapsto D_\alpha^\meas(\rho^{\otimes n}\|\sigma^{\otimes n})$ is superadditive, as immediately
seen from the definition of $D_\alpha^\meas$.
\end{definition}

One might also consider more general notions of measurements in the definition, but that does not
change the value of the (regularized) measured R\'enyi divergence, as was shown in 
\cite[Proposition 5.2]{Hi}. In the opposite direction, one might restrict measurements $\fM$ to 
two-valued ones (tests), which leads to the notion of the \emph{test-measured R\'enyi divergence}
$D_\alpha^\test(\rho\|\sigma)$ and its regularized version $\overline D_\alpha^\test(\rho\|\sigma)$,
defined as
\begin{align*}
D_\alpha^\test(\rho\|\sigma)&:=\sup_{T\in\M,\,0\le T\le1}
D_\alpha((\rho(T),\rho(1-T))\|(\sigma(T),\sigma(1-T))), \\
\overline{D}_\alpha^\test(\rho\|\sigma)
&:=\sup_{n\in\bN}{1\over n}D_\alpha^\test(\rho^{\otimes n}\|\sigma^{\otimes n}).
\end{align*}
It is obvious that
\begin{align}\label{F-3.15}
\begin{array}{lllll}
D_\alpha^\test(\rho\|\sigma) &\le & \overline{D}_\alpha^\test(\rho\|\sigma) \\
\ds\ds\vertleq & & \ds\ds\vertleq \\
D_\alpha^\meas(\rho\|\sigma) & \le & \overline{D}_\alpha^\meas(\rho\|\sigma)
\end{array}
\end{align}
for all $\alpha\in[1/2,+\infty)\setminus\{1\}$.

Let $\M$ be injective, and $\{\M_i\}_{i\in\cI}$ be an increasing net of finite-dimensional
*-subalgebras of $\M_0:=e\M e$ with $\M_0=\bigl(\bigcup_i\M_i\bigr)''$, where
$e:=s(\rho)\vee s(\sigma)$ (see the proof of Theorem \ref{T-3.7}). 
Note that for each $n\in\bN$, $\M_0^{\overline\otimes n}=\bigl(\bigcup_i\M_i^{\otimes n}\bigr)''$,
$(\rho|_{\M_0})^{\otimes n}=\rho^{\otimes n}|_{\M_0^{\overline\otimes n}}$, and similarly for
$\sigma$. Hence, for every $\alpha\in[1/2,+\infty)\setminus\{1\}$ and every $n\in\bN$ one has by
Theorem \ref{T-3.1},
\begin{align}
D_\alpha^*(\rho^{\otimes n}\|\sigma^{\otimes n})
&=\lim_iD_\alpha^*\bigl((\rho|_{\M_i})^{\otimes n}\|(\sigma|_{\M_i})^{\otimes n}\bigr)
\nonumber\\
&=\lim_inD_\alpha^*(\rho|_{\M_i}\|\sigma|_{\M_i})=nD_\alpha^*(\rho\|\sigma),
\label{F-3.16}
\end{align}
where the second equality above follows from the additivity of $D_\alpha^*$ under tensor
product in the finite-dimensional case. Indeed, the additivity of $D_\alpha^*$ under tensor
product holds true in the general von Neumann algebra case, as observed in \cite{BST} (see
the footnote of p.~1860) in the approach of Araki and Masuda's $L^p$-norms.

\begin{proposition}\label{P-3.13}
Assume that $\M$ is injective. Then for every $\rho,\sigma\in\M_*^+$ and every
$\alpha\in[1/2,+\infty)\setminus\{1\}$ we have
\begin{align}\label{F-3.17}
D_\alpha^*(\rho\|\sigma)
&=\overline{D}_\alpha^\meas(\rho\|\sigma)
=\lim_{n\to\infty}{1\over n}D_\alpha^\meas(\rho^{\otimes n}\|\sigma^{\otimes n}),
\end{align}
and for every $\alpha>1$,
\begin{align}\label{F-3.18}
D_\alpha^*(\rho\|\sigma)
&=\overline{D}_\alpha^\test(\rho\|\sigma)
=\lim_{n\to\infty}{1\over n}D_\alpha^\test(\rho^{\otimes n}\|\sigma^{\otimes n}).
\end{align}
\end{proposition}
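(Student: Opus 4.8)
The plan is to squeeze each of the regularized measured and test divergences between $D_\alpha^*(\rho\|\sigma)$ from above and below, the lower bound coming from finite-dimensional approximation and Theorem~\ref{T-3.1}. As at the beginning of the proof of Theorem~\ref{T-3.7}, I first reduce to $\cM_0:=e\cM e$ with $e:=s(\rho)\vee s(\sigma)$, which is injective and $\sigma$-finite; none of the quantities in \eqref{F-3.17}--\eqref{F-3.18} changes, since compressing any POVM or test by $e^{\otimes n}$ leaves the post-measurement distributions (and $D_\alpha^*$) intact, and I fix the increasing net $\{\cM_i\}_{i\in\cI}$ of finite-dimensional subalgebras with $\cM_0=(\bigcup_i\cM_i)''$ from the paragraph preceding the statement, so that Theorem~\ref{T-3.1} is available. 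For the upper bounds I use data processing: a POVM $\fM$ on $\cM^{\overline\otimes n}$ is a channel onto a commutative algebra, so monotonicity of $D_\alpha^*$ (valid for all $\alpha\in[1/2,+\infty)\setminus\{1\}$, \cite{BST,Je1}) gives $D_\alpha(\fM(\rho_n)\|\fM(\sigma_n))=D_\alpha^*(\fM(\rho_n)\|\fM(\sigma_n))\le D_\alpha^*(\rho_n\|\sigma_n)=nD_\alpha^*(\rho\|\sigma)$ by the additivity \eqref{F-3.16}; taking the supremum over $\fM$ yields $\frac1n D_\alpha^\meas(\rho_n\|\sigma_n)\le D_\alpha^*(\rho\|\sigma)$ for every $n$, and the test case follows a fortiori from \eqref{F-3.15}.

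For the matching lower bounds I exploit that a POVM (resp.\ test) on $\cM_i^{\otimes n}$ is a POVM (resp.\ test) on $\cM^{\overline\otimes n}$ inducing the same distributions, since $\rho_n|_{\cM_i^{\otimes n}}=(\rho|_{\cM_i})^{\otimes n}$; hence for each fixed $n$ and $i$,
\[
D_\alpha^\meas(\rho_n\|\sigma_n)\ \ge\ D_\alpha^\meas\bigl((\rho|_{\cM_i})^{\otimes n}\|(\sigma|_{\cM_i})^{\otimes n}\bigr),
\]
and likewise for $D_\alpha^\test$. Dividing by $n$, taking $\liminf_n$, and invoking the finite-dimensional identities $\lim_n\frac1n D_\alpha^\meas((\rho|_{\cM_i})^{\otimes n}\|(\sigma|_{\cM_i})^{\otimes n})=D_\alpha^*(\rho|_{\cM_i}\|\sigma|_{\cM_i})$ (all $\alpha$) and its test analogue (for $\alpha>1$) from \cite{MO1}, I obtain $\liminf_n\frac1n D_\alpha^\meas(\rho_n\|\sigma_n)\ge D_\alpha^*(\rho|_{\cM_i}\|\sigma|_{\cM_i})$ for every $i$, and similarly for the test divergence. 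Taking the supremum over $i\in\cI$ and applying the martingale convergence of Theorem~\ref{T-3.1} gives $\liminf_n\frac1n D_\alpha^\meas(\rho_n\|\sigma_n)\ge\sup_iD_\alpha^*(\rho|_{\cM_i}\|\sigma|_{\cM_i})=D_\alpha^*(\rho\|\sigma)$, and the same in the test case. Combined with the upper bounds, both limits exist and equal $D_\alpha^*(\rho\|\sigma)$, whence the suprema defining $\overline{D}_\alpha^\meas$ and $\overline{D}_\alpha^\test$ equal $D_\alpha^*(\rho\|\sigma)$ as well; this is uniform and covers $D_\alpha^*(\rho\|\sigma)=+\infty$. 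For $\overline{D}_\alpha^\meas$ the equality of supremum and limit can alternatively be read off from Fekete's lemma, as $n\mapsto D_\alpha^\meas(\rho_n\|\sigma_n)$ is superadditive (the tensor product of two optimal POVMs is a POVM whose output is the product distribution), but this superadditivity is unavailable for tests.

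The main obstacle is precisely the lower bound for the \emph{test} divergence, and with it the existence of the limit in \eqref{F-3.18}: a product of two tests is a four-outcome measurement rather than a test, so Fekete's lemma does not apply and $\overline{D}_\alpha^\test$ cannot be reduced to the limit by superadditivity. The approximation scheme above circumvents this, but it leans on the nontrivial finite-dimensional fact that even two-outcome measurements asymptotically recover the sandwiched divergence for $\alpha>1$. Equivalently, and more self-containedly within the paper, one may derive $\liminf_n\frac1n D_\alpha^\test(\rho_n\|\sigma_n)\ge D_\alpha^*(\rho\|\sigma)$ directly from the strong converse exponent: for each $r$ choose near-optimal tests with $-\frac1n\log\sigma_n(T_n)\to r$ and $-\frac1n\log\rho_n(T_n)\to H_r^*(\rho\|\sigma)$ (Theorem~\ref{T-3.7}), retain only the dominant summand of the two-point $D_\alpha$ (legitimate since $\alpha>1$) to get the bound $r-\frac{\alpha}{\alpha-1}H_r^*(\rho\|\sigma)$, and recognize, via biconjugacy of the convex lower semicontinuous function $\tilde\psi^*(\rho\|\sigma|\cdot)$ (established in the proof of Theorem~\ref{T-3.11}) against its conjugate $H_r^*$ from \eqref{F-3.6}, that $\sup_r\{r-\frac{\alpha}{\alpha-1}H_r^*(\rho\|\sigma)\}=\frac1\kappa\tilde\psi^*(\rho\|\sigma|\kappa)=D_\alpha^*(\rho\|\sigma)$ with $\kappa=(\alpha-1)/\alpha$. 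The points requiring care are the validity of data processing and additivity of $D_\alpha^*$ on the infinite tensor products (both recorded in the paper) and, in this strong-converse variant, the fully degenerate case $D_\beta^*(\rho\|\sigma)=+\infty$ for all $\beta>1$, where $H_r^*\equiv-\infty$ and one must fall back on the finite-dimensional approximation argument.
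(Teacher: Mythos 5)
Your proposal is correct and takes essentially the same route as the paper's proof: the upper bound via data processing plus the additivity \eqref{F-3.16}, and the lower bound by restricting measurements (resp.\ tests) to finite-dimensional subalgebras, invoking the finite-dimensional equalities of \cite{MO1} (with \cite{HT}) and passing to the limit by the martingale convergence of Theorem~\ref{T-3.1} --- the paper merely fixes, for each $v<D_\alpha^*(\rho\|\sigma)$, a single finite-dimensional subalgebra $\cN_0\subseteq e\cM e$ with $D_\alpha^*(\rho|_{\cN_0}\|\sigma|_{\cN_0})>v$ instead of taking your supremum over the whole net, an immaterial difference. Your supplementary remarks (Fekete superadditivity for the measured case, and the alternative strong-converse derivation of the test lower bound via $H_r^*$ and biconjugacy of $\tilde\psi^*(\rho\|\sigma|\cdot)$) are sound but not needed, and are not used in the paper.
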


\begin{proof}
For any $\alpha\in[1/2,+\infty)\setminus\{1\}$, by \eqref{F-3.16} and \eqref{F-3.15} note that
\[
D_\alpha^*(\rho\|\sigma)\ge\overline D_\alpha^\meas(\rho\|\sigma)
\ge\overline D_\alpha^\test(\rho\|\sigma).
\]
To prove \eqref{F-3.17}, it suffices to show that
\begin{align}\label{F-3.19}
\liminf_{n\to\infty}{1\over n}D_\alpha^\meas(\rho^{\otimes n}\|\sigma^{\otimes n})
\ge D_\alpha^*(\rho\|\sigma).
\end{align}
Let $e:=s(\rho)\vee s(\sigma)$. Let $v<D_\alpha^*(\rho\|\sigma)$ be arbitrary. By Theorem
\ref{T-3.1} there exists a finite-dimensional *-subalgebra $\N_0$ of $e\M e$ such that
$D_\alpha^*(\rho|_{\N_0}\|\sigma|_{\N_0})>v$. Write $\rho_0:=\rho|_{\N_0}$ and
$\sigma_0:=\sigma|_{\N_0}$. From \cite[Theorem 3.7]{MO1} and \cite[Corollary 4]{HT} it
follows that
\[
D_\alpha^*(\rho_0\|\sigma_0)
=\lim_{n\to\infty}{1\over n}D_\alpha^\meas(\rho_0^{\otimes n}\|\sigma_0^{\otimes n}).
\]
Hence there exists an $n_0\in\bN$ such that
${1\over n}D_\alpha^\meas(\rho_0^{\otimes n}\|\sigma_0^{\otimes n})>v$ for all $n\ge n_0$.
Therefore,
\[
\liminf_{n\to\infty}{1\over n}D_\alpha^\meas(\rho^{\otimes n}\|\sigma^{\otimes n})
\ge\liminf_{n\to\infty}{1\over n}D_\alpha^\meas(\rho_0^{\otimes n}\|\sigma_0^{\otimes n})
\ge v,
\]
which implies \eqref{F-3.19} by letting $v\nearrow D_\alpha^*(\rho\|\sigma)$.

When $\alpha>1$, the proof of \eqref{F-3.18} can proceed in the same way as above by
appealing to \cite[Corollary 4.6]{MO1}.
\end{proof}

\begin{remark}\label{R-3.14}\rm
When $\alpha<1$, it may happen that 
$\overline{D}_\alpha^\test(\rho\|\sigma)<\overline{D}_\alpha^\meas(\rho\|\sigma)$, even for
finite-dimensional commuting states $\rho$ and $\sigma$; see \cite{testdiv}.
\end{remark}

\section{The $C^*$-algebra case}\label{Sec-4}

\subsection{Sandwiched and standard R\'enyi divergences in $C^*$-algebras}\label{Sec-4.1}

In this section we first extend the notion of the sandwiched and the standard R\'enyi
divergences to positive linear functionals on a general unital $C^*$-algebra. In the rest of the
section let $\cA$ be a unital $C^*$-algebra. (The unitality assumption is not essential, since we
can work with the unitization $\cA\oplus\bC$ if $\cA$ is non-unital.) Let $\cA_+^*$ be the set of
positive linear functionals (automatically bounded) on $\cA$. To define $D_\alpha^*$ and
$D_\alpha$ for $\rho,\sigma\in\cA_+^*$, let us consider the universal representation
$\{\pi_u,\cH_u\}$ of $\cA$; then $\rho$ and $\sigma$ have the respective normal extensions
$\overline\rho$ and $\overline\sigma$ to $\pi_u(\cA)''\cong\cA^{**}$ such that
$\rho=\overline\rho\circ\pi$ and $\sigma=\overline\sigma\circ\pi$; see, e.g.,
\cite[Definition III.2.3, Theorem III.2.4]{Ta1}.

\begin{definition}\label{D-4.1}\rm
For every $\rho,\sigma\in\cA_+^*$ let $\overline\rho,\overline\sigma$ be as stated above. For every
$\alpha\in[1/2,+\infty)\setminus\{1\}$ we define the \emph{sandwiched R\'enyi $\alpha$-divergence}
of $\rho$ and $\sigma$ by
\begin{align}\label{F-4.1}
\widehat D_\alpha^*(\rho\|\sigma):=D_\alpha^*(\overline\rho\|\overline\sigma).
\end{align}
For $\alpha\in[0,+\infty)\setminus\{1\}$ define also the \emph{standard R\'enyi $\alpha$-divergence}
of $\rho$ and $\sigma$ by
\begin{align}\label{F-4.2}
\widehat D_\alpha(\rho\|\sigma):=D_\alpha(\overline\rho\|\overline\sigma).
\end{align}
\end{definition}

One question arises immediately: when $\cA$ itself is a von Neumann algebra and $\rho,\sigma$
are normal functionals, do definitions \eqref{F-4.1} and \eqref{F-4.2} give the same notions of
sandwiched and standard R\'enyi divergences (see Section \ref{Sec-2} for the definitions),
i.e., do the equalities
\begin{align}\label{F-4.3}
\widehat D_\alpha^*(\rho\|\sigma)=D_\alpha^{*}(\rho\|\sigma),\qquad
\widehat D_\alpha(\rho\|\sigma)=D_\alpha(\rho\|\sigma)
\end{align}
hold? Theorem \ref{T-4.3} below shows that these are indeed the case. For stating it,  let us
introduce the following:

\begin{definition}\label{D-4.2}\rm
Let $\rho,\sigma\in\cA_+^*$. We say that a representation $\pi:\,\cA\to\B(\hil)$ is 
\emph{$(\rho,\sigma)$-normal}, if there exist $\rho_{\pi},\sigma_{\pi}\in(\pi(\cA)'')_*^{+}$ such that 
$\rho=\rho_{\pi}\circ\pi$, $\sigma=\sigma_{\pi}\circ\pi$.
\end{definition}

\begin{theorem}\label{T-4.3}
Let $\rho,\sigma\in\cA_+^*$ and $\pi$ be any $(\rho,\sigma)$-normal representation of $\cA$.
Then we have the following:
\begin{itemize}
\item[\rm(i)] For any $\alpha\in[1/2,+\infty)\setminus\{1\}$,
\begin{align}\label{F-4.4}
\widehat D_{\alpha}^*(\rho\|\sigma)=D_{\alpha}^*(\rho_{\pi}\|\sigma_{\pi}),\quad\mbox{i.e.},\quad
D_{\alpha}^*(\overline\rho\|\overline\sigma)=D_{\alpha}^*(\rho_{\pi}\|\sigma_{\pi}).
\end{align}
\item[\rm(ii)] For any $\alpha\in[0,+\infty)\setminus\{1\}$,
\begin{align}\label{F-4.5}
\widehat D_{\alpha}(\rho\|\sigma)=D_{\alpha}(\rho_{\pi}\|\sigma_{\pi}),\quad\mbox{i.e.},\quad
D_{\alpha}(\overline\rho\|\overline\sigma)=D_{\alpha}(\rho_{\pi}\|\sigma_{\pi}).
\end{align}
\end{itemize}

In particular, the equalities in \eqref{F-4.3} hold when $\cA$ is a von Neumann
algebra and $\rho,\sigma$ are normal functionals.
\end{theorem}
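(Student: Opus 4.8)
The plan is to factor the arbitrary $(\rho,\sigma)$-normal representation $\pi$ through the universal representation and then to use that the Rényi divergences depend on the functionals only through the smallest von Neumann algebra carrying their supports. First I would invoke the universal property of the enveloping von Neumann algebra $\cA^{**}=\pi_u(\cA)''$ (as in \cite[Theorem III.2.4]{Ta1}): the representation $\pi:\cA\to\cB(\cH)$ extends uniquely to a normal (i.e.\ $\sigma$-weakly continuous) unital $*$-homomorphism $\tilde\pi:\cA^{**}\to\pi(\cA)''$ with $\tilde\pi\circ\pi_u=\pi$, and this $\tilde\pi$ is automatically surjective onto $\pi(\cA)''$. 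The next step is to identify the two pairs of functionals: both $\overline\rho$ and $\rho_\pi\circ\tilde\pi$ are normal positive functionals on $\cA^{**}$ agreeing on the $\sigma$-weakly dense subalgebra $\pi_u(\cA)$, since $\overline\rho\circ\pi_u=\rho=\rho_\pi\circ\pi=\rho_\pi\circ\tilde\pi\circ\pi_u$; by normality they coincide, so $\overline\rho=\rho_\pi\circ\tilde\pi$, and likewise $\overline\sigma=\sigma_\pi\circ\tilde\pi$.

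The core of the argument is then to compare the divergences across $\tilde\pi$, uniformly for both families. Since $\tilde\pi$ is a normal $*$-homomorphism, its kernel is a $\sigma$-weakly closed two-sided ideal, so $\ker\tilde\pi=(1-z)\cA^{**}$ for a central projection $z\in\cA^{**}$, and $\tilde\pi|_{z\cA^{**}}:z\cA^{**}\to\pi(\cA)''$ is a normal $*$-isomorphism. From $\overline\rho(1-z)=\rho_\pi(\tilde\pi(1-z))=0$ (and the same for $\overline\sigma$) I would deduce $s(\overline\rho)\vee s(\overline\sigma)\le z$, so both functionals are concentrated on the central summand $z\cA^{**}$. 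Two ingredients now finish the proof in one shot: (a) additivity of $Q_\alpha^*$ and $Q_\alpha$ over the central decomposition $\cA^{**}=z\cA^{**}\oplus(1-z)\cA^{**}$, exactly the support reduction used in the proofs of Theorem \ref{T-3.7} and Proposition \ref{P-3.4}, which shows that the divergences computed in $\cA^{**}$ equal those of the restrictions $\overline\rho|_{z\cA^{**}},\overline\sigma|_{z\cA^{**}}$ in $z\cA^{**}$; and (b) invariance of $D_\alpha^*$ and $D_\alpha$ under the normal $*$-isomorphism $\tilde\pi|_{z\cA^{**}}$, which carries these restrictions to $\rho_\pi,\sigma_\pi$. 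Composing (a) and (b) yields \eqref{F-4.4} and \eqref{F-4.5}. Alternatively, the two inequalities can be obtained separately by data processing: monotonicity under the unital (Schwarz) map $\tilde\pi$ gives $D_\alpha^*(\overline\rho\|\overline\sigma)\le D_\alpha^*(\rho_\pi\|\sigma_\pi)$, while monotonicity under the unital normal completely positive recovery map $\Theta(y):=(\tilde\pi|_{z\cA^{**}})^{-1}(y)+\lambda(y)(1-z)$ (with $\lambda$ any normal state on $\pi(\cA)''$), for which $\rho_\pi=\overline\rho\circ\Theta$ and $\sigma_\pi=\overline\sigma\circ\Theta$, gives the reverse inequality.

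The final ``in particular'' clause follows immediately by taking $\pi$ to be the identity representation of the von Neumann algebra $\cA$ on its defining Hilbert space: it is $(\rho,\sigma)$-normal with $\pi(\cA)''=\cA$ and $\rho_\pi=\rho$, $\sigma_\pi=\sigma$, so \eqref{F-4.4} and \eqref{F-4.5} reduce to \eqref{F-4.3}. I expect the main obstacle to be ingredient (b) together with the care required in (a): although isomorphism invariance and central-summand additivity are morally evident, $\cA^{**}$ is a generically wild, non-$\sigma$-finite von Neumann algebra, so one cannot lean on the $\sigma$-finite machinery (e.g.\ martingale convergence) and must verify these properties directly from the defining constructions---Kosaki's interpolation $L^p$-spaces for $D_\alpha^*$ and the relative modular operator $\Delta_{\rho,\sigma}$ for $D_\alpha$---checking their functoriality under normal $*$-isomorphisms and their behaviour under central reduction, while tracking the support conditions and the possibility of infinite values (most delicately for $\alpha>1$, where $Q_\alpha^*,Q_\alpha$ may be $+\infty$ and the condition $s(\rho)\le s(\sigma)$ enters). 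This is presumably why the authors defer the full argument to Appendix \ref{sec:proof}.
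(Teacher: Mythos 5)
Your main route is correct, and it is genuinely different from the paper's. The paper (Appendix \ref{Sec-C}) never passes through the structure of $\cA^{**}$: it compares two arbitrary $(\rho,\sigma)$-normal representations directly, using uniqueness of the GNS representation of $\sigma$ to obtain an isomorphism $\Lambda$ between the reductions of the two algebras by the central supports of $s(\sigma)$, and then uniqueness of the standard form to produce a unitary $U$ implementing $\Lambda$ with $Uh_{\tilde\rho}^{1/2}=h_{\hat\rho}^{1/2}$ and $Uh_{\tilde\sigma}^{1/2}=h_{\hat\sigma}^{1/2}$ (Lemma \ref{L-C.1}). You instead produce the comparison isomorphism structurally, from $\ker\overline\pi=(1-z)\cA^{**}$ with $z$ central, and your identifications ($\overline\rho=\rho_\pi\circ\tilde\pi$, surjectivity of $\tilde\pi$, $s(\overline\rho)\vee s(\overline\sigma)\le z$, and $\rho_\pi=\overline\rho|_{z\cA^{**}}\circ(\tilde\pi|_{z\cA^{**}})^{-1}$) all check out. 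This buys you two things: the isomorphism exists with no absolute-continuity hypothesis, since $z$ automatically dominates both supports, so you can dispense with the paper's case distinction on $s(\tilde\rho)\le s(\tilde\sigma)$ and the perturbation $\sigma_\eps=\sigma+\eps\rho$ it needs for $\alpha<1$; and your reduction (a) is by a central projection, which is slightly cleaner. What your route does not buy is a way around the technical core: your ingredient (b) --- invariance of $D_\alpha\nw$ and $D_\alpha$ under a normal $*$-isomorphism between possibly non-$\sigma$-finite algebras, for all $\alpha$ and with infinite values allowed --- is precisely what the appendix establishes, namely the transfer of Kosaki interpolation norms \eqref{F-C.2} (via standard-form uniqueness plus complex interpolation, then inserted into the variational formulas of Proposition \ref{P-2.1}) and the intertwining $\Delta_{\hat\rho,\hat\sigma}=U\Delta_{\tilde\rho,\tilde\sigma}U^*$ for the standard divergence. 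You correctly name these tools and predict where the work lies, so your skeleton is sound, but filling in (b) essentially reproduces the appendix.

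Two concrete caveats. First, your fallback double-DPI argument does not prove part (ii) for $\alpha>2$: the standard R\'enyi divergence satisfies data processing (even under unital normal completely positive maps such as your $\Theta$) only for $\alpha\in[0,2]$ --- this is exactly why Proposition \ref{P-4.5}\,(2) is restricted to $\alpha\in[0,2]$ --- whereas \eqref{F-4.5} is asserted for all $\alpha\in[0,+\infty)\setminus\{1\}$; neither one-sided inequality in your scheme is available for $D_\alpha$ with $\alpha>2$, so for part (ii) only the isomorphism route works. (For $D_\alpha\nw$ the DPI route is fine for all $\alpha\ge1/2$, but note that the DPI in the non-$\sigma$-finite setting is itself obtained in the paper, in the footnote to Proposition \ref{P-4.5}, by the same support-reduction device as your ingredient (a), so it is not a shortcut around that step.) Second, in carrying out (a) for $\alpha\in[1/2,1)$, mind that elements of $(z\cA^{**})_{++}$ are not invertible in $\cA^{**}$, so equating the infimum in \eqref{F-2.10} over $(\cA^{**})_{++}$ with that over $(z\cA^{**})_{++}$ needs a short argument (compress with $z$ in one direction, adjoin $\delta(1-z)$ in the other), a point the paper's appendix also passes over without comment.
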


The proof of the theorem is somewhat technical based on Kosaki's interpolation $L^p$-spaces
\cite{Ko1}, so we defer it to Appendix \ref{Sec-H}. Once \eqref{F-4.3} is
confirmed by Theorem \ref{T-4.3}, we may and do rewrite $\widehat D_\alpha^*(\rho\|\sigma)$
and $\widehat D(\rho\|\sigma)$ for $\rho,\sigma\in\cA_+^*$ as $D_\alpha^*(\rho\|\sigma)$ and
$D_\alpha(\rho\|\sigma)$, respectively.

\begin{remark}\label{R-4.4}\rm
Recall that $D_{1/2}^*(\rho\|\sigma)=-2\log F(\rho,\sigma)$ for $\rho,\sigma\in\M_*^+$, where
$\M$ is a von Neumann algebra and $F(\rho,\sigma)$ is the \emph{fidelity} of $\rho,\sigma$.
Theorem \ref{T-4.3} (for $\alpha=1/2$) shows that the fidelity of states $\rho,\sigma\in\cA_+^*$
may be defined by $F(\rho_{\pi},\sigma_{\pi})$ via any $(\rho,\sigma)$-normal representation
$\pi$ of $\cA$. In view of \eqref{F-2.4}, Theorem \ref{T-4.3} also shows that the relative
entropy $D(\rho\|\sigma)$ of $\rho,\sigma\in\cA_+^*$ can be defined by
$D(\rho_\pi\|\sigma_\pi)$ as above; see \cite[Sec.~5]{Ar2} and \cite[Lemma 3.1]{HOT}.
\end{remark}

Based on Definition \ref{D-4.1} and Theorem \ref{T-4.3} we can easily extend properties of
the sandwiched and the standard R\'enyi divergences from the von Neumann setting to the
$C^*$-algebra setting. For instance, some important ones are given in the next proposition.

\begin{proposition}\label{P-4.5}
Let $\cA$ be a unital $C^*$-algebra and $\rho,\sigma\in\cA_+^*$.
\begin{itemize}
\item[\rm(1)] \emph{Joint lower semi-continuity:}
The map $(\rho,\sigma)\in\cA_+^*\times\cA_+^*\mapsto D_\alpha^*(\rho\|\sigma)$ is jointly lower
semi-continuous in the norm topology for every $\alpha\in(1,+\infty)$ and jointly continuous in the
norm topology for every $\alpha\in[1/2,1)$. The map
$(\rho,\sigma)\in\cA_+^*\times\cA_+^*\mapsto D_\alpha(\rho\|\sigma)$ is jointly lower
semi-continuous in the weak (i.e., $\sigma(\cA^*,\cA^{**})$-) topology for every
$\alpha\in(1,2]$ and jointly continuous in the norm topology for every $\alpha\in[0,1)$.

\item[\rm(2)] \emph{Monotonicity (Data-processing inequality):} Let $\Phi:\B\to\cA$ be a
unital positive linear map between unital $C^*$-algebras. Then for every
$\alpha\in[1/2,+\infty)\setminus\{1\}$,
\[
D_\alpha^*(\rho\circ\Phi\|\sigma\circ\Phi)\le D_\alpha^*(\rho\|\sigma).
\]
If $\Phi$ is, in addition, a Schwarz map, then for every $\alpha\in[0,2]\setminus\{1\}$,
\[
D_\alpha(\rho\circ\Phi\|\sigma\circ\Phi)\le D_\alpha(\rho\|\sigma).
\]

\item[\rm(3)] \emph{Inequality between $D_\alpha^*$ and $D_\alpha$:} For every
$\alpha\in[1/2,+\infty)\setminus\{1\}$,
\[
D_\alpha^*(\rho\|\sigma)\le D_\alpha(\rho\|\sigma).
\]

\item[\rm(4)] \emph{Martingale convergence:} Assume that $\{\cA_i\}_{i\in\cI}$ is an increasing
net of $C^*$-subalgebras of $\cA$ containing the unit of $\cA$ such that $\bigcup_{i\in\cI}\cA_i$
is norm-dense in $\cA$. Then for every $\alpha\in[0,2]\setminus\{1\}$,
\[
D_\alpha(\rho|_{\cA_i}\|\sigma|_{\cA_i})\,\nearrow\,D_\alpha(\rho\|\sigma).
\]
Moreover, if there exists a $(\rho,\sigma)$-normal representation $\pi$ of $\cA$ such that
$\pi(\cA)''$ is $\sigma$-finite (this is automatic if $\cA$ is separable), then for every
$\alpha\in[1/2,+\infty)\setminus\{1\}$,
\[
D_\alpha^*(\rho|_{\cA_i}\|\sigma|_{\cA_i})\,\nearrow\,D_\alpha^*(\rho\|\sigma).
\]
\end{itemize}
\end{proposition}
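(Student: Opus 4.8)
The plan is to transport every assertion to the von Neumann algebra setting, where the corresponding statements are already available, by means of Definition \ref{D-4.1} and Theorem \ref{T-4.3}. The starting observation is that the universal representation furnishes an isometric order isomorphism $\cA_+^*\ni\rho\mapsto\overline\rho\in(\cA^{**})_*^+$; since $\cA^*=(\cA^{**})_*$ as Banach spaces, this isomorphism carries the norm topology of $\cA^*$ to the predual-norm topology of $(\cA^{**})_*$, and the weak topology $\sigma(\cA^*,\cA^{**})$ to the $\sigma((\cA^{**})_*,\cA^{**})$-topology. Parts (1) and (3) then follow at once. For (3), apply \eqref{F-2.11} to $\overline\rho,\overline\sigma$ on the von Neumann algebra $\cA^{**}$ and use \eqref{F-4.1}--\eqref{F-4.2}. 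For (1), transfer the joint lower semi-continuity and continuity of $D_\alpha^*$ and $D_\alpha$ on $\cA^{**}$ (\cite{Je1}, \cite[Theorem 3.16]{Hi}, together with the corresponding statements for the standard divergence in the von Neumann setting) through the isomorphism; the only point meriting attention is the identification of $\sigma(\cA^*,\cA^{**})$ with the $\sigma((\cA^{**})_*,\cA^{**})$-topology, after which the indicated $\alpha$-ranges match those known on $\cA^{**}$.

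For the data-processing inequality (2) the plan is to pass to second duals. Given a unital positive $\Phi:\cB\to\cA$, its bidual $\Phi^{**}:\cB^{**}\to\cA^{**}$ is unital, normal (biadjoints are automatically normal) and positive (the positive cone of $\cB$ is $\sigma$-weakly dense in that of $\cB^{**}$ by Kaplansky density), and it is a Schwarz map whenever $\Phi$ is. One then checks that $\overline{\rho\circ\Phi}=\overline\rho\circ\Phi^{**}$: both sides are normal on $\cB^{**}$ and agree on the $\sigma$-weakly dense subset $\cB$. Consequently $D_\alpha^*(\rho\circ\Phi\|\sigma\circ\Phi)=D_\alpha^*(\overline\rho\circ\Phi^{**}\|\overline\sigma\circ\Phi^{**})\le D_\alpha^*(\overline\rho\|\overline\sigma)=D_\alpha^*(\rho\|\sigma)$ by the von Neumann algebra monotonicity under the normal unital positive map $\Phi^{**}$, and the same argument with the standard divergence handles the Schwarz case on $[0,2]\setminus\{1\}$.

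For the martingale convergence (4), the plan is to turn the increasing net $\{\cA_i\}$ into an increasing net of von Neumann subalgebras and quote the von Neumann martingale theorems. Fix a $(\rho,\sigma)$-normal representation $\pi$ --- the universal one for the $D_\alpha$ statement, and the hypothesized $\sigma$-finite one for the $D_\alpha^*$ statement --- put $\cM:=\pi(\cA)''$ and $\cM_i:=\pi(\cA_i)''$; norm-density of $\bigcup_i\cA_i$ yields $\cM=(\bigcup_i\cM_i)''$. Since $\pi|_{\cA_i}$ is $(\rho|_{\cA_i},\sigma|_{\cA_i})$-normal with $\pi(\cA_i)''=\cM_i$, Theorem \ref{T-4.3} identifies $D_\alpha^*(\rho|_{\cA_i}\|\sigma|_{\cA_i})=D_\alpha^*(\rho_\pi|_{\cM_i}\|\sigma_\pi|_{\cM_i})$, and likewise for $D_\alpha$. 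For $D_\alpha^*$ the conclusion is then Theorem \ref{T-3.1} applied to $\{\cM_i\}$ in the $\sigma$-finite $\cM$ --- this is precisely where the $\sigma$-finiteness hypothesis enters --- while for $D_\alpha$ on $[0,2]\setminus\{1\}$ it is the martingale convergence of the standard R\'enyi divergence in the von Neumann setting (\cite{Hi3,Hi}), which needs no $\sigma$-finiteness.

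I expect the main obstacle to lie in the input needed for (4) in the $D_\alpha$ case: unlike $D_\alpha^*$, whose martingale convergence is supplied by Theorem \ref{T-3.1}, the analogous statement for the standard divergence on $[0,2]$ is not reproved here and must be drawn from the $f$-divergence machinery in the cited works; one must also verify that the von Neumann monotonicity and lower semi-continuity results hold on exactly the $\alpha$-ranges claimed. A smaller technical point is the stability of the Schwarz property under bidualization in (2), which follows from Kaplansky density together with the separate $\sigma$-weak continuity of multiplication on bounded sets.
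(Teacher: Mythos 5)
Your proposal is correct and follows essentially the same route as the paper's own proof: the paper likewise transfers every assertion to the von Neumann setting via the normal extensions, and your $\Phi^{**}$ is exactly the paper's $\overline\Phi\colon\pi_u(\cB)''\to\pi_u(\cA)''$ (constructed in the proof of \cite[Proposition 7.4]{Hi4}, under the identification $\cB^{**}\cong\pi_u(\cB)''$). In detail: the paper proves (1) by the isometry $\|\overline\rho_n-\overline\rho\|=\|\rho_n-\rho\|$ (Kaplansky) followed by the von Neumann continuity results of \cite{Je1,Hi,Hi3}; proves (3) by applying \eqref{F-2.11} to $\overline\rho,\overline\sigma$; notes, as you do, that the Schwarz property passes to $\overline\Phi$ by Kaplansky density; and proves (4) exactly as you outline, using Theorems \ref{T-3.1} and \ref{T-4.3} in the hypothesized $\sigma$-finite representation for $D_\alpha^*$, and \cite[Theorem 4.1\,(v)]{Hi3} for $D_\alpha$ after recognizing $Q_\alpha$ (resp.\ $-Q_\alpha$) as the standard $f$-divergence for the operator convex $f(t)=t^\alpha$ (resp.\ $f(t)=-t^\alpha$).

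The one substantive point you gloss over is in (2): the von Neumann monotonicity theorems you invoke (\cite[Theorem 3.14]{Je1}, \cite[Theorem 4.1]{Je2}) are stated under an implicit $\sigma$-finiteness assumption, and the biduals $\cB^{**},\cA^{**}$ are essentially never $\sigma$-finite. The paper devotes a footnote to removing this assumption: for a normal unital positive $\Phi:\cN\to\cM$ one cuts down to the $\sigma$-finite corners $e\cM e$ and $e_0\cN e_0$ with $e=s(\rho)\vee s(\sigma)$ and $e_0=s(\rho\circ\Phi)\vee s(\sigma\circ\Phi)$, and the nontrivial step is showing $e\le\Phi(e_0)$ (via $(PQ)^nP\ge A$ for $P=1-s(\rho)$, $Q=1-s(\sigma)$, $A=1-\Phi(e_0)$, together with $(PQ)^nP\to P\wedge Q$ strongly), so that $e\Phi(\cdot)e|_{e_0\cN e_0}$ is again unital and the divergences are unchanged under the cut-down. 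Your closing caveat about verifying ``$\alpha$-ranges'' is not where the difficulty lies; it is this $\sigma$-finiteness reduction that must be supplied before the cited theorems apply to $\Phi^{**}$. Everything else in your outline --- the identification of $\sigma(\cA^*,\cA^{**})$ with the weak* topology of $(\cA^{**})_*$ in (1), the identity $\overline{\rho\circ\Phi}=\overline\rho\circ\Phi^{**}$ by normality plus $\sigma$-weak density, and the transfer in (4) via Theorem \ref{T-4.3} applied to $\pi|_{\cA_i}$ --- is sound and matches the paper.
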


\begin{proof}
(1)\enspace
Let $\rho_n,\sigma_n\in\cA_+^*$, $n\in\bN$, be such that $\|\rho_n-\rho\|\to0$ and
$\|\sigma_n-\sigma\|\to0$, with normal extensions $\overline\rho_n,\overline\sigma_n$ to
$\pi_u(\cA)''$ as well as $\overline\rho,\overline\sigma$.
By Kaplansky's density theorem, note that $\|\overline\rho_n-\overline\rho\|=\|\rho_n-\rho\|$ and
$\|\overline\sigma_n-\overline\sigma\|=\|\sigma_n-\sigma\|$. Hence the assertions follow from
the corresponding result in the von Neumann algebra case; see \cite[Proposition 3.10]{Je1} and
\cite[Theorem 3.16\,(3)]{Hi} for $D_\alpha^*$, and \cite[Proposition 5.3\,(6)]{Hi3} and
\cite[Corollary 3.8]{Hi} for $D_\alpha$.

(2)\enspace
As shown in the proof of \cite[Proposition 7.4]{Hi4}, $\Phi$ extends to a unital positive
normal map $\overline\Phi:\pi_u(\B)''\to\pi_u(\cA)''$ in such a way that
$\overline\Phi\circ\pi_u=\pi_u\circ\Phi$, where $\pi_u$ on the left-hand side is the
universal representation of $\B$ and that of $\cA$ is on the right-hand side. Then
$\overline\rho\circ\overline\Phi$ and $\overline\sigma\circ\overline\Phi$ are the normal
extensions of $\rho\circ\Phi$ and $\sigma\circ\Phi$ to $\pi_u(\B)''$, as seen in the
proof of \cite[Proposition 7.4]{Hi4}. By \cite[Theorem 3.14]{Je1} and
\cite[Theorem 4.1]{Je2}\footnote{
In \cite{Je1,Je2} (also \cite{Hi}) it was implicitly assumed that $\M$ is $\sigma$-finite. But
this assumption can be removed. Indeed, let $\Phi:\N\to\M$ be a unital positive normal
map between von Neumann algebras and $\rho,\sigma\in\M_*^+$. Let
$e:=s(\rho)\vee s(\sigma)$ and $e_0:=s(\rho\circ\Phi)\vee s(\sigma\circ\Phi)$. Since
$s(\rho)(1-\Phi(e_0))s(\rho)=0$, $s(\rho)=s(\rho)\Phi(e_0)=\Phi(e_0)s(\rho)$ and
$s(\rho)\le\Phi(e_0)$. Similarly, $s(\sigma)=s(\sigma)\Phi(e_0)=\Phi(e_0)s(\sigma)$
and $s(\sigma)\le\Phi(e_0)$. Let $P:=1-s(\rho)$, $Q:=1-s(\sigma)$ and
$A:=1-\Phi(e_0)\ge0$; hence $P\ge A$ and $Q\ge A$. Note that $PQP\ge PAP=A$,
$QPQPQ\ge QAQ=A$, and so on. Hence $(PQ)^nP\ge A$ for all
$n\ge1$. Since $(PQ)^nP\to P\wedge Q$ strongly, we have $P\wedge Q\ge A$, which means
that $e\le\Phi(e_0)$. We thus find that $\tilde\Phi:=e\Phi(\cdot)e|_{e_0\N e_0}$
is a unital positive normal map from $e_0\N e_0$ to $e\M e$. Note that $e_0\N e_0$ and
$e\M e$ are $\sigma$-finite. Moreover,
$D_\alpha^*(\rho\|\sigma)=D_\alpha^*(\rho|_{e\M e}\|\sigma|_{e\M e})$ and
\[
D_\alpha^*(\rho\circ\Phi\|\sigma\circ\Phi)
=D_\alpha^*(\rho\circ\Phi|_{e_0\N e_0}\|\sigma\circ\Phi|_{e_0\N e_0})
=D_\alpha^*((\rho|_{e\M e})\circ\tilde\Phi\|(\sigma|_{e\M e})\circ\tilde\Phi).
\]}
we have $D_\alpha^*(\rho\circ\Phi\|\sigma\circ\Phi)\le D_\alpha^*(\rho\|\sigma)$ for all
$\alpha\in[1/2,+\infty)\setminus\{1\}$. In addition, assume that $\Phi$ is a Schwarz map.
Then it is easy to see by Kaplansky's density theorem that $\overline\Phi$ is a Schwarz
map again. Hence \cite[Proposition 5.3\,(9)]{Hi3} gives
$D_\alpha(\rho\circ\Phi\|\sigma\circ\Phi)\le D_\alpha(\rho\|\sigma)$ for all
$\alpha\in[0,2]\setminus\{1\}$.

(3)\enspace
By \cite[Theorem 12]{BST} (also \cite[Corollary 3.6]{Je1}) we have
\[
D_\alpha^*(\rho\|\sigma)=D_\alpha^*(\overline\rho\|\overline\sigma)
\le D_\alpha(\overline\rho\|\overline\sigma)=D_\alpha(\rho\|\sigma).
\]

(4)\enspace
Let $\pi$ be as stated in the latter assertion (for $D_\alpha^*$). Then it is clear that
$\{\pi(\cA_i)''\}_{i\in\cI}$ is an increasing net of von Neumann subalgebras of $\sigma$-finite
$\pi(\cA)''$ containing the unit of $\pi(\cA)''$ such that $\bigl(\bigcup_i\pi(\cA_i)''\bigr)''=\pi(\cA)''$.
By Theorems \ref{T-4.3} and \ref{T-3.1}, for every $\alpha\in[1/2,+\infty)\setminus\{1\}$ we have
\[
D_\alpha^*(\rho|_{\cA_i}\|\sigma|_{\cA_i})
=D_\alpha^*(\rho_{\pi}|_{\pi(\cA_i)''}\|\sigma_{\pi}|_{\pi(\cA_i)''})
\,\nearrow\,D_\alpha^*(\rho_{\pi}\|\sigma_{\pi})=D_\alpha^*(\rho\|\sigma).
\]
As for the first assertion, note that $Q_\alpha(\overline\rho\|\overline\sigma)$ for $\alpha\in(1,2]$
is the standard $f$-divergence of $\overline\rho,\overline\sigma$ for $f(t)=t^\alpha$ and
$-Q_\alpha(\overline\rho\|\overline\sigma)$ for $\alpha\in[0,1)$ is that for $f(t)=-t^\alpha$. Since
these functions $f_\alpha$ are operator convex on $(0,+\infty)$, the martingale convergence for
$D_\alpha$ when $\alpha\in[0,2]\setminus\{1\}$ follows from \cite[Theorem 4.1\,(v)]{Hi3}.
\end{proof}

\begin{example}\label{E-4.6}\rm
Let $\cC(\cH)$ be the compact operator ideal in $\B(\cH)$, and $\cC_1(\cH)$ be the trace class
on $\cH$. Consider the unital $C^*$-algebra $\cA:=\cC(\cH)+\bC\1$. It is well known that
$\cA^*\cong\cC_1(\cH)$ and $\cC_1(\cH)^*\cong\B(\cH)$. Thus we can identify
$\cA_+^*=\cC_1(\cH)_+=\B(\cH)_*^+$. With this identification, for every
$\rho,\sigma\in\cC_1(\cH)_+$, the $C^*$-versions $D_\alpha^*(\rho\|\sigma)$ and
$D_\alpha(\rho\|\sigma)$ of $\rho,\sigma$ considered as elements of $\cA_+^*$ coincide with
those of $\rho,\sigma$ as elements of $\B(\cH)_*^+$. Recently in \cite{Mo}, the
$(\alpha,z)$-R\'enyi divergences, a more general notion than the sandwiched and the standard
R\'enyi divergences, have been discussed for more general operators
$\rho,\sigma$ in $\B(\cH)_+$, as well as their strong converse exponents.
\end{example}

\begin{example}\label{E-4.7}\rm
Let $(\cA,G,\tau)$ be a $C^*$-dynamical system, where $G$ is a group and $\tau$ is an action of
$G$ as *-automorphisms of $\cA$. Let $\cS_G(\cA)$ denote the set of $\tau$-invariant states of
$\cA$. Let $\ffi\in\cS_G(\cA)$ and $\{\pi_\ffi,\cH_\ffi,\xi_\ffi\}$ be the cyclic representation of $\cA$
induced by $\ffi$. Then there is a unique unitary representation $U_\ffi$ of $G$ on $\cH_\ffi$ such
that $U_\ffi(g)\xi_\ffi=\xi_\ffi$ and $\pi_\ffi(\tau_g(a))=U_\ffi(g)\pi_\ffi(a)U_\ffi(g)^*$ for all $a\in\cA$,
$g\in G$. It is well known (see, e.g., \cite[Corollary 4.3.11]{BR}) that $\cA$ is \emph{$G$-abelian},
i.e., $\pi_\ffi(\cA)'\cap U_\ffi(G)'$ is abelian for all $\ffi\in\cS_G(\cA)$ if and only if $\cS_G(\cA)$ is a
(Choquet) simplex. In this case, each $\ffi\in\cS_G(\cA)$ has a unique maximal representing
measure $\mu_\ffi$ on $\cS_G(\cA)$, which is the $\pi_\ffi(\cA)'\cap U_\ffi(G)'$-orthogonal measure
of $\ffi$; see \cite[Proposition 4.3.3]{BR}. Define a unital positive linear map
$\Phi:\cA\to C(\cS_G(\cA))$ (consisting of all continuous functions on $\cS_G(\cA)$) by
$(\Phi a)(\omega):=\omega(a)$ for $a\in\cA$, $\omega\in\cS_G(\cA)$. When $\cA$ is
$G$-abelian and $\rho,\sigma\in\cS_G(\cA)$, since $\rho=\mu_\rho\circ\Phi$ and
$\sigma=\mu_\sigma\circ\Phi$, Proposition \ref{P-4.5}\,(2) gives
\begin{equation}\label{F-4.6}
\begin{aligned}
D_\alpha^*(\rho\|\sigma)&\le D_\alpha(\mu_\rho\|\mu_\sigma),\qquad
\alpha\in[1/2,+\infty)\setminus\{1\}, \\
D_\alpha(\rho\|\sigma)&\le D_\alpha(\mu_\rho\|\mu_\sigma),\qquad
\alpha\in[0,2]\setminus\{1\},
\end{aligned}
\end{equation}
where $D_\alpha(\mu_\rho\|\mu_\sigma)$ is the classical R\'enyi relative entropy. Now assume
that $\cA$ is \emph{$G$-central}, i.e., $\pi_\ffi(\cA)'\cap U_\ffi(G)'\subseteq\pi_\ffi(\cA)''$ (hence
$\pi_\ffi(\cA)'\cap U_\ffi(G)'\subseteq$ the center of $\pi_\ffi(\cA)''$) for all $\ffi\in\cS_G(\cA)$, which
is stronger than $G$-abeliannes and weaker than some other conditions of asymptotic abeliannes;
see \cite[Sec.~4.3]{BR}, \cite{DKS}. In this case, with $\ffi:=(\rho+\sigma)/2$ there is an
isomorphism $\theta_\ffi:L^\infty(\cS_G(\cA),\mu_\ffi)\to\pi_\ffi(\cA)'\cap U_\ffi(G)'$ such that
\begin{align}\label{F-4.7}
\<\xi_\ffi,\theta_\ffi(f)\pi_\ffi(a)\xi_\ffi\>
=\int f(\omega)\omega(a)\,d\mu_\ffi(\omega),\qquad f\in L^\infty(\cS_G(\cA),\mu_\ffi).
\end{align}
For this, see \cite[Proposition 41.22]{BR}. Since $\rho,\sigma\le2\ffi$, note \cite[Corollary 4.1.17]{BR}
that $\mu_\rho,\mu_\sigma\le2\mu_\ffi$ so that one can take $g_\rho:=d\mu_\rho/d\mu_\ffi$ and
$g_\sigma:=d\mu_\sigma/d\mu_\ffi$ in $L^\infty(\cS_G(\cA),\mu_\ffi)_+$. Then for every $a\in\cA$
one has
\[
\rho(a)=\int g_\rho(\omega)\omega(a)\,d\mu_\ffi(\omega)
=\<\xi_\ffi,\theta_\ffi(g_\rho)\pi_\ffi(a)\xi_\ffi\>
\]
thanks to \eqref{F-4.7}. Hence the normal extension of $\rho$ to $\pi_\ffi(\cA)''$ is given as
\[
\tilde\rho(x)=\<\xi_\ffi,\theta_\ffi(g_\rho)x\xi_\ffi\>,\qquad x\in\pi_\ffi(\cA)'',
\]
which implies that
\begin{align*}
\tilde\rho(\theta_\ffi(f))&=\<\xi_\ffi,\theta_\ffi(g_\rho f)\xi_\ffi\>
=\int g_\rho(\omega)f(\omega)\,d\mu_\ffi(\omega)\quad\mbox{(by \eqref{F-4.7})} \\
&=\int f(\omega)\,d\mu_\rho(\omega),\qquad f\in L^\infty(\cS_G(\cA),\mu_\ffi).
\end{align*}
Hence $\tilde\rho\circ\theta_\ffi=\mu_\rho$ follows. Similarly, replacing $\rho,g_\rho$ with
$\sigma,g_\sigma$ in the above argument, one has $\tilde\sigma\circ\theta_\ffi=\mu_\sigma$ for
the normal extension $\tilde\sigma$ of $\sigma$ to $\pi_\ffi(\cA)''$. From Proposition \ref{P-4.5}\,(2)
and Theorem \ref{T-4.3} it follows that
\begin{equation}\label{F-4.8}
\begin{aligned}
D_\alpha(\mu_\rho\|\mu_\sigma)&\le D_\alpha^*(\tilde\rho\|\tilde\sigma)=D_\alpha^*(\rho\|\sigma),
\qquad\alpha\in[1/2,+\infty)\setminus\{1\}, \\
D_\alpha(\mu_\rho\|\mu_\sigma)&\le D_\alpha(\tilde\rho\|\tilde\sigma)=D_\alpha(\rho\|\sigma),
\qquad\alpha\in[0,2]\setminus\{1\}.
\end{aligned}
\end{equation}
Therefore, when $\cA$ is $G$-central and $\rho,\sigma\in\cS_G(\cA)$, it follows from \eqref{F-4.6}
and \eqref{F-4.8} that $D_\alpha^*(\rho\|\sigma)=D_\alpha(\mu_\rho\|\mu_\sigma)$ for all
$\alpha\in[1/2,+\infty)\setminus\{1\}$ and $D_\alpha(\rho\|\sigma)=D_\alpha(\mu_\rho\|\mu_\sigma)$
for all $\alpha\in[0,2]\setminus\{1\}$, showing also $D(\rho\|\sigma)=D(\mu_\rho\|\mu_\sigma)$
as shown in \cite[Theorem 3.2]{HOT}.
\end{example}

\begin{example}\label{E-4.8}\rm
Let $\B=\bigotimes_1^\infty\bM_d$ be a UHF (or one-dimensional spin) $C^*$-algebra, and
$\tau$ be the action of $S_\infty$ (the group of finite permutations on $\bN$) given by
$\tau\bigl(\bigotimes_1^\infty a_n\bigr)=\bigotimes_1^\infty a_{g(n)}$, $g\in S_\infty$. Let $K$ be
a compact group and $u_k$ ($k\in K$) a continuous unitary representation of $K$ on $\bC^d$,
so a product action $\beta$ of $K$ on $\B$ is defined by
$\beta_k:=\bigotimes_1^\infty\mathrm{Ad}(u_k)$ (where $\mathrm{Ad}(u_k)=u_k(\cdot)u_k^*$).
The $\beta$-fixed point $C^*$-subalgebra $\cA:=\B^\beta$ of $\B$ is called a
\emph{gauge-invariant $C^*$-algebra}. Let $\B_n:=\bigotimes_1^n\bM_d$ and
$\cA_n:=\B_n^\beta=\cA\cap\B_n$ for $n\ge1$; then $\bigcup_{n=1}^\infty\cA_n$ is norm-dense
in $\cA$ (so $\cA$ is an AF $C^*$-algebra.) Then $(\B,S_\infty,\tau)$ and $(\cA,S_\infty,\tau|_\cA)$
are typical cases of Example \ref{E-4.7}, since $\B$ and $\cA$ are asymptotically abelian with
respect to $S_\infty$. St\o rmer's theorem \cite{Stor} says that the extreme points of
$\cS_{S_\infty}(\B)$ are the symmetric product states $\bigotimes_1^\infty\psi$,
$\psi\in\cS(\bM_d)$. For every $\rho,\sigma\in\cS_{S_\infty}(\cA)$, by Proposition \ref{P-4.5}\,(4)
and Example \ref{E-4.7} one has
\begin{equation}\label{F-4.9}
\begin{aligned}
D_\alpha^*(\rho\|\sigma)&=\lim_{n\to\infty}D_\alpha^*(\rho|_{\cA_n}\|\sigma|_{\cA_n})
=D_\alpha(\mu_\rho\|\mu_\sigma),\qquad\alpha\in[1/2,+\infty)\setminus\{1\}, \\
D_\alpha(\rho\|\sigma)&=\lim_{n\to\infty}D_\alpha(\rho|_{\cA_n}\|\sigma|_{\cA_n})
=D_\alpha(\mu_\rho\|\mu_\sigma),\qquad\alpha\in[0,2]\setminus\{1\},
\end{aligned}
\end{equation}
where $\mu_\rho,\mu_\sigma$ are the representing measures of $\rho,\sigma$ on the extreme
boundary of $\cS_{S_\infty}(\cA)$. In particular, it follows that
$D_\alpha^*(\rho\|\sigma)=D_\alpha(\rho\|\sigma)$ for any $\rho,\sigma\in\cS_{S_\infty}(\cA)$
and all $\alpha\in[1/2,2]\setminus\{1\}$ even though
$D_\alpha^*(\rho|_{\cA_n}\|\sigma|_{\cA_n})<D_\alpha(\rho|_{\cA_n}\|\sigma|_{\cA_n})$ unless
$\rho|_{\cA_n}$ and $\sigma|_{\cA_n}$ commute. This phenomenon can naturally be understood
by considering the asymptotic abeliannes of $\cA$ with respect to $S_\infty$. Now let $\cT(\cA)$
denote the set of tracial states on $\cA$. Recall \cite[Theorem 3.2]{Pr} that any extreme point of
$\cT(\cA)$ is the restriction to $\cA$ of a symmetric product state of $\B$, so that
$\cT(\cA)\subseteq\cS_{S_\infty}(\cA)$. Furthermore, note (see \cite[Sec.~4]{Ta-Wi}) that
$\cT(\cA)$ is a face of $\cS_{S_\infty}(\cA)$. For every $\rho,\sigma\in\cT(\cA)$,
since tracial states $\rho,\sigma$ commute, one has
$D_\alpha^*(\rho\|\sigma)=D_\alpha(\rho\|\sigma)$ for all $\alpha\in[1/2,+\infty)\setminus\{1\}$ by
\cite[Remark 3.18\,(2)]{Hi}, and hence by \eqref{F-4.9},
\begin{align}\label{F-4.10}
D_\alpha(\rho\|\sigma)=\lim_{n\to\infty}D_\alpha(\rho|_{\cA_n}\|\sigma|_{\cA_n})
=D_\alpha(\mu_\rho\|\mu_\sigma),\qquad\alpha\in[0,+\infty)\setminus\{1\},
\end{align}
where the representing measures $\mu_\rho,\mu_\sigma$ are supported on the extreme
boundary of $\cT(\cA)$ . For example, let $d=2$,
$K=\bT=\{\zeta\in\bC:|\zeta|=1\}$ and $u_\zeta:=\begin{bmatrix}1&0\\0&\zeta\end{bmatrix}$,
$\zeta\in\bT$. Then $\cA=\B^\beta$ is the so-called \emph{GICAR algebra}, and note that
\[
\cA_n=\bigoplus_{k=0}^n\bM_{n\choose k},\qquad n\ge1.
\]
It is well known that the extreme boundary of $\cT(\cA)$ is parametrized by $\lambda\in[0,1]$;
to be precise, it is $\{\omega_\lambda:\lambda\in[0,1]\}$, where
$\omega_\lambda=\bigl(\bigotimes_1^\infty\psi_\lambda\bigr)|_{\cA}$ with
$\psi_\lambda:=\Tr\left(\begin{bmatrix}\lambda&0\\0&1-\lambda\end{bmatrix}\cdot\right)$ on
$\bM_2$. Let $\rho,\sigma\in\cT(\cA)$ and decompose them as
$\rho=\int_0^1\omega_\lambda\,d\mu_\rho(\lambda)$,
$\sigma=\int_0^1\omega_\lambda\,d\mu_\sigma(\lambda)$ with unique probability measures
$\mu_\rho,\mu_\sigma$ on $[0,1]$. Since
\[
\omega_\lambda|_{\cA_n}=\bigoplus_{k=0}^n\lambda^k(1-\lambda)^{n-k}\Tr_{n\choose k}
\]
(with convention $\lambda^0=1$ for $\lambda=0$), the convergence
$\lim_{n\to\infty}Q_\alpha(\rho|_{\cA_n}\|\sigma|_{\cA_n})=Q_\alpha(\mu_\rho\|\mu_\sigma)$
due to \eqref{F-4.10} can be rewritten in an explicit form as
\begin{align*}
&\lim_{n\to\infty}\sum_{k=0}^n{n\choose k}
\biggl(\int_0^1\lambda^k(1-\lambda)^{n-k}\,d\mu_1(\lambda)\biggr)^\alpha
\biggl(\int_0^1\lambda^k(1-\lambda)^{n-k}\,d\mu_2(\lambda)\biggr)^{1-\alpha} \nonumber\\
&\qquad
=\int_0^1\biggl({d\mu_1\over d(\mu_1+\mu_2)}\biggr)^\alpha
\biggl({d\mu_2\over d(\mu_1+\mu_2)}\biggr)^{1-\alpha}\,d(\mu_1+\mu_2),
\qquad\alpha\in[0,+\infty),
\end{align*}
for any probability measures $\mu_1,\mu_2$ on $[0,1]$. (It does not seem easy to verify this
convergence formula in a direct manner.)
\end{example}

\subsection{The strong converse exponent in nuclear $C^*$-algebras}\label{Sec-4.2}

Let $\cA$ and $\rho,\sigma\in\cA_+^*$ be as in Section \ref{Sec-4.1}. For each $r\in\bR$,
we define the \emph{Hoeffding anti-divergence} $H_r^*(\rho\|\sigma)$ in the same way as in
Definition \ref{D-3.6} with use of $D_\alpha^*(\rho\|\sigma)$, i.e.,
$\widehat D_\alpha^*(\rho\|\sigma)$ in \eqref{F-4.1} or \eqref{F-4.4}. For any
$(\rho,\sigma)$-normal representation $\pi$ of $\cA$, Theorem \ref{T-4.3} gives
\begin{align}\label{F-4.11}
H_r^*(\rho\|\sigma)=H_r^*(\rho_\pi\|\sigma_\pi),\qquad r\in\bR.
\end{align}

We furthermore define strong converse exponents as in Definition \ref{D-3.5} in the $C^*$-algebra
setting. Here, recall (see, e.g., \cite[Sec.~IV.4]{Ta1}) that for $C^*$-algebras $\cA$ and $\B$, a
$C^*$-cross-norm on the algebraic tensor product $\cA\odot\B$ is not unique in general (though
it exists always), and we have the smallest one $\norm{\cdot}_{\min}$ and the largest one
$\norm{\cdot}_{\max}$. The \emph{minimal (or spatial) $C^*$-tensor product}
$\cA\otimes_{\min}\B$ is the completion of $\cA\odot\B$ with respect to $\norm{\cdot}_{\min}$,
and the \emph{maximal $C^*$-tensor product} $\cA\otimes_{\max}\B$ is that with respect to
$\norm{\cdot}_{\max}$. To discuss the simple hypothesis testing on a $C^*$-algebra, the minimal
$C^*$-tensor product is suitable for the following reasons:
When $\cA$ and $\B$ are realized in $\B(\cH_1)$ and $\B(\cH_2)$ respectively,
$\cA\otimes_{\min}\B$ is isomorphic to the $C^*$-algebra generated by $\cA\odot\B$ in
$\B(\cH_1\otimes\cH_2)$; see \cite[Theorem IV.4.9\,(iii)]{Ta1}. For any representations
$\pi_1$ of $\cA$ and $\pi_2$ of $\B$, the tensor product representation $\pi_1\otimes\pi_2$ of
$\cA\otimes_{\min}\B$ satisfies
$(\pi_1\otimes\pi_2)(\cA\otimes_{\min}\B)''=\pi_1(\cA)''\,\overline\otimes\,\pi_2(\B)''$;
see \cite[Proposition IV.4.13]{Ta1}.

Let $\rho,\sigma\in\cA_+^*$. For each $n\in\bN$ let $\cA^{\otimes n}$ ($=\cA^{\otimes_{\min}n}$)
be the $n$-fold minimal $C^*$-tensor product of $\cA$, and $\rho_n:=\rho^{\otimes n}$ (resp.,
$\sigma_n:=\sigma^{\otimes n}$) be the $n$-fold tensor product of $\rho$ (resp., $\sigma$) on
$\cA^{\otimes n}$. The following are the strong converse exponents in the setting of the simple
hypothesis testing for the null hypothesis $H_0:\rho$ versus the alternative hypothesis
$H_1:\sigma$.

\begin{definition}\label{D-4.9}\rm
For each $r\in\bR$ we define the \emph{strong converse exponents}
$\underline{sc}_r(\rho\|\sigma)$ etc.\ in the same expressions as in Definition \ref{D-3.5}, where
$\{T_n\}$ in the present setting are taken as $T_n\in\cA^{\otimes n}$, $0\le T_n\le1$, $n\in\bN$.
Then the relation \eqref{F-3.5} is obvious as before.
\end{definition}

\begin{remark}\label{R-4.10}\rm
The most general notion of a test in the $C^*$-algebra setting is not covered by tests taken in
Definition \ref{D-4.9}. For instance, one could say that a measurement is represented by a self-adjoint
element $X$ in $\cA$, and the possible outcomes are the points of the spectrum of $X$. After
measuring $X$, one could always make a classical post-processing, e.g., by dividing the spectrum
of $X$ into two disjoint sets $B_0$ and $B_1$. If the measurement outcome falls into $B_0$, we
accept $H_0$, otherwise we accept $H_1$. The error probabilities corresponding to such a test may
not be described by an element $T\in\cA$ with $0\le T\le I$. However, the next lemma shows that
considering such more general tests does not change the strong converse exponents.
\end{remark}

\begin{lemma}\label{L-4.11}
Let $\rho,\sigma\in\cA_+^*$ and $\pi$ be a $(\rho,\sigma)$-normal representation of $\cA$. For
every $r\in\bR$, let $\eps_r(\rho\|\sigma)$ be any strong converse exponent given in Definition
\ref{D-4.9}, and $\eps_r(\rho_{\pi}\|\sigma_{\pi})$ be the same strong converse exponent in
Definition \ref{D-3.5} for $\rho_{\pi},\sigma_{\pi}$. Then we have
\begin{align}\label{F-4.12}
\eps_r(\rho\|\sigma)=\eps_r(\rho_{\pi}\|\sigma_{\pi}).
\end{align}
\end{lemma}

\begin{proof}
Let $\M:=\pi(\cA)''$. For every $T_n\in\cA^{\otimes n}$, $n\in\bN$, with $0\le T_n\le1$, one has
$\pi^{\otimes n}(T_n)\in\M^{\overline\otimes n}$, $0\le\pi^{\otimes n}(T_n)\le 1$,
$\rho_{\pi,n}(\pi^{\otimes n}(T_n))=\rho_n(T_n)$ and
$\sigma_{\pi,n}(\pi^{\otimes n}(T_n))=\sigma_n(T_n)$, where
$\rho_{\pi,n}:=(\rho_\pi)^{\otimes n}$ and $\sigma_{\pi,n}:=(\sigma_\pi)^{\otimes n}$.
Hence $\eps_r(\rho_\pi\|\sigma_\pi)\le\eps_r(\rho\|\sigma)$ holds immediately.
Conversely, for every $\widetilde T_n\in\M^{\overline\otimes n}=\pi^{\otimes n}(\cA^{\otimes n})''$,
$n\in\bN$, with $0\le\widetilde T_n\le1$, by Kaplansky's density theorem and
\cite[Lemma IV.3.8]{Is} one can choose $T_n\in\cA^{\otimes n}$, $n\in\bN$, such that
$0\le T_n\le1$ and
\begin{align*}
\rho_n(T_n)=\rho_{\pi,n}(\pi^{\otimes n}(T_n))
&\in\begin{cases}
(e^{-1}\rho_{\pi,n}(\widetilde T_n),e\rho_{\pi,n}(\widetilde T_n)) &
\text{if $\rho_{\pi,n}(\widetilde T_n)>0$}, \\
[0,1/n^n) & \text{if $\rho_{\pi,n}(\widetilde T_n)=0$},
\end{cases} \\
\sigma_n(T_n)=\sigma_{\pi,n}(\pi^{\otimes n}(T_n))
&\in\begin{cases}
(e^{-1}\sigma_{\pi,n}(\widetilde T_n),e\sigma_{\pi,n}(\widetilde T_n)) &
\text{if $\sigma_{\pi,n}(\widetilde T_n)>0$}, \\
[0,1/n^n) & \text{if $\sigma_{\pi,n}(\widetilde T_n)=0$},
\end{cases}
\end{align*}
which imply that
\begin{align*}
-{1\over n}\log\rho_n(T_n)
&\in\begin{cases}
\bigl(-{1\over n}-{1\over n}\log\rho_{\pi,n}(\widetilde T_n),
{1\over n}-{1\over n}\log\rho_{\pi,n}(\widetilde T_n)\bigr) &
\text{if $\rho_{\pi,n}(\widetilde T_n)>0$}, \\
(\log n,+\infty] & \text{if $\rho_{\pi,n}(\widetilde T_n)=0$},
\end{cases} \\
-{1\over n}\log\sigma_n(T_n)
&\in\begin{cases}
\bigl(-{1\over n}-{1\over n}\log\sigma_{\pi,n}(\widetilde T_n),
{1\over n}-{1\over n}\log\sigma_{\pi,n}(\widetilde T_n)\bigr) &
\text{if $\sigma_{\pi,n}(\widetilde T_n)>0$}, \\
(\log n,+\infty] & \text{if $\sigma_{\pi,n}(\widetilde T_n)=0$}.
\end{cases}
\end{align*}
From these one can see that $\eps_r(\rho\|\sigma)\le\eps_r(\tilde\rho\|\tilde\sigma)$. Hence
\eqref{F-4.12} follows.
\end{proof}

The next theorem is the $C^*$-algebra version of Theorem \ref{T-3.7} under the assumption
of the generated von Neumann algebra being injective, in particular, when $\cA$ is nuclear. For
the convenience of the reader, we recall the notion of nuclear $C^*$-algebras in Appendix
\ref{Sec-E}. Here we restrict $\rho,\sigma$ to states on $\cA$ for the same reason as before.

\begin{theorem}\label{T-4.12}
Let $\rho,\sigma\in\cA_+^*$ be states such that $D_\alpha^*(\rho\|\sigma)<+\infty$ for some
$\alpha>1$. If there exists a $(\rho,\sigma)$-normal representation $\pi$  of $\cA$ such that
$\pi(\cA)''$ is injective, then all the equalities in \eqref{F-3.8} hold for every $r\in\bR$ in
the present $C^*$-algebra situation too. In particular, this is the case if $\cA$ is nuclear.
\end{theorem}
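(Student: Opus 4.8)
The plan is to transport the entire problem from $\cA$ to the injective von Neumann algebra $\cM:=\pi(\cA)''$ through the given $(\rho,\sigma)$-normal representation $\pi$, and then to apply Theorem \ref{T-3.7} to the normal states $\rho_\pi,\sigma_\pi\in\cM_*^+$. Three earlier results supply exactly the dictionary needed: Lemma \ref{L-4.11} matches each $C^*$-algebra strong converse exponent with its von Neumann counterpart for $\rho_\pi,\sigma_\pi$, Theorem \ref{T-4.3}\,(i) matches the sandwiched R\'enyi divergences, and \eqref{F-4.10} matches the Hoeffding anti-divergences.

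First I would verify that $\rho_\pi,\sigma_\pi$ satisfy the hypotheses of Theorem \ref{T-3.7}. The unit of $\cM$ is $\pi(1)$, so $\rho_\pi(1_\cM)=\rho_\pi(\pi(1))=\rho(1)=1$, and likewise $\sigma_\pi(1_\cM)=1$; hence $\rho_\pi,\sigma_\pi$ are states on $\cM$. The algebra $\cM$ is injective by assumption, and by Theorem \ref{T-4.3}\,(i) (with the renaming adopted after that theorem)
\[
D_\alpha^*(\rho_\pi\|\sigma_\pi)=D_\alpha^*(\rho\|\sigma)<+\infty
\]
for the given $\alpha>1$. Thus Theorem \ref{T-3.7} applies and yields, for every $r\in\bR$,
\[
\underline{sc}_r(\rho_\pi\|\sigma_\pi)=\cdots=sc_r^0(\rho_\pi\|\sigma_\pi)=H_r^*(\rho_\pi\|\sigma_\pi).
\]

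It then remains to paste in the two dictionaries. For each of the six exponents $\eps_r$ appearing in \eqref{F-3.5}, Lemma \ref{L-4.11} gives $\eps_r(\rho\|\sigma)=\eps_r(\rho_\pi\|\sigma_\pi)$, while \eqref{F-4.10} gives $H_r^*(\rho_\pi\|\sigma_\pi)=H_r^*(\rho\|\sigma)$. Chaining these with the displayed equalities above establishes all of \eqref{F-3.8} in the present $C^*$-setting. For the final assertion I would observe that when $\cA$ is nuclear the universal representation $\pi_u$ already meets the hypothesis: it is $(\rho,\sigma)$-normal by construction, since its normal extensions are precisely $\overline\rho,\overline\sigma$, and $\pi_u(\cA)''\cong\cA^{**}$ is injective exactly because $\cA$ is nuclear, as recalled just before the theorem.

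No step here presents a genuine obstacle; the entire content is carried by the earlier results, and the argument is essentially a bookkeeping of equalities. The only point demanding a little care is confirming the finiteness $D_\alpha^*(\rho_\pi\|\sigma_\pi)<+\infty$, which is what allows Theorem \ref{T-3.7} to be invoked and which rests on the representation-independence of $D_\alpha^*$ furnished by Theorem \ref{T-4.3}.
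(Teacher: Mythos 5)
Your proposal is correct and follows essentially the same route as the paper's own proof: apply Theorem \ref{T-3.7} to $\rho_\pi,\sigma_\pi$ on the injective algebra $\pi(\cA)''$, then transfer back via Lemma \ref{L-4.11} and \eqref{F-4.10}. Your extra checks (that $\rho_\pi,\sigma_\pi$ are states, that finiteness of $D_\alpha^*$ transfers by Theorem \ref{T-4.3}, and that nuclearity makes $\pi_u(\cA)''\cong\cA^{**}$ injective) are exactly what the paper leaves implicit, so the argument matches in substance.
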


\begin{proof}
Let $\pi$ be as stated in the theorem. Then Theorem \ref{T-3.7} says that the equalities in
\eqref{F-3.8} for $\rho_\pi,\rho_\sigma$ in place of $\rho,\sigma$ hold for every $r\in\bR$.
Hence the assertion follows from Lemma \ref{L-4.11} and \eqref{F-4.11}.
\end{proof}

We remark that the same formula as in Theorem \ref{T-3.9} holds true under the assumption
of Theorem \ref{T-4.12}, where $\alpha_{e^{-nr}}^*(\rho_n\|\sigma_n)$ is defined as in
\eqref{F-3.14} with $T_n\in\cA^{\otimes n}$, $0\le T_n\le1$. The proof is the same as that of
Theorem \ref{T-3.9}. Also, the generalized $\kappa$-cutoff rate $C_\kappa(\rho\|\sigma)$ in
Definition \ref{D-3.10} makes sense for states $\rho,\sigma\in\cA_+^*$, and Theorem \ref{T-3.11}
is extended to the $C^*$-algebra setting with the same proof.

In the rest of the section let us discuss the (regularized) measured and the (regularized)
test-measured R\'enyi divergences in the $C^*$-algebra setting. A measurement in $\cA$ is
given by $\fM=(M_j)_{1\le j\le k}$ of $M_j\in\cA_+$ such that $\sum_{j=1}^kM_j=1$. Then
$D_\alpha^\meas(\rho\|\sigma)$ and $\overline D_\alpha^\meas(\rho\|\sigma)$ for
$\rho,\sigma\in\cA_+^*$ are defined in the same way as in Definition \ref{D-3.12} by taking
measurements in $\cA$ instead of $\M$. The test-measured versions
$D_\alpha^\test(\rho\|\sigma)$ and $\overline D_\alpha^\test(\rho\|\sigma)$ for
$\rho,\sigma\in\cA_+^*$ are also defined in the same way as just after Definition \ref{D-3.12} with
tests $T\in\cA$, $0\le T\le1$.

\begin{lemma}\label{L-4.13}
Let $\rho,\sigma\in\cA_+^*$ and $\pi$ be any $(\rho,\sigma)$-normal representation of $\cA$.
Then for every $\alpha\in[1/2,+\infty)\setminus\{1\}$ we have
\begin{align}
D_\alpha^\meas(\rho\|\sigma)&=D_\alpha^\meas(\rho_\pi\|\sigma_\pi),\qquad
\overline D_\alpha^\meas(\rho\|\sigma)=\overline D_\alpha^\meas(\rho_\pi\|\sigma_\pi),
\label{F-4.13}\\
D_\alpha^\test(\rho\|\sigma)&=D_\alpha^\test(\rho_\pi\|\sigma_\pi),\ \ \,\qquad
\overline D_\alpha^\test(\rho\|\sigma)=\overline D_\alpha^\test(\rho_\pi\|\sigma_\pi),
\label{F-4.14}
\end{align}
\end{lemma}

\begin{proof}
Let $\M:=\pi(\cA)''$. If $(M_i)_{1\le i\le k}$ is a measurement in $\cA$, then
$(\pi(M_i))_{1\le i\le k}$ is a measurement in $\M$. Hence
$D_\alpha^\meas(\rho\|\sigma)\le D_\alpha^\meas(\rho_\pi\|\sigma_\pi)$ holds immediately.
Conversely, let $\widetilde\fM=(\widetilde M_i)_{1\le i\le k}$ be a measurement in $\M$.
Consider the representation $\pi_k=\pi\otimes\mathrm{id}_k$ of $\bM_k(\cA)=\cA\otimes\bM_k$
on $\cH\otimes\bC^k$ (where $\bM_k$ is the $k\times k$ matrix algebra and $\cH$ is the
representing Hilbert space of $\pi$). Noting that $\pi_k(\bM_k(\cA))''=\bM_k(\pi(\cA))''=\bM_k(\M)$,
we define $\tilde A\in\bM_k(\M)$ by
\[
\tilde A:=\begin{bmatrix}
\widetilde M_1^{1/2}&\\
\vdots&\ \ \mbox{\huge{0}}\quad\\
\widetilde M_k^{1/2}&
\end{bmatrix}.
\]
Note that $\tilde A^*\tilde A$ is a projection with the $(1,1)$-block $1$ and all other blocks
$0$. Hence $\tilde A$ is a contraction. By Kaplansky's density theorem and \cite[Lemma IV.3.8]{Is}
there exists a net $\{A^{(\lambda)}\}$ of contractions in $\bM_k(\cA)$ such that
$\pi_k\bigl(A^{(\lambda)}\bigr)\to\tilde A$ in the strong* topology. Write
$A^{(\lambda)}=\bigl[a_{ij}^{(\lambda)}\bigr]_{i,j=1}^k$ with $a_{ij}^{(\lambda)}\in\cA$; then
$\pi_k\bigl(A^{(\lambda)}\bigr)=\bigl[\pi(a_{ij}^{(\lambda)})\bigr]_{i,j=1}^k$ so that
$\pi(a_{i1}^{(\lambda)})\to\widetilde M_i^{1/2}$ in the strong* topology for $1\le i\le k$. Set
$M_i^{(\lambda)}:=\bigl(a_{i1}^{(\lambda)}\bigr)^*a_{i1}^{(\lambda)}\in\cA_+$, $1\le i\le k$. Then
$\pi\bigl(M_i^{(\lambda)}\bigr)\to\widetilde M_i$ strongly for $1\le i\le k$, and
$\sum_{i=1}^kM_i^{(\lambda)}=\sum_{i=1}^k\bigl(a_{i1}^{(\lambda)}\bigr)^*a_{i1}^{(\lambda)}$ is
the $(1,1)$-block of $\bigl(A^{(\lambda)}\bigr)^*A^{(\lambda)}$ so that
$\sum_{i=1}^kM_i^{(\lambda)}\le1$. Therefore, one can define a net $\{\fM_\lambda\}$ of
measurements in $\cA$ by $\fM_\lambda=\bigl(M_i^{(\lambda)}\bigr)_{1\le i\le k+1}$, where
$M_{k+1}^{(\lambda)}:=1-\sum_{i=1}^kM_i^{(\lambda)}$. Since
$\pi\bigl(M_{k+1}^{(\lambda)}\bigr)\to 1-\sum_{i=1}^k\widetilde M_i=0$, we have
\[
\fM_\lambda(\rho)=\bigl(\rho_\pi\circ\pi\bigl(M_i^{(\lambda)}\bigr)\bigr)_{1\le i\le k+1}
\,\longrightarrow\,\widetilde\fM(\rho_\pi)\oplus0,
\]
and similarly $\fM_\lambda(\sigma)\to\widetilde\fM(\sigma_\pi)\oplus0$. From the lower
semi-continuity of (classical) R\'enyi divergence, it follows that
\[
D_\alpha(\widetilde\fM(\rho_\pi)\|\widetilde\fM(\sigma_\pi))
\le\liminf_\lambda D_\alpha(\fM_\lambda(\rho)\|\fM_\lambda(\sigma))
\le D_\alpha^\meas(\rho\|\sigma).
\]
Hence $D_\alpha^\meas(\rho_\pi\|\sigma_\pi)\le D_\alpha^\meas(\rho\|\sigma)$, implying the
first equality of \eqref{F-4.13}. Moreover, applying this to $\rho^{\otimes n},\sigma^{\otimes n}$
and the representation $\pi^{\otimes n}$ of $\cA^{\otimes n}$, we have for every $n\in\bN$,
\[
D_\alpha^\meas(\rho^{\otimes n}\|\sigma^{\otimes n})
=D_\alpha^\meas(\rho_\pi^{\otimes n}\|\sigma_\pi^{\otimes n}),
\]
since $\rho_\pi^{\otimes n},\sigma_\pi^{\otimes n}$ are the respective normal extensions of
$\rho^{\otimes n},\sigma^{\otimes n}$ to
$\pi^{\otimes n}(\cA^{\otimes n})''=\M^{\overline\otimes n}$. Thus, the second equality of
\eqref{F-4.13} holds as well. The proof of \eqref{F-4.14} is similar (and easier) by approximating
tests in $\M$ with those in $\cA$ (similarly to the proof of Lemma \ref{L-4.11}).
\end{proof}

\begin{proposition}\label{P-4.14}
Assume that there exists a $(\rho,\sigma)$-normal representation $\pi$ of $\cA$ such that
$\pi(\cA)''$ is injective (in particular, this is the case if $\cA$ is nuclear). 
Then \eqref{F-3.17} holds for every $\alpha\in[1/2,+\infty)\setminus\{1\}$ and \eqref{F-3.18}
holds for every $\alpha>1$.
\end{proposition}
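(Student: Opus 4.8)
The plan is to reduce the entire statement to the injective von Neumann algebra case already settled in Proposition \ref{P-3.13}, using the given $(\rho,\sigma)$-normal representation $\pi$ as the bridge, in the same spirit as the proof of Theorem \ref{T-4.12} reduced the strong converse exponents to the von Neumann algebra setting. First I would set $\cM:=\pi(\cA)''$, which is injective by hypothesis, and let $\rho_\pi,\sigma_\pi\in\cM_*^+$ be the normal extensions provided by Definition \ref{D-4.2}. Applying Proposition \ref{P-3.13} directly to the injective algebra $\cM$ and the pair $\rho_\pi,\sigma_\pi$ gives, for each $\alpha\in[1/2,+\infty)\setminus\{1\}$,
\[
D_\alpha^*(\rho_\pi\|\sigma_\pi)=\overline D_\alpha^\meas(\rho_\pi\|\sigma_\pi)
=\lim_{n\to\infty}{1\over n}D_\alpha^\meas(\rho_\pi^{\otimes n}\|\sigma_\pi^{\otimes n}),
\]
and the analogous chain for the test-measured quantities when $\alpha>1$.

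Next I would transfer each of the three terms back to the $C^*$-algebra $\cA$. For the left-hand side, Theorem \ref{T-4.3}\,(i) gives $D_\alpha^*(\rho\|\sigma)=D_\alpha^*(\rho_\pi\|\sigma_\pi)$. For the regularized measured quantity, the second equality in \eqref{F-4.12} of Lemma \ref{L-4.13} gives $\overline D_\alpha^\meas(\rho\|\sigma)=\overline D_\alpha^\meas(\rho_\pi\|\sigma_\pi)$. For the limit term I would invoke Lemma \ref{L-4.13} at each tensor level: $\pi^{\otimes n}$ is a $(\rho^{\otimes n},\sigma^{\otimes n})$-normal representation of $\cA^{\otimes n}$ with $\pi^{\otimes n}(\cA^{\otimes n})''=\cM^{\overline\otimes n}$, and $\rho_\pi^{\otimes n},\sigma_\pi^{\otimes n}$ are the normal extensions of $\rho^{\otimes n},\sigma^{\otimes n}$, so the first equality in \eqref{F-4.12} yields $D_\alpha^\meas(\rho^{\otimes n}\|\sigma^{\otimes n})=D_\alpha^\meas(\rho_\pi^{\otimes n}\|\sigma_\pi^{\otimes n})$ for every $n$. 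Chaining these three identities with the displayed equalities above proves \eqref{F-3.17}. The case $\alpha>1$ of \eqref{F-3.18} then follows by the identical argument, with the test-measured equalities \eqref{F-4.13} replacing \eqref{F-4.12} and the test-measured part of Proposition \ref{P-3.13} replacing the measured part.

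Since every ingredient is an equality that has already been proved, I do not expect any analytic obstacle; the one point needing care is the bookkeeping at the level of tensor powers, namely checking that Lemma \ref{L-4.13} genuinely applies to $\rho^{\otimes n},\sigma^{\otimes n}$ through the representation $\pi^{\otimes n}$. The required identification of $\pi^{\otimes n}(\cA^{\otimes n})''$ with $\cM^{\overline\otimes n}$, and of $\rho_\pi^{\otimes n},\sigma_\pi^{\otimes n}$ with the corresponding normal extensions, is precisely what is recorded in the proof of Lemma \ref{L-4.13}, so no new difficulty should arise. Finally, when $\cA$ is nuclear, $\cA^{**}$ is injective, hence $\pi(\cA)''$ is injective for every representation $\pi$; in particular the universal representation is $(\rho,\sigma)$-normal, so the hypothesis of the proposition is automatically satisfied.
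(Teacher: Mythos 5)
Your proof is correct and follows essentially the same route as the paper, whose entire argument is the one-line reduction ``by Theorem \ref{T-4.3} and Lemma \ref{L-4.13}, apply Proposition \ref{P-3.13} to $\rho_\pi,\sigma_\pi$.'' Your additional bookkeeping at the level of tensor powers (applying the first equality of \eqref{F-4.12} to $\pi^{\otimes n}$) is exactly the verification already recorded in the proof of Lemma \ref{L-4.13}, so nothing further is needed.
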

\begin{proof}
By Theorem \ref{T-4.3} and Lemma \ref{L-4.13}, one can apply Proposition \ref{P-3.13} to 
$\rho_\pi,\sigma_\pi$ to obtain the assertion. 
\end{proof}

\section*{Acknowledgments}
The work of M.M.\ was partially funded by the National Research, Development and
Innovation Office of Hungary via the research grants K124152 and KH129601, and by the
Ministry of Innovation and Technology and the National Research, Development and Innovation
Office within the Quantum Information National Laboratory of Hungary.

\appendix

\section{Relative modular operators}\label{Sec-A}
\label{sec:relmodop}

Let $\M$ be a general von Neumann algebra with the predual $\M_*$, and $\M_*^+$ be the
positive part of $\M_*$ consisting of normal positive linear functionals on $\M$. We consider
$\M$ in its \emph{standard form} $(\M,\cH,J,\cP)$ \cite{Ha}, that is, $\M$ is represented on a
Hilbert space $\cH$ with the \emph{modular conjugation} (a conjugate-linear involution) $J$ and
the \emph{natural cone} (a self-dual cone) $\cP$, satisfying the following properties:
\begin{itemize}
\item[(1)] $JMJ=M'$ ($M'$ being the commutant of $\M$),
\item[(2)] $JxJ=x^*$, $x\in\M\cap\M'$ (the center of $\M$),
\item[(3)] $J\xi=\xi$, $\xi\in\cP$,
\item[(4)] $xJxJ\cP\subseteq\cP$, $x\in\M$.
\end{itemize}
Any von Neumann algebra has a unique (up to unitary conjugation) standard form; see
\cite[Theorem 2.3]{Ha}. Any $\sigma\in\M_*^+$ has a unique \emph{vector representative}
$\xi_\sigma$ in $\cP$ so that $\sigma(x)=\<\xi_\sigma,x\xi_\sigma\>$, $x\in\M$. The support
$s(\sigma)=s_\M(\sigma)\in\M$ of $\sigma$ is the orthogonal projection onto
$\overline{\M'\xi_\sigma}$, while the $\M'$-support $s_{\M'}(\sigma)\in\M'$ is the orthogonal
projection onto $\overline{\M\xi_\sigma}$ so that $s_{\M'}(\sigma)=Js_\M(\sigma)J$.

For any $\rho,\sigma\in\M_*^+$, the closable conjugate-linear operators $S_{\rho,\sigma}$ and
$F_{\rho,\sigma}$ are defined by
\begin{align*}
S_{\rho,\sigma}(x\xi_\sigma+\eta)&:=s_\M(\sigma)x^*\xi_\sigma,\ \ \qquad
x\in\M,\ \eta\in(1-s_{\M'}(\sigma))\cH, \\
F_{\rho,\sigma}(x'\xi_\sigma+\zeta)&:=s_{\M'}(\sigma)x'^*\xi_\sigma,\qquad
x'\in\M',\ \zeta\in(1-s_\M(\sigma))\cH,
\end{align*}
for which $S_{\rho,\sigma}^*=\overline F_{\rho,\sigma}$. The \emph{relative modular operator}
$\Delta_{\rho,\sigma}$ \cite{Ar2} is
\[
\Delta_{\rho,\sigma}:=S_{\rho,\sigma}^*\overline S_{\rho,\sigma}
\]
and the polar decomposition of $\overline S_{\rho,\sigma}$ is given as
$\overline S_{\rho,\sigma}=J\Delta_{\rho,\sigma}^{1/2}$. When $\rho=\sigma$,
$\Delta_{\sigma,\sigma}$ is the \emph{modular operator} $\Delta_\sigma$.

When $\M=B(\cH)$ on a Hilbert space $\cH$, consider the Hilbert--Schmidt class $\cC_2(\cH)$
with the Hilbert--Schmidt inner product $\<X,Y\>_\HS:=\Tr(X^*Y)$, $X,Y\in\cC_2(\cH)$; then
the standard form of $B(\cH)$ is given as
\[
(B(\cH),\cC_2(\cH), J=\,^*,\cC_2(\cH)_+),
\]
where $B(\cH)$ is represented on $\cC_2(\cH)$ by the left multiplications $L_AX:=AX$ for
$A\in B(\cH)$, $X\in\cC_2(\cH)$, and $\cC_2(\cH)_+:=\{X\in\cC_2(\cH):X\ge0\}$. Each
$\rho\in B(\cH)_*^+$ is identified with a trace-class operator $\hat\rho\ge0$ so that
$\rho(X)=\Tr(\hat\rho X)=\<\hat\rho^{1/2},X\hat\rho^{1/2}\>_\HS$, $X\in B(\cH)$, and
$\hat\rho^{1/2}\in\cC_2(\cH)_+$ is the vector representative of $\rho$. 
For $\rho,\sigma\in B(\cH)_*^+$ the relative modular operator $\Delta_{\rho,\sigma}$ is written
as $\Delta_{\rho,\sigma}=L_{\hat\rho}R_{\hat\sigma^{-1}}$, where $\hat\sigma^{-1}$ is the
generalized inverse (i.e., the inverse with restriction to the support $s(\sigma)\cH$) of
$\hat\sigma$ and $R_{\hat\sigma^{-1}}$ is the right multiplication by $\hat\sigma^{-1}$. 
Of course, when $\dim\cH<+\infty$, we have $\cC_2(\cH)=B(\cH)$.

\section{Haagerup's $L^p$-spaces}\label{Sec-B}

Assume that $\M$ is $\sigma$-finite, i.e., there exists a faithful $\omega\in\M_*^+$.
Let us denote by$\N$ the crossed product $\M\rtimes_\omega\bR$ of $\M$ by the
\emph{modular automorphism group}
$\sigma_t^\omega=\Delta_\omega^{it}(\cdot)\Delta_\omega^{-it}$, $t\in\bR$. Le $\theta_s$,
$s\in\bR$, be the \emph{dual action} of $\N$ so that
$\tau\circ\theta_s=e^{-s}\tau$, $s\in\bR$, where $\tau$ is the \emph{canonical trace} on $\N$;
the crossed product construction was developed in the structure theory of von Neumann algebras
\cite{Ta4}. Let $\widetilde\N$ denote the space of \emph{$\tau$-measurable operators} \cite{Ne,Te}
affiliated with $\N$. For each $p\in(0,+\infty]$, \emph{Haagerup's $L^p$-space} $L^p(\M)$
\cite{Te} is defined by
\[
L^p(\M):=\{x\in\widetilde\N:\theta_s(x)=e^{-s/p}x,\,s\in\bR\}
\]
(in particular, $L^\infty(\M)=\M$), whose positive part is
$L^p(\M)_+:=L^p(\M)\cap\widetilde\N_+$. There exists an order isomorphism
$\M_*\cong L^1(\M)$, given as $\psi\in\M_*\mapsto h_\psi\in L^1(\M)$, so that
$\tr(h_\psi):=\psi(\1)$, $\psi\in\M_*$, defines a positive linear functional $\tr$ on $L^1(\M)$. For
$1\le p<+\infty$ the $L^p$-norm $\|a\|_p$ of $a\in L^p(\M)$ is given by $\|a\|_p:=\tr(|a|^p)^{1/p}$,
and the $L^\infty$-norm $\norm{\cdot}_\infty$ is the operator norm on $\M$. For $1\le p<+\infty$,
$L^p(\M)$ is a Banach space with the norm $\norm{\cdot}_p$, whose dual Banach space is
$L^q(\M)$, where $1/p+1/q=1$, by the duality
\[
(a,b)\in L^p(\M)\times L^q(\M)\,\longmapsto\,\tr(ab)\ (=\tr(ba)).
\]
In particular, $L^2(\M)$ is a Hilbert space with the inner product $\<a,b\>=\tr(a^*b)$ ($=\tr(ba^*)$).
Then
\[
(\M,L^2(\M),J=\,^*,L^2(\M)_+)
\]
becomes a \emph{standard form} of $\M$, where $\M$ is represented on $L^2(\M)$ by the left
multiplication. Each $\rho\in\M_*^+$ is represented as
\[
\rho(x)=\tr(xh_\rho)=\<h_\rho^{1/2},xh_\rho^{1/2}\>,\qquad x\in\M,
\]
with the vector representative $h_\rho^{1/2}\in L^2(\M)_+$. Note that the support projection
$s(\rho)$ ($\in\M$) of the functional $\rho$ coincides with that of the operator $h_\rho$. For any
projection $e\in\M$, Haagerup's $L^p$-space $L^p(e\M e)$ is identified with $eL^p(\M)e$ and
the standard form of $e\M e$ is given by $(e\M e,eL^2(\M)e,J=\,^*,eL^2(\M)_+e)$.

Note that $L^p(\M)$ is independent (up to isometric isomorphism) of the choice of $\omega$
(where $\omega$ can be a faithful normal semifinite \emph{weight} unless $\M$ is
$\sigma$-finite), and that when $\M$ is semifinite with a faithful normal semifinite trace $\tau_0$,
$L^p(\M)$ can be identified with the tracial $L^p$-space $L^p(\M,\tau_0)$ (see, e.g., \cite{Ne}).
In particular, when $\M=B(\cH)$ with $\omega=\Tr$ (and so $\Delta_\omega=\1$), note that
$\N=\M\overline\otimes L^\infty(\bR)$ on $\cH\otimes L^2(\bR)$ and the canonical trace on
$\N$ is $\tau=\Tr\otimes\int_\bR(\cdot)e^t\,dt$, so that $L^p(\M)=\cC_p(\cH)\otimes e^{-t/p}$
with $\|X\otimes e^{-t/p}\|_{L^p(\M)}=\|X\|_p$ for $X\in\cC_p(\cH)$. Here, the symbol $e^{-t/p}$ is
used to denote the multiplication operator on $L^2(\bR)$, and $\cC_p(\cH)$ is the
Schatten--von Neumann $p$-class with $\|X\|_p:=(\Tr\,|X|^p)^{1/p}$. Therefore, $L^p(\M)$
coincides with $\cC_p(\cH)$ by just neglecting the superfluous tensor factor $e^{-t/p}$; see
\cite[Remark 8.16, Example 9.11]{Hi6} for more details on this matter.

It might be instructive to note that Haagerup's $L^p(\M)$ is different from the tracial
$L^p$-space $L^p(\N,\tau)$ with the canonical trace $\tau$, even when $\M=B(\cH)$. In this case,
$L^p(\M)=\cC_p(\cH)\otimes e^{-t/p}$ as stated above, and for every $X\in\cC_p(\cH)$,
\[
\|X\otimes e^{-t/p}\|_{L^p(\N,\tau)}=\|X\|_p\Bigl(\int_\bR(e^{-t/p})^pe^t\,dt\Bigr)^{1/p}
=\|X\|_p\Bigl(\int_\bR dt\Bigr)^{1/p}=+\infty
\]
unless $X=0$. However, in the general case of $\M$, the exact relation of elements in $L^p(\M)$
with the canonical trace $\tau$ on $\N$ is expressed as follows: for every $a\in L^p(\M)$ and
$p\in(0,+\infty)$,
\[
\mu_t(a)=t^{-1/p}\|a\|_p,\qquad t>0,
\]
where $\mu_t(a)$ is the $t$th \emph{generalized $s$-number} of $a$ with respect to $\tau$; see
\cite[Lemma 4.8]{FK} and \cite[Lemma 9.14]{Hi6}. The above expression is sometimes useful
though it is not used in this paper.

\section{Kosaki's interpolation $L^p$-spaces}\label{Sec-C}

Assume that $\M$ is $\sigma$-finite and let a faithful $\omega\in\M_*^+$ be given with
$h_{\omega}\in L^1(\M)_+$. Consider an embedding $\M$ into $L^1(\M)$ by
$x\mapsto h_\omega^{1/2}xh_\omega^{1/2}$. Defining
$\|h_\omega^{1/2}xh_\omega^{1/2}\|_\infty:=\|x\|_\infty$ on $h_\omega^{1/2}\M h_\omega^{1/2}$
we have a pair $(h_\omega^{1/2}\M h_\omega^{1/2},L^1(\M))$ of compatible Banach spaces
(see, e.g., \cite{BL}). For $1<p<+\infty$ \emph{Kosaki's (symmetric) $L^p$-space}
$L^p(\M,\omega)$ \cite{Ko1} with respect to $\omega$ is the complex interpolation Banach space
\[
C_{1/p}(h_\omega^{1/2}\M h_\omega^{1/2},L^1(\M))
\]
equipped with the interpolation norm $\norm{\cdot}_{p,\omega}$ ($=\norm{\cdot}_{C_{1/p}}$)
\cite{BL}. Moreover, $L^1(\M,\omega):=L^1(\M)$ with $\norm{\cdot}_{1,\omega}=\norm{\cdot}_1$
and $L^\infty(\M,\omega):=h_\omega^{1/2}\M h_\omega^{1/2}$ ($\cong\M$) with
$\norm{\cdot}_{\infty,\omega}=\norm{\cdot}_\infty$. Kosaki's theorem \cite[Theorem 9.1]{Ko1} says
that for every $p\in[1,+\infty]$ and $1/p+1/q=1$,
\begin{align}
&L^p(\M,\omega)=h_\omega^{1\over2q}L^p(\M)h_\omega^{1\over2q}\ (\subseteq L^1(\M)),
\label{F-C.1}\\
&\|h_\omega^{1\over2q}ah_\omega^{1\over2q}\|_{p,\omega}=\|a\|_p,\qquad a\in L^p(\M),
\label{F-C.2}
\end{align}
that is, $L^p(\M)\cong L^p(\M,\omega)$ by the isometry
$a\mapsto h_\omega^{1\over2q}ah_\omega^{1\over2q}$. Interpolation $L^p$-spaces were
introduced in \cite{Ko1} in terms of more general embeddings
$x\in\M\mapsto h_\omega^\eta xh_\omega^{1-\eta}\in L^1(\M)$ with $0\le\eta\le1$. (The
$\eta=1/2$ case is the above symmetric $L^1(\M,\omega)$.) When $\M$ is general and
$\omega\in\M_*^+$ is not faithful with the support projection $e:=s(\omega)\in\M$, Kosaki's
$L^p$-space $L^p(\M,\omega)$ with respect to $\omega$ is still defined over $e\M e$ so that
\eqref{F-C.1} and \eqref{F-C.2} hold with $eL^p(\M)e$ in place of $L^p(\M)$.

Consider now the special case $\M=B(\cH)$, and let $\omega\in B(\cH)_*^+$ be given with
$e:=s(\omega)$ and $\hat\omega\in\cC_1(\cH)_+$ representing $\omega$. When $1\le p\le+\infty$
and $1/p+1/q=1$, Kosaki's $L^p$-space with respect to $\omega$ is
$L^p(B(\cH),\omega)=\hat\omega^{1\over2q}\cC_p(\cH)\hat\omega^{1\over2q}$ with
$\|\hat\omega^{1\over2q}A\hat\omega^{1\over2q}\|_{p,\omega}=\|A\|_p$ for $A\in e\cC_p(\cH)e$
(where $\cC_\infty(\cH)=B(\cH)$). In particular, when $\dim\cH<+\infty$,
$L^p(B(\cH),\omega)=eB(\cH)e=B(e\cH)$ and the interpolation $L^p$-norm is
$\|A\|_{p,\omega}=\|\hat\omega^{-{1\over2q}}A\hat\omega^{-{1\over2q}}\|_p$ for any
$A\in B(e\cH)$. The interpolation norm in the finite-dimensional case was used in \cite{Be} for
instance.

\section{Generalized conditional expectations}\label{Sec-D}

Let $\M$ and $\N$ be ($\sigma$-finite) von Neumann algebras, with standard forms
$(\M,\cH,J,\cP)$ and $(\N,\cH_0,J_0,\cP_0)$, respectively (see Appendix \ref{Sec-A}). Let
$\Phi:\N\to\M$ be a unital positive map. Let a faithful $\omega\in\M_*^+$ be given, and
assume that $\omega\circ\Phi$ is normal and faithful on $\N$.  In this case, $\Phi$ is automatically
normal and faithful (i.e., $\Phi(x^*x)=0$$\implies$$x=0$). Then it was shown in \cite{AC} that there
exists a unique unital normal positive map $\Phi_\omega^*:\M\to\N$ such that
\begin{align}\label{F-D.1}
\<Jx\Omega,\Phi(y)\Omega\>=\<J_0\Phi_\omega^*(x)\Omega_0,y\Omega_0\>,
\qquad x\in\M,\ y\in\N,
\end{align}
where $\Omega\in\cP$ and $\Omega_0\in\cP_0$ are the vector representatives of $\omega$
and $\omega\circ\Phi$, respectively. The map $\Phi_\omega^*$ is also faithful. Moreover,
we have
\begin{align}\label{F-D.2}
\omega\circ\Phi\circ\Phi_\omega^*=\omega,
\end{align}
and $\Phi_\omega^*$ is completely positive if and only if so is $\Phi$. This map
$\Phi_\omega^*$ is called the \emph{$\omega$-dual map} of $\Phi$, or the
\emph{Petz recovery map} (see \cite{Pe2}), whose definition by \eqref{F-D.1} is independent of
the choice of the standard forms of $\M,\N$. In terms of Haagerup's $L^1$-elements
$h_\omega$ and $h_{\omega\circ\Phi}$ (see Appendix \ref{Sec-B}), we note
\cite[Lemma 8.3]{Hi} that the map $\Phi_\omega^*$ is determined by
\begin{align}\label{F-D.3}
\Phi_*(h_\omega^{1/2}xh_\omega^{1/2})
=h_{\omega\circ\Phi}^{1/2}\Phi_\omega^*(x)h_{\omega\circ\Phi}^{1/2},
\qquad x\in\M,
\end{align}
where $\Phi_*:L^1(\M)\to L^1(\N)$ is the predual map of $\Phi$ via
$\M_*\cong L^1(\M)$ and $\N_*\cong L^1(\N)$, i.e.,
$\Phi_*(h_\psi)=h_{\psi\circ\Phi}$, $\psi\in\M_*$. Note that the construction of
$\Phi_\omega^*$ is possible even when $\omega$ and/or $\omega\circ\Phi$ are not
faithful (see \cite[Theorem 6.1 and Lemma 8.3]{Hi}), though the above setting is sufficient for our
present purpose.
 
In particular, let $\N$ be a von Neumann subalgebra of $\M$ containing the unit of $\M$, and
$\omega\in\M_*^+$ be faithful. The $\omega$-dual map $\Phi_\omega^*$ of the injection
$\Phi:\N\hookrightarrow\M$ is called the \emph{generalized conditional expectation} with
respect to $\omega$ \cite{AC}, which we denote by $\cE_{\N,\omega}:\M\to\N$. The map
$\cE_{\N,\omega}$ is unital, normal, completely positive, and faithful. Property \eqref{F-D.2}
becomes
\[
\omega\circ\cE_{\N,\omega}=\omega.
\]
In the present case, the standard Hilbert space $\cH_0$ for $\N$ is taken as
$\cH_0=\overline{\N\Omega}$, where the vector representative $\Omega_0$ of
$\omega\circ\Phi=\omega|_\N$ is equal to $\Omega$. Let $P$ be the orthogonal projection
from $\cH=\overline{\M\Omega}$ onto $\cH_0=\overline{\N\Omega}$. In this situation, note
\cite{AC} that $\cE_{\N,\omega}=\Phi_\omega^*$ given in \eqref{F-D.1} and \eqref{F-D.3} can
be written more explicitly as
\begin{align}\label{F-D.4}
\cE_{\N,\omega}(x)=J_0PJxJPJ_0=J_0PJxJJ_0,\qquad x\in\M,
\end{align}
which is also determined by $\cE_{\N,\omega}(x)\Omega=J_0PJx\Omega$, $x\in\M$.  As is
well known \cite{Ta0}, there exists a (genuine) \emph{conditional expectation} (i.e., a norm-one
projection) $E:\M\to\N$ such that $\omega\circ E=\omega$ on $\M$, if and only if $\N$ is
globally invariant under the modular automorphism group $\sigma_t^\omega$
(see Appendix \ref{Sec-B}) of $\M$ with respect to $\omega$, i.e., $\sigma_t^\omega(\N)=\N$,
$t\in\bR$. If this is the case, $J_0=J|_{\cH_0}$ and $JP=PJ$ hold so that
$\cE_{\N,\omega}=E$. An important property of $E$ is the bimodule property
$E(axb)=aE(x)b$ for $a,b\in\N$ and $x\in\M$, which $\cE_{\N,\omega}$ does not satisfy
in general. A merit of $\cE_{\N,\omega}$ is that it always exists, while the existence of $E$ is
very restrictive as stated above.

\section{Injective von Neumann algebras and nuclear $C^*$-algebras}\label{Sec-E}
\label{sec:injective}

A von Neumann algebra $\M$ on a Hilbert space $\cH$ is \emph{injective} if and only if there
exists a (not necessarily normal) conditional expectation (i.e., a projection of norm one
\cite{Tom}) from $\B(\cH)$ onto $\M$; see, e.g., \cite[Corollary XV.1.3]{Ta3}. A fundamental
result of Connes \cite{Co} (see also \cite[Theorem XVI.1.9]{Ta3}) says that a von Neumann
algebra $\M$ of separable predual is injective if and only if $\M$ is
\emph{AFD (approximately finite dimensional)}, i.e., there exists an increasing sequence
$\{\M_j\}_{j=1}^\infty$ of finite-dimensional *-subalgebras of $\M$ such that
$\M=\bigl(\bigcup_{j=1}^\infty\M_j\bigr)''$. In \cite{Ell} the result was furthermore extended in
such a way that a (general) von Neumann algebra $\M$ is injective if and only if there is an
increasing net $\{\M_i\}_{i\in\cI}$ of finite-dimensional *-subalgebras of $\M$ with
$\M=\bigl(\bigcup_{i\in\cI}\M_i\bigr)''$. (Here, $\cA''$ denotes the double commutant, i.e., the
commutant of $\cA'$, for any $\cA\subseteq\B(\hil)$.)

Next, a $C^*$-algebra $\cA$ is said to be \emph{nuclear} if, for every $C^*$-algebra $\B$,
there is a unique $C^*$-cross-norm on $\cA\odot\B$, i.e.,
$\cA\otimes_{\min}\B=\cA\otimes_{\max}\B$; see, e.g., \cite[Chap.~XV]{Ta3}. Concerning
nuclear $C^*$-algebras, among many others, the most fundamental result is that $\cA$ is nuclear
if and only if $\cA^{**}$ is injective. Here, $\cA^{*}$ denotes the Banach space dual of $\cA$,
and $\cA^{**}$ the second Banach space dual of $\cA$. Note that $\cA^{**}$ is isometrically
isomorphic to the \emph{universally enveloping von Neumann algebra} of $\cA$, and so it is
customary to use $\cA^{**}$ to denote the latter as well; see \cite[Sec.~III.2]{Ta1}. Therefore, if
$\cA$ is nuclear, then $\pi(\cA)''$ is injective for every representation $\pi$ of $\cA$. Typical
examples of nuclear $C^*$-algebras are AF $C^*$-algebras, in particular, the compact operator
ideal $\cC(\cH)$ (or rather $\cC(\cH)+\bC1$ in our present setting; see Example \ref{E-4.6}).
Here, recall that a $C^*$-algebra $\cA$ is AF if there exists an increasing sequence
$\{\cA_k\}_{k=1}^\infty$ of finite-dimensional *-subalgebras of $\cA$ such that
$\bigcup_{k=1}^\infty\cA_k$ is norm-dense in $\cA$. More intricate examples are provided by
groups. For a discrete group $G$, the $C^*$-algebra generated by the left regular representation
on $\ell^2(G)$ is the \emph{(reduced) group $C^*$-algebra} $C_r^*(G)$, while the generated
von Neumann algebra is the \emph{group von Neumann algebra} $W^*(G)$. Then $G$ is
amenable\,$\iff$\,$C_r^*(G)$ is nuclear\,$\iff$\,$W^*(G)$ is injective.

\section{Strong converse exponent in the finite-dimensional case}\label{Sec-F}
\label{sec:finitedim}

In this appendix we assume that a von Neumann algebra $\M$ is finite-dimensional, so
$\M\subseteq\B(\cH)$ with a finite-dimensional Hilbert space $\cH$. Note that $\M$ is isomorphic
to $\bigoplus_{i=1}^m\B(\cH_i)$, a finite direct sum of finite-dimensional $\B(\cH_i)$, so it is clear
that all the arguments in \cite{MO1} are valid with $\M$ in place of $\B(\cH)$. Let $\Tr$ be the
usual trace on $\M$ (such that $\Tr(e)=1$ of all minimal projections $e\in\M$). Below, to designate
states of $\M$, we use density operators $\rho,\sigma$ with respect to $\Tr$ rather than positive
functionals. Recall that both of the relative entropy $D(\rho\|\sigma)$ and the max-relative entropy
$D_{\max}(\rho\|\sigma)$ showing up in \eqref{F-2.4}--\eqref{F-2.6} play an important role to describe
$\psi(s):=\psi^*(\rho\|\sigma|s+1)$ and $H_r^*(\rho\|\sigma)$ in \cite[Sec.~4]{MO1}.

The aim of this appendix is to give Proposition \ref{P-F.2}, which is used in Section \ref{Sec-3.2}.
The main assertion is that for finite-dimensional density operators we have
$sc_r^0(\rho\|\sigma)\le H_r^*(\rho\|\sigma)$, $r\ge 0$, which in turn can be obtained easily from
the weaker inequalities $sc_r(\rho\|\sigma)\le H_r^*(\rho\|\sigma)$, $r\ge 0$. The latter was proved
in \cite{MO1}; however, the proof contains a gap, as it is implicitly assumed there that
$D(\rho\|\sigma)<D_{\max}(\rho\|\sigma)$. Our main contribution in Proposition \ref{P-F.2} is filling
this gap; for this we give a characterization of the case $D(\rho\|\sigma)=D_{\max}(\rho\|\sigma)$,
which may be of independent interest.

\begin{lemma}\label{L-F.1}
For density operators $\rho,\sigma$ in $\M$ with $s(\rho)\le s(\sigma)$, the following conditions
are equivalent:
\begin{itemize}
\item[\rm(a)] $s\mapsto\psi(s):=\psi^*(\rho\|\sigma|s+1)$ is affine on $(0,+\infty)$;
\item[\rm(b)] $D(\rho\|\sigma)=D_{\max}(\rho\|\sigma)$;
\item[\rm(c)] $\rho$ and $\sigma$ commute, and $\rho\sigma^{-1}=\gamma s(\rho)$ for some
constant $\gamma>0$;
\item[\rm(d)] $s(\rho)\sigma=\sigma s(\rho)$ and $\rho=\gamma\sigma s(\rho)$ for some constant
$\gamma>0$.
\end{itemize}
Moreover, if the above hold, then we have $\gamma\ge1$, $D(\rho\|\sigma)=\log\gamma$ and
\begin{align}\label{F-F.1}
H_r^*(\rho\|\sigma)=(r-D(\rho\|\sigma))_+,\qquad r\ge0.
\end{align}
\end{lemma}

\begin{proof}
(a)$\iff$(b).\enspace
Since $\psi(s)$ is a differentiable convex function on $[0,+\infty)$, this is clear from
\cite[Lemma 4.2]{MO1}.

(b)$\implies$(c).\enspace
Consider $D_2(\rho\|\sigma):=\log\Tr\rho^2\sigma^{-1}$, the standard (or Petz-type) R\'enyi
$2$-divergence of $\rho,\sigma$. Note that
\[
D(\rho\|\sigma)\le D_2^*(\rho\|\sigma)\le D_2(\rho\|\sigma)\le D_{\max}(\rho\|\sigma),
\]
where the first inequality is seen from the properties noted in Section \ref{Sec-2}, the second is
due to the Araki--Lieb--Thirring inequality, and the last was shown in \cite[Lemma 7]{BD}. Hence (b)
implies that $D_2^*(\rho\|\sigma)=D_2(\rho\|\sigma)$, i.e.,
$\Tr(\sigma^{-1/4}\rho\sigma^{-1/4})^2=\Tr\,\sigma^{-1/2}\rho^2\sigma^{-1/2}$. Using
\cite[Theorem 2.1]{Hi2} we find that $\rho,\sigma$ commute. Hence, (a) says that
$\psi(s)=\log\Tr\,\rho^{s+1}\sigma^{-s}$ is affine on $(0,+\infty)$. From \cite[Lemma 3.2]{HMO} for the
commutative case, it follows that $\rho\sigma^{-1}=\gamma s(\rho)$, implying (c).

(c)$\implies$(d) is obvious.

(d)$\implies$(b).\enspace
From condition (c) it easily follows that $\gamma\ge1$ and $D_{\max}(\rho\|\sigma)=\log\gamma$.
Moreover, since
\begin{align*}
D(\rho\|\sigma)&=\Tr(\rho\log\rho-\rho\log(s(\rho)\sigma)) \\
&=\Tr(\rho\log\rho-\rho\log(\gamma^{-1}\rho))=\log\gamma,
\end{align*}
(b) follows.

Finally, if (b) and hence (a) hold, then $\psi(s)=D(\rho\|\sigma)s$ for all $s>0$, from which \eqref{F-F.1}
follows immediately.
\end{proof}

The next proposition is used in the proof of Theorem \ref{T-3.7}, while the former is a specialized case
of the latter.

\begin{proposition}\label{P-F.2}
For every density operators $\rho,\sigma$ in $\M$ with $s(\rho)\le s(\sigma)$ and any $r\ge0$ we have
\[
\underline{sc}_r(\rho\|\sigma)=sc_r^0(\rho\|\sigma)=H_r^*(\rho\|\sigma).
\]
\end{proposition}

\begin{proof}
Since $sc_r^0(\rho\|\sigma)\ge\underline{sc}_r(\rho\|\sigma)\ge H_r^*(\rho\|\sigma)$, where the second
inequality is by \cite[Lemma 4.7]{MO1}, it suffices to prove that
$sc_r^0(\rho\|\sigma)\le H_r^*(\rho\|\sigma)$, $r\ge 0$.  Moreover, the last inequality follows
if we can prove that  $sc_r(\rho\|\sigma)\le H_r^*(\rho\|\sigma)$, $r\ge 0$, since then
\begin{align*}
sc_r^0(\rho\|\sigma)\le
\inf_{r'>r}sc_r(\rho\|\sigma)
\le
\inf_{r'>r} H_{r'}^*(\rho\|\sigma)
=
H_r^*(\rho\|\sigma),
\end{align*}
where the first inequality is obvious by definition, the second inequality is 
to be proved below, and the equality follows from the fact that $r\mapsto H_r^*(\rho\|\sigma)$ is a
monotone increasing finite-valued convex function on $\bR$, whence it is also continuous.

Let us therefore prove $sc_r(\rho\|\sigma)\le H_r^*(\rho\|\sigma)$, $r\ge 0$.  The proof of
\cite[Theorem 4.10]{MO1} gives this when $D(\rho\|\sigma)<D_{\max}(\rho\|\sigma)$.
Assume thus that $D(\rho\|\sigma)=D_{\max}(\rho\|\sigma)=:D$. 
For any $r\ge 0$, the test sequence $T_{n,r}:=e^{-n(r-D)_+}s(\rho)^{\otimes n}$, $n\in\bN$,
yields 
\begin{align*}
-\frac{1}{n}\log\Tr\,\rho_nT_{n,r}=(r-D)_+=H_r\nw(\rho\|\sigma),
\end{align*}
where the last equality is by \eqref{F-F.1}, and 
\[
-\frac{1}{n}\log\Tr\,\sigma_nT_{n,r}
=-\frac{1}{n}\log\bigl( e^{-n(r-D)_+}\Tr(\sigma s(\rho))^{\otimes n}\bigr)
=D+(r-D)_+\ge r,
\]
where we have used that $\sigma s(\rho)=e^{-D}\rho$ by Lemma \ref{L-F.1}.
This proves $sc_r(\rho\|\sigma)\le H_r\nw(\rho\|\sigma)$. 
\end{proof}

\section{Boundary values of convex functions on $(0,1)$}\label{Sec-G}
\label{sec:boundary}

Let $\{\phi_i\}_{i\in\cI}$ be a set of convex functions on $(0,1)$ with values in $(-\infty,+\infty]$. Define
\[
\phi(u):=\sup_{i\in\cI}\phi_i(u),\qquad u\in(0,1),
\]
which is obviously convex on $(0,1)$ with values in $(-\infty,+\infty]$. We extend $\phi_i$ and $\phi$
to $[0,1]$ by continuity as
\begin{align*}
\phi_i(u)&:=\lim_{u\searrow0}\phi_i(u),\qquad\phi_i(1):=\lim_{u\nearrow1}\phi_i(u), \\
\phi(0)&:=\lim_{u\searrow0}\phi(u),\qquad\ \ \phi(1):=\lim_{u\nearrow1}\phi(u).
\end{align*}
We then give the next lemma to use it in the proof of Theorem \ref{T-3.7}.

\begin{lemma}\label{L-G.1}
In the situation stated above, if $\phi(u)<+\infty$ for some $u\in(0,1)$, then
\[
\phi(0)=\sup_{i\in\cI}\phi_i(0),\qquad\phi(1)=\sup_{i\in\cI}\phi_i(1).
\]
\end{lemma}

\begin{proof}
By assumption we have a $u_0\in(0,1)$ with $\phi(u_0)<+\infty$. Obviously, $\phi_i(0)\le\phi(0)$ and
$\phi_i(1)\le\phi(1)$ for all $i\in\cI$. Hence it suffices to show that $\phi(0)\le\sup_i\phi_i(0)$ and
$\phi(1)\le\sup_i\phi_i(1)$. Let us prove the first inequality (the proof of the latter is similar). Set
$\xi:=\sup_i\phi_i(0)$. If $\xi=+\infty$, the assertion holds trivially. So assume $\xi<+\infty$. By
convexity, for every $i\in\cI$ we have
\[
\phi_i(u)\le{u_0-u\over u_0}\,\xi+{u\over u_0}\,\phi(u_0),\qquad u\in(0,u_0),
\]
so that
\[
\phi(u)\le{u_0-u\over u_0}\,\xi+{u\over u_0}\,\phi(u_0),\qquad u\in(0,u_0).
\]
This implies that $\phi(0)\le\xi=\sup_i\phi_i(0)$.
\end{proof}

\section{Proof of Theorem \ref{T-4.3}}\label{Sec-H}
\label{sec:proof}

Let $\rho,\sigma\in\cA_+^*$ and $\pi$ be any $(\rho,\sigma)$-normal representation of $\cA$
with $\tilde\rho=\rho_\pi$ and $\tilde\sigma=\sigma_\pi$, the normal extensions to
$\M:=\pi(\cA)''$. Also, let $\overline\rho,\overline\sigma$ be the normal extensions of
$\rho,\sigma$ to the enveloping von Neumann algebra $\cA^{**}$ and
$\overline\pi:\cA^{**}\to\M$ be the normal extension of $\pi$ to $\cA^{**}$ (see
\cite[p.~121]{Ta1}). Let $s(\overline\pi)$ be the support projection of $\overline\pi$.
Concerning the support projections $s(\tilde\rho)$ and $s(\overline\rho)$ we have
$s(\tilde\rho)=\overline\pi(s(\overline\rho))$ with $s(\overline\rho)\le s(\overline\pi)$. Therefore,
$s(\tilde\rho)\le s(\tilde\sigma)$ is equivalent to $s(\overline\rho)\le s(\overline\sigma)$. This
means that the condition $s(\tilde\rho)\le s(\tilde\sigma)$ is independent of the choice of a
representation $\pi$ as above. (The condition is called the \emph{absolute continuity} of $\rho$
with respect to $\sigma$ \cite{Hi5}.)

Now let $\hat\pi$ be another $(\rho,\sigma)$-normal representation of $\cA$ with
$\hat\rho:=\rho_{\hat\pi}$ and $\hat\sigma:=\sigma_{\hat\pi}$, the normal extensions to
$\hat\M:=\hat\pi(\cA)''$. The next lemma is a main ingredient of the proof of Theorem \ref{T-4.3}.

\begin{lemma}\label{L-H.1}
In the situation stated above, assume that $s(\tilde\rho)\le s(\tilde\sigma)$ (hence
$s(\hat\rho)\le s(\hat\sigma)$ as well). Let $z_0,\hat z_0$ denote the central supports of
$s(\tilde\sigma),s(\hat\sigma)$, respectively. Then there exists an isomorphism
$\Lambda:\M z_0\to\hat\M\hat z_0$ for which we have
\begin{align}\label{F-H.1}
\tilde\rho(x)=\hat\rho\circ\Lambda(x),\quad\tilde\sigma(x)=\hat\sigma\circ\Lambda(x),
\qquad x\in\M z_0,
\end{align}
and for every $p\in[1,+\infty)$,
\begin{align}\label{F-H.2}
\tr(h_{\tilde\sigma}^{1/2p}xh_{\tilde\sigma}^{1/2p})^p
=\tr(h_{\hat\sigma}^{1/2p}\Lambda(x)h_{\hat\sigma}^{1/2p})^p,\qquad x\in\M z_0,
\end{align}
where $h_{\tilde\sigma}\in L^1(\M)_+$ and $h_{\hat\sigma}\in L^1(\hat\M)_+$ are Haagerup's
$L^1$-elements corresponding to $\tilde\sigma\in\M_*^+$ and $\hat\sigma\in\hat\M_*^+$,
respectively.
\end{lemma}

\begin{proof}
We will work in the standard forms
\[
(\M,L^2(\M),J=\,^*,L^2(\M)_+),\qquad(\hat\M,L^2(\hat\M),\hat J=\,^*,L^2(\hat\M)_+).
\]
For brevity we write
\[
\begin{cases}
h_0:=h_{\tilde\rho}\in L^1(\M)_+, \\
k_0:=h_{\tilde\sigma}\in L^1(\M)_+, \\
e_0:=s(\tilde\sigma)=s(k_0)\in\M, \\
e_0':=Je_0J\in\M',
\end{cases}\qquad
\begin{cases}
\hat h_0:=h_{\hat\rho}\in L^1(\hat\M)_+, \\
\hat k_0:=h_{\hat\sigma}\in L^1(\hat\M)_+, \\
\hat e_0:=s(\hat\sigma)=s(\hat k_0)\in\hat\M, \\
\hat e_0':=\hat J\hat e_0\hat J\in\hat\M'.
\end{cases}
\]
Below the proof is divided into several steps.

{\it Step 1.}\enspace
Note that
\[
\overline{\pi(\cA)k_0^{1/2}}=\overline{\M k_0^{1/2}}=L^2(\M)e_0=e_0'L^2(\M)
\]
and for every $a\in\cA$,
\[
\<k_0^{1/2},\pi(a)e_0'k_0^{1/2}\>=\<k_0^{1/2},\pi(a)k_0^{1/2}\>
=\tilde\sigma\circ\pi(a)=\sigma(a).
\]
Hence $\{\pi(\cdot)e_0',e_0'L^2(\M),k_0^{1/2}\}$ is the cyclic representation of $\cA$ with respect to
$\sigma$, and similarly $\{\hat\pi(\cdot)\hat e_0',\hat e_0'L^2(\hat\M),\hat k_0^{1/2}\}$ is the same.
By the uniqueness (up to unitary conjugation) of the cyclic representation, there exists a unitary
$V:L^2(\M)e_0\to L^2(\hat\M)\hat e_0$ such that
\begin{align}\label{F-H.3}
Vk_0^{1/2}=\hat k_0^{1/2},\qquad
V(\pi(a)e_0')V^*=\hat\pi(a)\hat e_0',\quad a\in\cA.
\end{align}
We hence have an isomorphism $V\cdot V^*:\M e_0'\to\hat\M\hat e_0'$.

{\it Step 2.}\enspace
Since $z_0$ is the central support of $e_0'$, note that $xz_0\in\M z_0\mapsto xe_0'\in\M e_0'$
($x\in\M$) is an isomorphism, and similarly so is
$\hat x\hat z_0\in\hat\M\hat z_0\mapsto\hat x\hat e_0'\in\hat \M\hat e_0'$ ($\hat x\in\hat\M$).
Hence one can define an isomorphism $\Lambda:\M z_0\to\hat\M\hat z_0$ as follows:
\begin{align}\label{F-H.4}
\Lambda:\,\M z_0\,\cong\,\M e_0'\,\cong\,\hat\M\hat e_0'
\,\cong\,\hat\M\hat z_0,\qquad
xz_0\,\mapsto xe_0'\,\mapsto\,V(xe_0')V^*=\hat x\hat e_0'\,\mapsto\,\hat x\hat z_0.
\end{align}
Note \cite[Lemma 2.6]{Ha} that the standard forms of $\M z_0$ and $\hat\M\hat z_0$ are
respectively given by
\begin{align*}
&(\M z_0,\,z_0L^2(\M)z_0=L^2(\M)z_0,\,J=\,^*,\,z_0L^2(\M)_+z_0=L^2(\M)_+z_0), \\
&(\hat\M\hat z_0,\,\hat z_0L^2(\hat\M)\hat z_0=L^2(\hat\M)\hat z_0,\,\hat J=\,^*,\,
\hat z_0L^2(\hat\M)_+\hat z_0=L^2(\hat\M)_+\hat z_0).
\end{align*}
By the uniqueness (up to unitary conjugation) of the standard form (under isomorphism)
\cite[Theorem 2.3]{Ha}, there exists a unitary $U:L^2(\M)z_0\to L^2(\hat\M)\hat z_0$ such that
\begin{align}
&\Lambda(x)=UxU^*,\qquad x\in\M z_0, \label{F-H.5}\\
&(U\xi)^*=U(\xi^*),\qquad\xi\in z_0L^2(\M)z_0. \label{F-H.6}\\
&U(L^2(\M)_+z_0)=L^2(\hat\M)_+\hat z_0. \label{F-H.7}
\end{align}

{\it Step 3.}\enspace
Since $s(h_0)\le e_0\le z_0$ by assumption, one has $h_0^{1/2},k_0^{1/2}\in L^2(\M)z_0$, and
similarly $\hat h_0^{1/2},\hat k_0^{1/2}\in L^2(\hat\M)\hat z_0$. By \eqref{F-H.7} one has
$Uh_0^{1/2},Uk_0^{1/2}\in L^2(\hat\M)_+\hat z_0$. Here we confirm that
\begin{align}\label{F-H.8}
Uh_0^{1/2}=\hat h_0^{1/2},\qquad Uk_0^{1/2}=\hat k_0^{1/2}.
\end{align}
To show this, for every $a\in\cA$ we find that
\begin{align*}
\<Uh_0^{1/2},(\hat\pi(a)\hat z_0)Uh_0^{1/2}\>
&=\<h_0^{1/2},\Lambda^{-1}(\hat\pi(a)\hat z_0)h_0^{1/2}\>\quad\mbox{(by \eqref{F-H.5})} \\
&=\<h_0^{1/2},(\pi(a)z_0)h_0^{1/2}\>\quad\mbox{(by \eqref{F-H.3} and \eqref{F-H.4}}) \\
&=\tilde\rho\circ\pi(a)=\rho(a)=\hat\rho\circ\hat\pi(a) \\
&=\<\hat h_0^{1/2},(\hat\pi(a)\hat z_0)\hat h_0^{1/2}\>,
\end{align*}
which implies that $Uh_0^{1/2}=\hat h_0^{1/2}$. The proof of $Uk_0^{1/2}=\hat k_0^{1/2}$ is similar.
By \eqref{F-H.5} and \eqref{F-H.8} we have also
\begin{align}\label{F-H.9}
\Lambda(x)\hat h_0^{1/2}=U(xh_0^{1/2}),\quad\Lambda(x)\hat k_0^{1/2}=U(xk_0^{1/2}),
\qquad x\in\M z_0.
\end{align}
These imply \eqref{F-H.1}. Furthermore, by \eqref{F-H.8} and \eqref{F-H.9} we have
$\Lambda(e_0)\hat k_0^{1/2}=Uk_0^{1/2}=\hat k_0^{1/2}$, from which $\Lambda(e_0)\ge\hat e_0$
follows. Applying the same argument to $\Lambda^{-1}(\hat x)=U^*\hat xU$
($\hat x\in\hat\M\hat z_0$) with $k_0^{1/2},\hat k_0^{1/2}$ exchanged gives
$\Lambda^{-1}(\hat e_0)\ge e_0$ as well. Therefore,
\begin{align}\label{F-H.10}
\Lambda(e_0)=\hat e_0.
\end{align}

{\it Step 4.}\enspace
We define
\[
(\Lambda^{-1})_*:L^1(\M z_0)=L^1(\M)z_0\,\to\,L^1(\hat\M\hat z_0)=L^1(\hat\M)\hat z_0
\]
by transforming $\psi\in(\M z_0)_*\mapsto\psi\circ\Lambda^{-1}\in(\hat\M\hat z_0)_*$ via
$L^1(\M z_0)\cong(\M z_0)_*$ and $L^1(\hat\M\hat z_0)\cong(\hat\M\hat z_0)_*$, that is,
$(\Lambda^{-1})_*:h_\psi\in L^1(\M z_0)\mapsto
h_{\psi\circ\Lambda^{-1}}\in L^1(\hat\M\hat z_0)$ for $\psi\in(\M z_0)_*$. Of course,
$(\Lambda^{-1})_*$ is an isometry with respect to $\norm{\cdot}_1$. Now, Kosaki's (symmetric)
interpolation $L^p$-spaces enter into our discussions. Here we confirm that
\begin{align}\label{F-H.11}
(\Lambda^{-1})_*(k_0^{1/2}xk_0^{1/2})=\hat k_0^{1/2}\Lambda(x)\hat k_0^{1/2},
\qquad x\in\M z_0.
\end{align}
Indeed, for every $x,y\in\M z_0$ we find that
\begin{align*}
\tr\bigl(\Lambda(y)(\Lambda^{-1})_*(k_0^{1/2}xk_0^{1/2})\bigr)
&=\tr(yk_0^{1/2}xk_0^{1/2})=\<(yk_0^{1/2})^*,xk_0^{1/2}\> \\
&=\<U((yk_0^{1/2})^*),U(xk_0^{1/2})\> \\
&=\<(\Lambda(y)\hat k_0^{1/2})^*,\Lambda(x)\hat k_0^{1/2}\>\quad
\mbox{(by \eqref{F-H.6} and \eqref{F-H.9})} \\
&=\tr\bigl(\Lambda(y)\hat k_0^{1/2}\Lambda(x)\hat k_0^{1/2}\bigr),
\end{align*}
which yields \eqref{F-H.11}.

{\it Step 5.}\enspace
Thanks to \eqref{F-H.11} we see that the isometry
$(\Lambda^{-1})_*:L^1(\M z_0)\to L^1(\hat\M\hat z_0)$ with respect to $\norm{\cdot}_1$ is
restricted to an isometry from $k_0^{1/2}(\M z_0)k_0^{1/2}$ (embedded into $L^1(\M z_0)$) onto
$\hat k_0^{1/2}(\hat\M\hat z_0)\hat k_0^{1/2}$ (embedded into $L^1(\hat\M\hat z_0)$) with
respect to $\norm{\cdot}_\infty$, i.e.,
\[
\|k_0^{1/2}xk_0^{1/2}\|_\infty=\|x\|=\|\Lambda(x)\|
=\|\hat k_0^{1/2}\Lambda(x)\hat k_0^{1/2}\|_\infty,\qquad x\in\M z_0.
\]
By Kosaki's construction in \cite{Ko1} (or the Riesz--Thorin theorem) it follows that
$(\Lambda^{-1})_*$ gives rise to an isometry
\begin{align*}
(\Lambda^{-1})_*:&\,L^p(\M z_0,\tilde\sigma)
=C_{1/p}((k_0^{1/2}(\M z_0)k_0^{1/2},L^1(\M z_0)) \\
&\qquad\to\,L^p(\hat\M\hat z_0,\hat\sigma)
=C_{1/p}(\hat k_0^{1/2}(\hat\M\hat z_0)\hat k_0^{1/2},L^1(\hat\M\hat z_0))
\end{align*}
with respect to the interpolation norms $\norm{\cdot}_{p,\tilde\sigma}$ and
$\norm{\cdot}_{p,\hat\sigma}$ for any $p\in[1,+\infty)$. For every $x\in\M z_0$, applying this to
$k_0^{1/2}xk_0^{1/2}$ with \eqref{F-H.11} gives
\[
\|k_0^{1/2}xk_0^{1/2}\|_{p,\tilde\sigma}=\|\hat k_0^{1/2}\Lambda(x)\hat k_0^{1/2}\|_{p,\hat\sigma}.
\]
By \cite[Theorem 9.1]{Ko1}, for every $p\in[1,+\infty)$ the above equality is rephrased as Haagerup's
$L^p$-norm equality
\[
\|k_0^{1/2p}xk_0^{1/2p}\|_p=\|\hat k_0^{1/2p}\Lambda(x)\hat k_0^{1/2p}\|_p,
\]
which is \eqref{F-H.2}, as asserted.
\end{proof}

We are now in a position to prove Theorem \ref{T-4.3}.

\begin{proof}[Proof of (i)]
We use the variational expressions in Proposition \ref{P-2.3} based on Lemma \ref{L-H.1}. Assume
first that $\alpha>1$. If
$s(\tilde\rho)\not\le s(\tilde\sigma)$, then $s(\hat\rho)\not\le s(\hat\sigma)$ (as mentioned at the
beginning of this appendix) so that both of $D_\alpha^*(\tilde\rho\|\tilde\sigma)$ and
$D_\alpha^*(\hat\rho\|\hat\sigma)$ are $+\infty$. Hence we assume that
$s(\tilde\rho)\le s(\tilde\sigma)$ (and $s(\hat\rho)\le s(\hat\sigma)$). Using \eqref{F-2.7} we have
\begin{align*}
Q_\alpha^*(\tilde\rho\|\tilde\sigma)
&=\sup_{x\in\M_+}\Bigl[\alpha\tilde\rho(x)
-(\alpha-1)\tr\bigl(h_{\tilde\sigma}^{\alpha-1\over2\alpha}xh_{\tilde
\sigma}^{\alpha-1\over2\alpha}\bigr)^{\alpha\over\alpha-1}\Bigr] \\
&=\sup_{x\in(\M z_0)_+}\Bigl[\alpha\tilde\rho(x)-(\alpha-1)
\tr\bigl(h_{\tilde\sigma}^{\alpha-1\over2\alpha}xh_{\tilde\sigma}^{\alpha-1\over2\alpha}
\bigr)^{\alpha\over\alpha-1}\Bigr]\quad\mbox{(since $s(\tilde\rho)\le s(\tilde\sigma)\le z_0$)}\\
&=\sup_{x\in(\M z_0)_+}\Bigl[\alpha\hat\rho(\Lambda(x))-(\alpha-1)
\tr\bigl(h_{\hat\sigma}^{\alpha-1\over2\alpha}\Lambda(x)h_{\hat\sigma}^{\alpha-1\over2\alpha}
\bigr)^{\alpha\over\alpha-1}\Bigr]\quad\mbox{(by Lemma \ref{L-H.1})}\\
&=\sup_{\hat x\in(\hat\M\hat z_0)_+}\Bigl[\alpha\hat\rho(\hat x)-(\alpha-1)
\tr\bigl(h_{\hat\sigma}^{\alpha-1\over2\alpha}\hat xh_{\hat\sigma}^{\alpha-1\over2\alpha}
\bigr)^{\alpha\over\alpha-1}\Bigr] \\
&=Q_\alpha^*(\hat\rho\|\hat\sigma).
\end{align*}

Next, assume that $1/2\le\alpha<1$. When $s(\tilde\rho)\le s(\tilde\sigma)$, we use \eqref{F-2.8}
as above to have
\begin{align*}
Q_\alpha^*(\tilde\rho\|\tilde\sigma)
&=\inf_{x\in(\M z_0)_{++}}\Bigl[\alpha\tilde\rho(x)+(1-\alpha)
\tr\bigl(h_{\tilde\sigma}^{1-\alpha\over2\alpha}x^{-1}h_{\tilde\sigma}^{1-\alpha\over2\alpha}
\bigr)^{\alpha\over1-\alpha}\Bigr] \\
&=\inf_{x\in(\M z_0)_{++}}\Bigl[\alpha\hat\rho(\Lambda(x))+(1-\alpha)
\tr\bigl(h_{\hat\sigma}^{1-\alpha\over2\alpha}\Lambda(x^{-1})h_{\hat\sigma}^{1-\alpha\over2\alpha}
\bigr)^{\alpha\over1-\alpha}\Bigr]\quad\mbox{(by Lemma \ref{L-H.1})}\\
&=\inf_{\hat x\in(\hat\M\hat z_0)_{++}}\Bigl[\alpha\hat\rho(\hat x)+(1-\alpha)
\tr\bigl(h_{\hat\sigma}^{1-\alpha\over2\alpha}\hat x^{-1}h_{\hat\sigma}^{1-\alpha\over2\alpha}
\bigr)^{\alpha\over1-\alpha}\Bigr]\quad\mbox{(since $\Lambda(x^{-1})=\Lambda(x)^{-1}$)} \\
&=Q_\alpha^*(\hat\rho\|\hat\sigma).
\end{align*}
For general $\rho,\sigma\in\cA_+^*$, let $\sigma_\eps:=\sigma+\eps\rho$ for any $\eps>0$. Then
$\sigma_\eps$ has the normal extensions $\tilde\sigma_\eps=\tilde\sigma+\eps\tilde\rho$ to
$\M$ and $\hat\sigma_\eps=\hat\sigma+\eps\hat\rho$ to $\hat\M$. The above case yields
$Q_\alpha^*(\tilde\rho\|\tilde\sigma_\eps)=Q_\alpha^*(\hat\rho\|\hat\sigma_\eps)$ for all $\eps>0$.
From the continuity of $Q_\alpha^*$ on $\M_*^+\times\M_*^+$ in the norm topology when
$1/2\le\alpha<1$ (see \cite[Theorem 3.16\,(3)]{Hi}), letting $\eps\searrow0$ gives
$Q_\alpha^*(\tilde\rho\|\tilde\sigma)=Q_\alpha^*(\hat\rho\|\hat\sigma)$, implying \eqref{F-4.4}.
\end{proof}

\begin{proof}[Proof of (ii)]
Assume first that $s(\tilde\rho)\le s(\tilde\sigma)$ (hence $s(\hat\rho)\le s(\hat\sigma)$). Below
let us use the same symbols as in the proof of Lemma \ref{L-H.1}. Recall \cite{Ar2} that the relative
modular operator $\Delta_{\tilde\rho,\tilde\sigma}$ is defined as
$\Delta_{\tilde\rho,\tilde\sigma}:=S_{\tilde\rho,\tilde\sigma}^*\overline{S_{\tilde\rho,\tilde\sigma}}$,
where $S_{\tilde\rho,\tilde\sigma}$ is a closable conjugate linear operator defined by
\[
S_{\tilde\rho,\tilde\sigma}(xk_0^{1/2}+\zeta):=e_0x^*h_0^{1/2},\qquad
x\in\M,\ \zeta\in(L^2(\M)e_0)^\perp.
\]
Similarly,
$\Delta_{\hat\rho,\hat\sigma}:=S_{\hat\rho,\hat\sigma}^*\overline{S_{\hat\rho,\hat\sigma}}$ is
given, where
\[
S_{\hat\rho,\hat\sigma}(\hat x\hat k_0^{1/2}+\hat\zeta):=\hat e_0\hat x^*\hat h_0^{1/2},\qquad
\hat x\in\hat\M,\ \hat\zeta\in(L^2(\hat\M)\hat e_0)^\perp.
\]
Since $s(h_0)\le e_0\le z_0$, we can consider $S_{\tilde\rho,\tilde\sigma}$ and
$\Delta_{\tilde\rho,\tilde\sigma}$ as operators on $L^2(\M)z_0$ (they are zero operators on
$(L^2(\M)z_0)^\perp$). Similarly, $S_{\hat\rho,\hat\sigma}$ and $\Delta_{\hat\rho,\hat\sigma}$
are considered on $L^2(\hat\M)\hat z_0$. Let us use an isomorphism
$\Lambda:\M z_0\to\hat\M\hat z_0$ and a unitary $U:L^2(\M)z_0\to L^2(\hat\M)\hat z_0$.
Since $\overline{\M k_0^{1/2}}=L^2(\M)e_0$ and
$\overline{\hat\M\hat k_0^{1/2}}=L^2(\hat\M)\hat e_0$, it follows from \eqref{F-H.9} that
$U(L^2(\M)e_0)=L^2(\hat\M)\hat e_0$ and hence
$U((L^2(\M)e_0)^\perp)=(L^2(\hat\M)\hat e_0)^\perp$. For every
$\hat x=\Lambda(x)\in\hat\M\hat z_0$ (with $x\in\M z_0$) and
$\hat\zeta\in(L^2(\hat\M)\hat e_0)^\perp$, we find that
\begin{align*}
S_{\hat\rho,\hat\sigma}(\hat x\hat k_0^{1/2}+\hat\zeta)
&=Ue_0U^*Ux^*U^*Uh_0^{1/2}\quad
\mbox{(by \eqref{F-H.10}, \eqref{F-H.5} and \eqref{F-H.8})} \\
&=Ue_0x^*h_0^{1/2}=US_{\tilde\rho,\tilde\sigma}(xk_0^{1/2}+U^*\hat\zeta) \\
&=US_{\tilde\rho,\tilde\sigma}(U^*\hat x\hat k_0^{1/2}+U^*\hat\zeta)\quad
\mbox{(by \eqref{F-H.9})} \\
&=US_{\tilde\rho,\tilde\sigma}U^*(\hat x\hat k_0^{1/2}+\hat\zeta).
\end{align*}
This implies that $S_{\hat\rho,\hat\sigma}=US_{\tilde\rho,\tilde\sigma}U^*$ and hence
$\Delta_{\hat\rho,\hat\sigma}=U\Delta_{\tilde\rho,\tilde\sigma}U^*$. Therefore, for every
$\alpha\in[0,+\infty)$, $\hat k_0^{1/2}$ is in $\cD(\Delta_{\hat\rho,\hat\sigma}^{\alpha/2})$ if and
only if $k_0^{1/2}=U^*\hat k_0^{1/2}$ is in $\cD(\Delta_{\tilde\rho,\tilde\sigma}^{\alpha/2})$, and
in this case,
\[
Q_\alpha(\tilde\rho\|\tilde\sigma)=\|\Delta_{\tilde\rho,\tilde\sigma}^{\alpha/2}k_0^{1/2}\|^2
=\|\Delta_{\hat\rho,\hat\sigma}^{\alpha/2}\hat k_0^{1/2}\|^2=Q_\alpha(\hat\rho\|\hat\sigma).
\]
Otherwise, $Q_\alpha(\tilde\rho\|\tilde\sigma)=Q_\alpha(\hat\rho\|\hat\sigma)=+\infty$.

Next assume that $s(\tilde\rho)\not\le s(\tilde\sigma)$, equivalently
$s(\hat\rho)\not\le s(\hat\sigma)$. Then
$Q_\alpha(\tilde\rho\|\tilde\sigma)=Q_\alpha(\hat\rho\|\hat\sigma)=+\infty$ for $\alpha>1$. When
$0\le\alpha<1$, let $\sigma_\eps:=\sigma+\eps\rho$ for every $\eps>0$. From the continuity of
$Q_\alpha$ on $\M_*^+\times\M_*^+$ \cite[Corollary 3.8]{Hi}, the above case yields
\[
Q_\alpha(\tilde\rho\|\tilde\sigma)=\lim_{\eps\searrow0}Q_\alpha(\tilde\rho\|\tilde\sigma_\eps)
=\lim_{\eps\searrow0}Q_\alpha(\hat\rho\|\hat\sigma_\eps)=Q_\alpha(\hat\rho\|\hat\sigma),
\]
implying \eqref{F-4.5}.
\end{proof}

\begin{remark}\label{R-H.2}\rm
The notion of \emph{standard $f$-divergences} $S_f(\rho\|\sigma)$ with a parametrization of
operator convex functions $f$ on $(0,+\infty)$ has been studied in \cite{Hi3} in the von Neumann
algebra setting. From the above proof of (ii) we observe that $S_f(\rho\|\sigma)$ can be extended
to $\rho,\sigma\in\cA_+^*$ as
\[
S_f(\rho\|\sigma):=S_f(\rho_\pi\|\sigma_\pi)
\]
independently of the choice of a $(\rho,\sigma)$-normal representation $\pi$ of $\cA$. Then we
can easily extend properties of $S_f(\rho\|\sigma)$ given in \cite{Hi3} to the $C^*$-algebra setting
(like Proposition \ref{P-4.5} for the sandwiched and the standard R\'enyi divergences).
\end{remark}

\addcontentsline{toc}{section}{References}


\begin{thebibliography}{99}

\bibitem{AC}
Accardi, L., Cecchini, C.:
Conditional expectations in von Neumann algebras and a
theorem of Takesaki. J. Functional Analysis {\bf 45}(2), 245--273 (1982)

\bibitem{Ar1}
Araki, H.:
Relative entropy of states of von Neumann algebras.
Publ. Res. Inst. Math. Sci. {\bf 11}(3), 809--833 (1975/76)

\bibitem{Ar2}
Araki, H.: Relative entropy for states of von Neumann algebras II.
Publ. Res. Inst. Math. Sci. {\bf 13}(1), 173--192 (1977)

\bibitem{Aud} 
Audenaert, K.M.R., Calsamiglia, J., Masanes, Ll., Munoz-Tapia, R., Acin, A.,
Bagan, E., Verstraete, F.:
Discriminating states: the quantum Chernoff bound.
Phys. Rev. Lett. \textbf{98}, 160501 (2007)
           
\bibitem{AD}
Audenaert, K.M.R., Datta, N.:
$\alpha$-$z$-R\'enyi relative entropies.
J. Math. Phys. {\bf 56}(2), 022202 (2015)

\bibitem{Be}
Beigi, S.:
Sandwiched R\'enyi divergence satisfies data processing inequality.
J. Math. Phys. {\bf 54}, 122202 (2013)

\bibitem{BL}
Bergh, J., L\"ofstr\"om, J.:
Interpolation Spaces: An Introduction.
Springer, Berlin-Heidelberg-New York (1976)

\bibitem{BFT}
Berta, M., Fawzi, O., Tomamichel, M.:
On variational expressions for quantum relative entropies.
Lett. Math. Phys. {\bf 107}(12), 2239--2265 (2017)

\bibitem{BST}
Berta, M., Scholz, V.B., Tomamichel, M.:
R\'enyi divergences as weighted non-commutative vector valued $L_p$-spaces.
Ann. Henri Poincar\'e {\bf 19}(6), 1843--1867 (2018)

\bibitem{BR}
Bratteli, O., Robinson, D.W.:
Operator Algebras and Quantum Statistical Mechanics 1, 2nd ed.
Springer, New York (1987)

\bibitem{BD}
Buscemi, F., Datta, N.:
The quantum capacity of channels with arbitrarily correlated noise.
IEEE Trans. Inform. Theory {\bf 56}(3), 1447--1460 (2010)

\bibitem{BuSuTo}
Buscemi, F., Sutter, D., Tomamichel, M.:
An information-theoretic treatment of quantum dichotomies.
Quantum {\bf 3}, 209 (2019)

\bibitem{Ch}
Choi, M.A.:
A Schwarz inequality for positive linear maps on $C^*$-algebras.
Illinois J. Math. {\bf 18}, 565--574 (1974)

\bibitem{Co}
Connes, A.:
Classification of injective factors, Cases II$_1$, II$_\infty$, III$_\lambda$, $\lambda\ne1$.
Ann. of Math. (2) {\bf 104}(1), 73--115 (1976)

\bibitem{Csiszar}
Csisz\'ar, I.:
Generalized cutoff rates and {R\'enyi's} information measures.
IEEE Trans. Inform. Theory, {\bf 41}(1), 26--34 (1995)

\bibitem{Da}
Datta, N.:
Min- and max-relative entropies and a new entanglement monotone.
IEEE Trans. Inform. Theory {\bf 55}(6), 2816--2826 (2009)

\bibitem{DKS}
Doplicher, S., Kastler, D., St\o rmer, E.:
Invariant states and asymptotic abelianness.
J. Functional Analysis {\bf 3}, 419--434 (1969)

\bibitem{Ell}
Elliott, G.A.:
On approximately finite-dimensional von Neumann algebras. II.
Canad. Math. Bull. {\bf 21}(4), 415--418 (1978)

\bibitem{FK}
Fack, T., Kosaki, H.:
Generalized $s$-numbers of $\sigma$-measurable operators,
{\it Pacific J. Math.} {\bf 123}(2), 269--300 (1986)

\bibitem{FL}
Frank, R.L., Lieb, E.H.:
Monotonicity of a relative R\'enyi entropy.
J. Math. Phys. {\bf 54}(12), 122201 (2013)

\bibitem{Ha}
Haagerup, U.:
The standard form of von Neumann algebras.
Math. Scand. {\bf 37}(2), 271--283 (1975)

\bibitem{H:pinching}
Hayashi, M.:
Optimal sequence of quantum measurements in the sense of {Stein's} lemma
in quantum hypothesis testing.
J. Phys. A: Math. Gen. {\bf 35}(50), 10759--10773 (2002)
  
\bibitem{Hayashi}
Hayashi, M.:
Error exponent in asymmetric quantum hypothesis testing and its application
to classical-quantum channel coding.
Phys. Rev. A \textbf{76}, 062301 (2007)

\bibitem{HT}
Hayashi, M., Tomamichel, M.:
Correlation detection and an operational interpretation of the R\'enyi mutual information.
J. Math. Phys. {\bf 57}(10), 102201  (2016)

\bibitem{Hi5}
Hiai, F.:
Absolute continuity and disjointness of states in $C^\ast$-dynamical systems.
J. Operator Theory {\bf 11}(2), 319--331 (1984)

\bibitem{Hi2}
Hiai, F.:
Equality cases in matrix norm inequalities of Golden-Thompson type.
Linear and Multilinear Algebra {\bf 36}(4), 239--249 (1994)
  
\bibitem{Hi3}
Hiai, F.:
Quantum $f$-divergences in von Neumann algebras I. Standard $f$-divergences.
J. Math. Phys. {\bf 59}(10), 102202 (2018)

\bibitem{Hi4}
Hiai, F.:
Quantum $f$-divergences in von Neumann algebras II. Maximal $f$-divergences.
J. Math. Phys. {\bf 60}(1), 012203 (2019)

\bibitem{Hi}
Hiai, F.:
Quantum $f$-Divergences in von Neumann Algebras: Reversibility of Quantum Operations.
Mathematical Physics Studies, Springer, Singapore (2021)

\bibitem{Hi6}
Hiai, F.:
Lectures on Selected Topics in von Neumann Algebras.
EMS Series of Lectures in Mathematics. EMS Press, Berlin (2021)

\bibitem{HM2}
Hiai, F., Mosonyi, M.:
Different quantum $f$-divergences and the reversibility of quantum operations.
Rev. Math. Phys. {\bf 29}(7), 1750023 (2017)

\bibitem{HMO}
Hiai, F., Mosonyi, M., Ogawa, T.:
Error exponents in hypothesis testing for correlated states on a spin chain.
J. Math. Phys. {\bf 49}(3), 032112 (2008)

\bibitem{HOT}
Hiai, F., Ohya, M., Tsukada, M.:
Sufficiency and relative entropy in $\ast$-algebras with applications in quantum systems.
Pacific J. Math. {\bf 107}(1), 117--140 (1983)

\bibitem{HT1}
Hiai, F., Tsukada, M.:
Strong martingale convergence of generalized conditional expectations
on von Neumann algebras.
Trans. Amer. Math. Soc. {\bf 282}(2), 791--798 (1984)

\bibitem{Is}
Israel, R.B.:
Convexity in the Theory of Lattice Gases.
Princeton University Press, Princeton, NJ (1979)

\bibitem{JOPP}
Jaksic, V., Ogata, Y., Pautrat, Y., Pillet, C.A.:
Entropic fluctuations in quantum statistical mechanics. An Introduction.
In: Quantum Theory from Small to Large Scales, August 2010.
Lecture Notes of the Les Houches Summer School {\bf 95}, pp.~213--410,
Oxford University Press (2012)

\bibitem{JOPS}
Jaksic, V., Ogata, Y., Pillet, C.A., Seiringer, R.:
Quantum hypothesis testing and non-equilibrium statistical mechanics.
Rev. Math. Phys. {\bf 24}(6), 1230002 (2012)

\bibitem{Je1}
Jen\v cov\'a, A.:
R\'enyi relative entropies and noncommutative $L_p$-spaces.
Ann. Henri Poincar\'e {\bf 19}(8), 2513--2542 (2018)

\bibitem{Je2}
Jen\v cov\'a, A.:
R\'enyi relative entropies and noncommutative $L_p$-spaces II.
Ann. Henri Poincar\'e {\bf 22}(10), 3235--3254 (2021)

\bibitem{JL}
Junge, M., Laracuente, N.:
Multivariate trace inequalities, p-fidelity, and universal recovery beyond tracial settings.
arXiv:2009.11866v2 (2021)

\bibitem{Ko2}
Kosaki, H.:
Interpolation theory and the Wigner--Yanase--Dyson--Lieb concavity.
Comm. Math. Phys. {\bf 87}(3), 315--329 (1982)

\bibitem{Ko1}
Kosaki, H.:
Applications of the complex interpolation method to a von Neumann algebra:
non-commutative $L^p$-spaces.
J. Functional Analysis {\bf 56}(1), 29--78 (1984)

\bibitem{Ko3}
Kosaki, H.:
Relative entropy of states: a variational expression.
J. Operator Theory {\bf 16}(2), 335--348 (1986)

\bibitem{Matsumoto}
Matsumoto, K.:
A new quantum version of $f$-divergence.
In: Reality and Measurement in Algebraic Quantum Theory,
Springer Proc. Math. Stat. {\bf 261}, pp.~229--273, Springer, Singapore (2018)

\bibitem{Mo}
Mosonyi, M.,
The strong converse exponent of discriminating infinite-dimensional quantum states.
arXiv:2107.08036 (2021)

\bibitem{MH}
Mosonyi, M., Hiai, F.:
On the quantum R\'enyi relative entropies and related capacity formulas.
IEEE Trans. Inform. Theory {\bf 57}(4), 2474--2487 (2011)

\bibitem{testdiv}
Mosonyi, M., Hiai, F.:
Test-measured R\'enyi divergences. arXiv:2201.05477 (2022)

\bibitem{MO1}
Mosonyi, M., Ogawa, T.:
Quantum hypothesis testing and the operational interpretation of
the quantum R\'enyi relative entropies.
Comm. Math. Phys. {\bf 334}(3), 1617--1648 (2015)

\bibitem{MO4}
Mosonyi, M., Ogawa, T.:
Two approaches to obtain the strong converse exponent of quantum hypothesis testing for
general sequences of quantum states.
IEEE Trans. Inform. Theory {\bf 61}(12), 6975--6994 (2015)
  
\bibitem{MO2}
Mosonyi, M., Ogawa, T.:
Strong converse exponent for classical-quantum channel coding.
Comm. Math. Phys. {\bf 355}(1), 373--426 (2017)

\bibitem{MO3}
Mosonyi, M., Ogawa, T.:
Divergence radii and the strong converse exponent of classical-quantum channel coding
with constant compositions.
IEEE Trans. Inform. Theory {\bf 67}(3), 1668--1698 (2021)
    
\bibitem{MDSFT}
M\"uller-Lennert, M., Dupuis, F., Szehr, O., Fehr, S., Tomamichel, M.:
On quantum R\'enyi entropies: a new generalization and some properties.
J. Math. Phys. {\bf 54}(12), 122203 (2013)

\bibitem{Na_sc}  
Nagaoka, H.:
Strong converse theorems in quantum information theory.
In: Proceedings of ERATO Workshop on Quantum Information Science. page 33 (2001).
Also appeared in Asymptotic Theory of Quantum Statistical Inference.
ed. M. Hayashi, World Scientific (2005)

\bibitem{Nagaoka}
Nagaoka, H.:
The converse part of the theorem for quantum Hoeffding bound.
arXiv:quant-ph/0611289 (2016)

\bibitem{Ne}
Nelson, E.:
Notes on non-commutative integration.
{\it J. Funct. Anal.} {\bf 15}, 103--116 (1974)

\bibitem{NSz}
Nussbaum, M., Szko\l a, A.:
A lower bound of Chernoff type for symmetric quantum hypothesis testing.
Ann. Statist. \textbf{37}(2), 1040--1057 (2009)

\bibitem{Pe1}
Petz, D.:
Quasi-entropies for states of a von Neumann algebra.
Publ. Res. Inst. Math. Sci. {\bf 21}(4), 787--800 (1985)

\bibitem{Petz} 
Petz, D.:
Quasi-entropies for finite quantum systems.
Rep. Math. Phys. \textbf{23}(1), 57--65 (1986)

\bibitem{Pe2}
Petz, D.:
Sufficiency of channels over von Neumann algebras.
Quart. J. Math. Oxford Ser. (2) {\bf 39}(153), 97--108 (1988)

\bibitem{PetzRuskai}
Petz, D., Ruskai, M.B.:
Contraction of generalized relative entropy under stochastic mappings on matrices.
Infin. Dimens. Anal. Quantum Probab. Relat. Top. {\bf 1}(1), 83--89 (1998)

\bibitem{Pr}
Price, G.:
Extremal traces on some group-invariant $C^*$-algebras.
J. Functional Analysis {\bf 49}(2), 145--151 (1982)

\bibitem{RennerPhD}
Renner, R.:
Security of Quantum Key Distribution.
PhD dissertation, Swiss Federal Institute of Technology Zurich,
Diss.~ETH No.~16242 (2005)

\bibitem{Renyi} 
R\'enyi, A.:
On measures of entropy and information.
Proc.~4th Berkeley Sympos. Math. Statist. and Prob. \textbf{1}, pp.~547--561,
Univ. California Press, Berkeley, CA (1961)

\bibitem{SFKMNB}
Sagawa, T., Faist, P., Kato, K., Matsumoto, K., Nagaoka, H., and Brand\~ao, F.G.S.L.:
Asymptotic reversibility of thermal operations for interacting quantum spin systems via
generalized quantum Stein's lemma.
J. Phys. A: Math. Theor. {\bf 54}(49), 495303 (2021)

\bibitem{Stor}
St\o rmer, E.:
Symmetric states of infinite tensor products of $C^*$-algebras,
J. Functional Analysis {\bf 3}, 48--68 (1969)

\bibitem{SBT}
Sutter, D., Berta, M., Tomamichel, M.:
Multivariate trace inequalities.
Comm. Math. Phys. {\bf 352}(1), 37--58 (2017)
 
\bibitem{Ta0}
Takesaki, M.:
Conditional expectations in von Neumann algebras.
J. Functional Analysis {\bf 9}, 306--321 (1972)

\bibitem{Ta4}
Takesaki, M.:
Duality for crossed products and the structure of von Neumann algebras of type III.
{\it Acta Math.} {\bf 131}, 249--310 (1973)

\bibitem{Ta1}
Takesaki, M.:
Theory of Operator Algebras I. Encyclopaedia of Mathematical Sciences,
Vol. 124, Springer, Berlin (2002)

\bibitem{Ta3}
Takesaki, M.:
Theory of operator algebras. III, Encyclopaedia of Mathematical Sciences,
Vol. 127, Springer, Berlin (2003)

\bibitem{Ta-Wi}
Takesaki, M., Winnink, M.:
Local normality in quantum statistical mechanics.
Comm. Math. Phys. {\bf 30}, 129--152 (1973)

\bibitem{Te}
Terp, M.:
$L^p$ spaces associated with von Neumann algebras.
Notes, Copenhagen Univ. (1981)

\bibitem{Tom}
Tomiyama, J.:
On the projection of norm one in $W^*$-algebras.
Proc. Japan Acad. {\bf 33}, 608--612 (1957)

\bibitem{Umegaki}
Umegaki, H.:
Conditional expectation in an operator algebra, IV (entropy and information.
K\= odai Math. Sem. Rep. {\bf 14}, 59--85 (1962)

\bibitem{Wa-Wi}
Wang, X., Wilde, M.M.;
Resource theory of asymmetric distinguishability.
Phys. Rev. Research {\bf 1}, 033170 (2019)

\bibitem{WWY}
Wilde, M.M., Winter, A., Yang, D.:
Strong converse for the classical capacity of entanglement-breaking
and Hadamard channels via a sandwiched R\'enyi relative entropy.
Comm. Math. Phys. {\bf 331}(2), 593--622 (2014)

\end{thebibliography}
\end{document}